\tikzstyle{state} = [draw,fill=white,circle,thick,align=center,inner sep=0pt,minimum size=4.5mm]
\tikzstyle{lstate} = [draw,fill=white,rectangle,rounded corners,thick,align=center,inner sep=2pt]
\tikzstyle{dot} = [fill,circle,inner sep=0mm,minimum size=1.25mm,line width=0mm]
\pgfplotsset{compat=newest}
\g@addto@macro\normalsize{%
  \setlength\abovedisplayskip{3pt}%
  \setlength\belowdisplayskip{3pt}%
  \setlength\abovedisplayshortskip{-3pt}%
  \setlength\belowdisplayshortskip{3pt}%
}%
\setlist{nosep, noitemsep}
\setlist[enumerate]{nosep}
\newlist{enumthm}{enumerate}{1} % set up a dedicated enumeration env.
\setlist[enumthm]{label=\upshape(\roman*),ref={(\roman*)}}
\crefname{enumthmi}{}{} % singular and plural forms of label
\renewcommand{\paragraph}[1]{\smallskip\noindent\emph{#1}}
\newcommand{\Rationals}{\mathbb{Q}}
\newcommand{\nat}{\mathbb{N}}
\newcommand{\reals}{\mathbb{R}}
\newcommand{\rnonneg}{\reals_{\geq 0}}
\newcommand{\realsAndInf}{\overline{\reals}}
\newcommand{\rnonnegAndInf}{\realsAndInf_{\geq 0}}
\newcommand{\tuple}[1]{\ensuremath{( #1 )}}
\newcommand{\paren}[1]{\Bigl({#1} \Bigr)}%
\DeclarePairedDelimiter\abs{\lvert}{\rvert}%
\DeclarePairedDelimiter\norm{\lVert}{\rVert}%
\newcommand{\normM}[1]{{\left\vert\kern-0.25ex\left\vert\kern-0.25ex\left\vert #1 \right\vert\kern-0.25ex\right\vert\kern-0.25ex\right\vert}} % matrix norm
\DeclarePairedDelimiter\normQ{\lVert}{\rVert_{\Qbf}}% matrix Q-norm
\DeclarePairedDelimiter{\iverson}{\llbracket}{\rrbracket}
\renewcommand\max{\mathsf{max}}
\renewcommand\min{\mathsf{min}}
\newcommand{\diff}[1]{\mathsf{diff}^{#1}}%
\newcommand{\0}{\mathbf{0}}
\newcommand{\Abf}{\mathbf{A}}
\newcommand{\Pbf}{\mathbf{P}}
\newcommand{\Qbf}{\mathbf{Q}}
\newcommand{\Ibf}{\mathbf{I}}
\newcommand{\Rbf}{\mathbf{R}}
\newcommand{\Lbf}{\mathbf{L}}
\newcommand{\Ubf}{\mathbf{U}}
\newcommand{\vbf}{\mathbf{v}}
\newcommand{\xbf}{\mathbf{x}}
\newcommand{\ybf}{\mathbf{y}}
\newcommand{\lbf}{\mathbf{l}}
\newcommand{\ubf}{\mathbf{u}}
\newcommand{\bbf}{\mathbf{b}}
\newcommand{\PRstateMC}[2]{\textnormal{Pr}^{#2}_{#1}}
\newcommand{\PR}{\PRstateMC{}{}}
\newcommand{\EXPstateMC}[3]{\ensuremath{\mathbb{E}^{#3}_{#2}[#1]}}
\newcommand{\EXP}[1]{\EXPstateMC{#1}{}{}}
\newcommand{\genericRV}{\mathsf{v}}
\newcommand{\nextLTL}{\mathord{\scaleto{\bigcirc}{8pt}}}
\DeclareMathOperator{\until}{\pmb{\mathord{\scalerel*{\mathsf{U}}{X}}}}
\newcommand{\event}{\mathord{\scaleto{\Diamond}{8pt}}}
\newcommand{\eventBound}[1]{\mathord{{\scaleto{\Diamond}{8pt}}^{{\! \scaleto{=}{3pt}\scaleto{#1}{4pt}}}}}
\newcommand{\statesTr}{S_{\textnormal{tr}}}
\newcommand{\statesRe}{S_{\textnormal{re}}}
\newcommand{\scc}[1]{\mathsf{SCC}^{#1}}
\newcommand{\sccTr}[1]{\scc{#1}_{\textnormal{tr}}}
\newcommand{\paths}[1]{\mathsf{Paths}^{#1}}
\newcommand{\chains}{\mathsf{Chains}}
\newcommand{\chainsTr}{\mathsf{Chains}_{\textnormal{tr}}}
\newcommand{\cyl}{\mathsf{Cyl}}
\newcommand{\stst}[1]{\ensuremath{\theta^{#1}}}
\newcommand{\rew}{\mathsf{rew}}
\newcommand{\tr}{\mathsf{tr}}
\newcommand{\vt}{\mathsf{vt}}
\newcommand{\evtsdomain}[1]{ \reals^{\lvert #1 \rvert }}
\newcommand{\vioperator}{\Phi}
\newcommand{\maxvioperator}{{\Phi}_{max}}
\newcommand{\minvioperator}{{\Phi}_{min}}
\newcommand{\gsmaxvioperator}{{\Gamma}_{max}}
\newcommand{\gsminvioperator}{{\Gamma}_{min}}
\newcommand{\smallgsmaxvioperator}{{\gamma}^{max}}
\newcommand{\smallgsminvioperator}{{\gamma}^{min}}
\newcommand{\iotainit}{\iota_{\textnormal{init}}} 
\newcommand{\hatiotainit}{\hat{\iota}_{\textnormal{init}}} 
\newcommand{\tauinit}{\tau_{\textnormal{init}}}  
\newcommand{\dtmcTuple}[1]{\tuple{S^{{#1}}, \Pbf^{{#1}}, \iotainit^{#1}}}
\newcommand{\dtmc}{\mathcal{D}}
\newcommand{\dtmcRestrTuple}[2]{\tuple{S^{{#1}}, \Pbf^{{#1}}, \iotainit^{#1#2}}}
\newcommand{\dtmcRestr}[1]{\dtmc{\vert}_{#1}}
\newcommand{\dtmcBv}{\mathcal{B}{\Rsh}^{\hat{v}}}
\newcommand{\ctmc}{\mathcal{C}}
\newcommand{\rate}{\boldsymbol{r}}  
\newcommand{\ctmcTuple}{\tuple{S, \Pbf, \iotainit, \rate}}
\newcommand{\emb}{\mathsf{emb}}
\newcommand{\wrt}{w.r.t.\xspace}
\newcommand{\tool}[1]{\textsf{#1}\xspace}
\newcommand{\storm}{\tool{Storm}}
\newcommand{\prism}{\tool{prism}}
\newcommand{\jani}{\tool{Jani}}
\newcommand{\mcsta}{\tool{mcsta}}
\newcommand{\method}[1]{\texttt{#1}\xspace}
   \def\@citecolor{blue}%
   \def\@urlcolor{blue}%
   \def\@linkcolor{blue}%
\def\orcidID#1{\smash{\href{http://orcid.org/#1}{\protect\raisebox{-1.25pt}{\protect\includegraphics{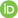}}}}}
\definecolor{myorange}{RGB}{230,159,0} 
\definecolor{myblue}{RGB}{55,126,184}
\definecolor{mygreen}{RGB}{77,175,74}
\definecolor{myred}{RGB}{228,26,28}
\definecolor{mypink}{RGB}{204,121,167}
\definecolor{mygray}{rgb}{0.66, 0.66, 0.66}
\tikzstyle{every node}=[font=\scriptsize]
\tikzset{dice/.style={draw, minimum size=0.6cm, inner sep=0pt, text width=0.6cm, align=center, fill=myorange!50}}
\tikzset{state/.style={draw, circle, minimum size=0.6cm, inner sep=0pt, text width=0.6cm, align=center}}
\newcommand{\diceFaceOne}{
    \begin{tikzpicture}[scale=0.7]
        \fill (0.5, 0.5) circle (2.2pt);
    \end{tikzpicture}
}
\newcommand{\diceFaceTwo}{
    \begin{tikzpicture}[scale=0.7]
        \fill (0.35, 0.65) circle (2.2pt);
        \fill (0.65, 0.35) circle (2.2pt);
    \end{tikzpicture}
}
\newcommand{\diceFaceThree}{
    \begin{tikzpicture}[scale=0.7]
        \fill (0.35, 0.65) circle (2.2pt);
        \fill (0.5, 0.5) circle (2.2pt);
        \fill (0.65, 0.35) circle (2.2pt);
    \end{tikzpicture}
}
\newcommand{\diceFaceFour}{
    \begin{tikzpicture}[scale=0.7]
        \fill (0.3, 0.3) circle (2.2pt);
        \fill (0.7, 0.3) circle (2.2pt);
        \fill (0.3, 0.7) circle (2.2pt);
        \fill (0.7, 0.7) circle (2.2pt);
    \end{tikzpicture}
}
\newcommand{\diceFaceFive}{
    \begin{tikzpicture}[scale=0.7]
        \fill (0.3, 0.3) circle (2.2pt);
        \fill (0.7, 0.3) circle (2.2pt);
        \fill (0.5, 0.5) circle (2.2pt);
        \fill (0.3, 0.7) circle (2.2pt);
        \fill (0.7, 0.7) circle (2.2pt);
    \end{tikzpicture}
}
\newcommand{\diceFaceSix}{
    \begin{tikzpicture}[scale=0.7]
        \fill (0.3, 0.3) circle (2.2pt);
        \fill (0.7, 0.3) circle (2.2pt);
        \fill (0.3, 0.5) circle (2.2pt);
        \fill (0.7, 0.5) circle (2.2pt);
        \fill (0.3, 0.7) circle (2.2pt);
        \fill (0.7, 0.7) circle (2.2pt);
    \end{tikzpicture}
}
\newcommand{\diceOne}{
   % \kern1pt
    \resizebox{11pt}{9pt}{%
        \begin{tikzpicture}
            \node[dice] (d) {\diceFaceOne};
        \end{tikzpicture}
    }
}
\newcommand{\diceTwo}{
     \kern1pt
     \resizebox{11pt}{9pt}{%
        \begin{tikzpicture}
            \node[dice] (d) {\diceFaceTwo};
        \end{tikzpicture}
    }
}
\newcommand{\diceThree}{
     \kern1pt
     \resizebox{11pt}{9pt}{%
        \begin{tikzpicture}
            \node[dice] (d) {\diceFaceThree};
        \end{tikzpicture}
    }
}
\newcommand{\diceFour}{
     \kern1pt
     \resizebox{11pt}{9pt}{%
        \begin{tikzpicture}
            \node[dice] (d) {\diceFaceFour};
        \end{tikzpicture}
    }
}
\newcommand{\diceSix}{
    % \kern1pt
     \resizebox{11pt}{9pt}{%
        \begin{tikzpicture}
            \node[dice] (d) {\diceFaceSix};
        \end{tikzpicture}
    }
}
\newrobustcmd*{\mytriangle}[1]{\tikz{\filldraw[draw=#1,fill=#1] (0,0) --
		(0.2cm,0) -- (0.1cm,0.2cm);}}
\newrobustcmd*{\mysquare}[1]{\tikz{\filldraw[draw=#1,fill=#1] (0,0) rectangle (0.2cm,0.2cm);}}
\newrobustcmd*{\mycircle}[1]{\tikz{\filldraw[draw=#1,fill=#1] (0,0) circle [radius=0.1cm];}}
\newrobustcmd*{\mydiamond}[1]{\tikz{\filldraw[draw=#1,fill=#1] (0,0) --
		(0.2cm,0) -- (0.1cm,0.15cm) -- (0,0) -- (0.1cm,-0.15cm) --(0.2cm,0) ;}}
\newlength{\plotwidth}
\newlength{\plotheight}
\newcommand{\legendstyle}{at={(1,1.01)},anchor=north west}
\newcommand{\legendcols}{1}
\tikzset{tool/.code={%% only marks, possible
% GMRES EVT 
\ifthenelse{\equal{#1}{storm.sparse-abs.g-gmres.0.001}}{\tikzset{myorange, line width=1pt}}{}%
\ifthenelse{\equal{#1}{storm.sparse-abs.g-gmres.1e-06}}{\tikzset{myorange, line width=1pt}}{}%
\ifthenelse{\equal{#1}{storm.sparse-rel.g-gmres.0.001}}{\tikzset{myorange, line width=1pt}}{}%
\ifthenelse{\equal{#1}{storm.sparse-rel.g-gmres.1e-06}}{\tikzset{myorange, line width=1pt}}{}%
\ifthenelse{\equal{#1}{storm.sparse-rel.topological-g-gmres.0.001}}{\tikzset{myorange, dashed, line width=1pt}}{}%
\ifthenelse{\equal{#1}{storm.sparse-rel.topological-g-gmres.1e-06}}{\tikzset{myorange, dashed, line width=1pt}}{}%
\ifthenelse{\equal{#1}{storm.sparse-abs.topological-g-gmres.0.001}}{\tikzset{myorange, dashed, line width=1pt}}{}%
\ifthenelse{\equal{#1}{storm.sparse-abs.topological-g-gmres.1e-06}}{\tikzset{myorange, dashed, line width=1pt}}{}%
% GMRES stationary (classic: normal, EQSYS: dotted, EVT: dashed)
\ifthenelse{\equal{#1}{storm.sparse-rel.topological-classic-g-gmres.1e-06}}{\tikzset{myorange, line width=1pt}}{}%
\ifthenelse{\equal{#1}{storm.sparse-rel.topological-eqsys-g-gmres.1e-06}}{\tikzset{myorange, dotted, line width=1.3pt}}{}%
\ifthenelse{\equal{#1}{storm.sparse-rel.topological-evt-g-gmres.1e-06}}{\tikzset{myorange, dashed, line width=1pt}}{}%
%
% VI EVT
\ifthenelse{\equal{#1}{storm.sparse-rel.n-power.0.001}}{\tikzset{mypink, line width=1pt}}{}%, mark=*, }}{}% mark size=1.2pt
\ifthenelse{\equal{#1}{storm.sparse-rel.n-power.1e-06}}{\tikzset{mypink, line width=1pt}}{}%, mark=*, }}{}% mark size=1.2pt
\ifthenelse{\equal{#1}{storm.sparse-abs.n-power.0.001}}{\tikzset{mypink, line width=1pt}}{}%, mark=*, }}{}% mark size=1.2pt
\ifthenelse{\equal{#1}{storm.sparse-abs.n-power.1e-06}}{\tikzset{mypink, line width=1pt}}{}%, mark=*, }}{}% mark size=1.2pt
\ifthenelse{\equal{#1}{storm.sparse-abs.topological-n-power.0.001}}{\tikzset{mypink, dashed, line width=1pt}}{}%, mark=*, }}{}% mark size=1.2pt
\ifthenelse{\equal{#1}{storm.sparse-abs.topological-n-power.1e-06}}{\tikzset{mypink, dashed, line width=1pt}}{}%, mark=*, }}{}% mark size=1.2pt
\ifthenelse{\equal{#1}{storm.sparse-rel.topological-n-power.0.001}}{\tikzset{mypink, dashed, line width=1pt}}{}%, mark=*, }}{}% mark size=1.2pt
\ifthenelse{\equal{#1}{storm.sparse-rel.topological-n-power.1e-06}}{\tikzset{mypink, dashed, line width=1pt}}{}%, mark=*, }}{}% mark size=1.2pt
%
% II EVT
\ifthenelse{\equal{#1}{storm.sound-rel.n-ii.0.001}}{\tikzset{mygreen, line width=1pt}}{}%, mark=*, mark size=1.3pt}}{}%
\ifthenelse{\equal{#1}{storm.sound-rel.n-ii.1-e06}}{\tikzset{mygreen, line width=1pt}}{}%, mark=*, mark size=1.3pt}}{}%
\ifthenelse{\equal{#1}{storm.sound-abs.n-ii.0.001}}{\tikzset{mygreen, line width=1pt}}{}%, mark=*, mark size=1.3pt}}{}%
\ifthenelse{\equal{#1}{storm.sound-abs.n-ii.1e-06}}{\tikzset{mygreen, line width=1pt}}{}%, mark=*, mark size=1.3pt}}{}%
\ifthenelse{\equal{#1}{storm.sound-rel.topological-n-ii.0.001}}{\tikzset{mygreen, dotted, line width=1pt}}{}%, mark=*, mark size=1.3pt}}{}%
\ifthenelse{\equal{#1}{storm.sound-rel.topological-n-ii.1e-06}}{\tikzset{mygreen, dotted, line width=1pt}}{}%, mark=*, mark size=1.3pt}}{}%
\ifthenelse{\equal{#1}{storm.sound-abs.topological-n-ii.0.001}}{\tikzset{mygreen, dotted, line width=1pt}}{}%, mark=*, mark size=1.3pt}}{}%
\ifthenelse{\equal{#1}{storm.sound-abs.topological-n-ii.1e-06}}{\tikzset{mygreen, dotted, line width=1pt}}{}%, mark=*, mark size=1.3pt}}{}%
% II Stationary
\ifthenelse{\equal{#1}{storm.sound-rel.topological-evt-n-ii.0.001}}{\tikzset{mygreen, line width=1pt}}{}%, mark=*, mark size=1.3pt}}{}%
\ifthenelse{\equal{#1}{storm.sound-rel.topological-evt-n-ii.0.001}}{\tikzset{mygreen, line width=1.3pt}}{}%, mark=*, mark size=1.3pt}}{}%
\ifthenelse{\equal{#1}{storm.sound-rel.topological-evt-n-ii.0.001}}{\tikzset{mygreen, line width=1pt}}{}%, mark=*, mark size=1.3pt}}{}%
%
% LU exact EVT
\ifthenelse{\equal{#1}{storm.exact.e-sparselu.ignored}}{\tikzset{myblue,line width=1pt}}{}%
\ifthenelse{\equal{#1}{storm.exact.topological-e-sparselu.ignored}}{\tikzset{myblue, dashed, line width=1pt}}{}%
% LU Stationary
\ifthenelse{\equal{#1}{storm.exact.topological-classic-sparselu.ignored}}{\tikzset{myblue, line width=1pt}}{}%
\ifthenelse{\equal{#1}{storm.exact.topological-eqsys-e-sparselu.ignored}}{\tikzset{myblue,dotted, line width=1pt}}{}%
\ifthenelse{\equal{#1}{storm.exact.topological-evt-e-sparselu.ignored}}{\tikzset{myblue, dashed, line width=1pt}}{}%
% LU sparse
\ifthenelse{\equal{#1}{storm.sparse.e-sparselu.ignored}}{\tikzset{darkgray,line width=1pt}}{}%
\ifthenelse{\equal{#1}{storm.sparse.topological-e-sparselu.ignored}}{\tikzset{darkgray, dashed, line width=1pt}}{}%
% LU Stationary
\ifthenelse{\equal{#1}{storm.sparse.topological-classic-sparselu.ignored}}{\tikzset{darkgray, line width=1pt}}{}%
\ifthenelse{\equal{#1}{storm.sparse.topological-eqsys-e-sparselu.ignored}}{\tikzset{darkgray, dotted, line width=1.3pt}}{}%
\ifthenelse{\equal{#1}{storm.sparse.topological-evt-e-sparselu.ignored}}{\tikzset{darkgray, dashed, line width=1pt}}{}%
%
% ELI EVT
\ifthenelse{\equal{#1}{storm.sparse.topological-elimination.ignored}}{\tikzset{black, dashed, line width=1pt}}{}%
\ifthenelse{\equal{#1}{storm.sparse.elimination.ignored}}{\tikzset{black, line width=1pt}}{}%
%
% stationary megg
\ifthenelse{\equal{#1}{megg.default-abs.ap-naive.0.001}}{\tikzset{myred, line width=1pt}}{}%
\ifthenelse{\equal{#1}{megg.default-abs.ap-sample.0.001}}{\tikzset{myred, dashed, line width=1pt}}{}%
%%% stationary prism
\ifthenelse{\equal{#1}{prism.explicit-rel.default.0.001}}{\tikzset{myred , dotted, line width=1pt}}{}%
\ifthenelse{\equal{#1}{prism.sparse-rel.default.0.001}}{\tikzset{myred, line width=1.3pt}}{}%
\ifthenelse{\equal{#1}{prism.hybrid-rel.default.0.001}}{\tikzset{myred,line width=1pt}}{}%
\ifthenelse{\equal{#1}{prism.explicit-rel.default.1e-06}}{\tikzset{myred, line width=1pt}}{}%
\ifthenelse{\equal{#1}{prism.sparse-rel.default.1e-06}}{\tikzset{myred, line width=1.3pt}}{}%
\ifthenelse{\equal{#1}{prism.hybrid-rel.default.1e-06}}{\tikzset{myred, line width=1pt}}{}%
\ifthenelse{\equal{#1}{prism.default.default.0.001}}{\tikzset{myred,line width=1pt}}{}%
\ifthenelse{\equal{#1}{prism.default.default.1e-06}}{\tikzset{myred, line width=1pt}}{}%
}}
\newcommand{\quantileplot}[4]{%
	\begin{tikzpicture}
	\begin{axis}[
	width=\plotwidth,
	height=\plotheight,
	xmin=10,
	xmax=#4+1,
	% not xmode=log,
	ymax=2300, 
	ytick= {1, 10, 100, 1000 },
	yticklabels={$\le$1, $10^1$, $10^2$, $10^3$},
	axis x line=bottom,
	ymin=1,
	axis y line=left,
	ymode=log,
	xlabel=\footnotesize{solved instances (out of #4)},
	ylabel=\footnotesize{time in seconds},
	yticklabel style={font=\footnotesize},
	xticklabel style={font=\footnotesize},
	legend columns = \legendcols,
	legend style={\legendstyle,font=\scriptsize,
		nodes={scale=1, transform shape},inner sep=2pt},
	%every axis plot/.append style={ultra thick}, for presentation
	legend cell align={left}
	]
	\foreach \tool in {#2}{%
		\edef\loopbody{ 
			\noexpand\addplot[tool=\tool] table [x=n,y=\tool shifted, col sep=semicolon] {#1};
		}
		\loopbody
	}
	\draw[densely dotted] (axis cs: 0,10) -- (axis cs: 100,10);
	\draw[densely dotted] (axis cs: 0,100) -- (axis cs: 100,100);
	\draw[densely dotted] (axis cs: 0,1000) -- (axis cs: 100,1000);
	%\draw[densely dotted] (axis cs: 0,1200) -- (axis cs: 100,1200);
	%\draw[densely dotted] (axis cs: 0,1800) -- (axis cs: 100,1800);
	\legend{#3}
	\end{axis}
	\end{tikzpicture}%
}
\newlength{\scatterplotsize}
\newlength{\numberruntimeplotheight}
\tikzset{scatool/.code={%% only marks, possible
% LU  EVT
\ifthenelse{\equal{#1}{storm.exact.e-sparselu.ignored}}{\tikzset{darkgray, only marks, mark=o, mark size=1.5pt, line width=1pt}}{}%
\ifthenelse{\equal{#1}{storm.sparse.e-sparselu.ignored}}{\tikzset{myred, only marks, mark=x, mark size=1.5pt,line width=1pt}}{}%
\ifthenelse{\equal{#1}{storm.exact.topological-e-sparselu.ignored}}{\tikzset{darkgray,only marks, mark=o, mark size=1.5pt, line width=1pt}}{}%
\ifthenelse{\equal{#1}{storm.sparse.topological-e-sparselu.ignored}}{\tikzset{myred, only marks, mark=x, mark size=1.5pt, line width=1pt}}{}%
% LU topo (stationary)   topo only
\ifthenelse{\equal{#1}{storm.exact.topological-classic-sparselu.ignored}}{\tikzset{myred, only marks, mark=x, mark size=1.5pt, line width=1pt}}{}%, mark=*, }}{}% mark size=1.5pt
\ifthenelse{\equal{#1}{storm.exact.topological-eqsys-e-sparselu.ignored}}{\tikzset{darkgray, only marks, mark=x, mark size=1.5pt, line width=1pt}}{}%, mark=*, }}{}% mark size=1.5pt
\ifthenelse{\equal{#1}{storm.exact.topological-evt-e-sparselu.ignored}}{\tikzset{darkgray, only marks, mark=o, mark size=1.5pt, line width=1pt}}{}%, mark=*, }}{}% mark size=1.5pt
\ifthenelse{\equal{#1}{storm.sparse.topological-classic-sparselu.ignored}}{\tikzset{pink, only marks, mark=x, mark size=1.5pt, line width=1pt}}{}%, mark=*, }}{}% mark size=1.5pt
\ifthenelse{\equal{#1}{storm.sparse.topological-eqsys-e-sparselu.ignored}}{\tikzset{pink, only marks, mark=x, mark size=1.5pt, line width=1pt}}{}%, mark=*, }}{}% mark size=1.5pt
\ifthenelse{\equal{#1}{storm.sparse.topological-evt-e-sparselu.ignored}}{\tikzset{pink, only marks, mark=x, mark size=1.5pt, line width=1pt}}{}%, mark=*, }}{}% mark size=1.5pt
%%
% VI EVT
\ifthenelse{\equal{#1}{storm.sparse-rel.n-power.0.001}}{\tikzset{myred, only marks, mark=x, mark size=1.5pt, line width=1pt}}{}%, mark=*, }}{}% mark size=1.5pt
\ifthenelse{\equal{#1}{storm.sparse-rel.n-power.1e-06}}{\tikzset{myred, only marks, mark=x, mark size=1.5pt, line width=1pt}}{}%, mark=*, }}{}% mark size=1.5pt
\ifthenelse{\equal{#1}{storm.sparse-abs.n-power.0.001}}{\tikzset{myred, only marks, mark=x, mark size=1.5pt, line width=1pt}}{}%, mark=*, }}{}% mark size=1.5pt
\ifthenelse{\equal{#1}{storm.sparse-abs.n-power.1e-06}}{\tikzset{myred, only marks, mark=x, mark size=1.5pt, line width=1pt}}{}%, mark=*, }}{}% mark size=1.5pt
\ifthenelse{\equal{#1}{storm.sparse-rel.topological-n-power.0.001}}{\tikzset{myred, only marks, mark=x, mark size=1.5pt, line width=1pt}}{}%, mark=*, }}{}% mark size=1.5pt
\ifthenelse{\equal{#1}{storm.sparse-rel.topological-n-power.1e-06}}{\tikzset{myred, only marks, mark=x, mark size=1.5pt, line width=1pt}}{}%, mark=*, }}{}% mark size=1.5pt

% II 
\ifthenelse{\equal{#1}{storm.sound-rel.n-ii.0.001}}{\tikzset{mygreen, only marks,mark=+, mark size=1.5pt,line width=1pt}}{}%, mark=*, mark size=1.3pt}}{}%
\ifthenelse{\equal{#1}{storm.sound-rel.n-ii.1e-06}}{\tikzset{mygreen, only marks,mark=+, mark size=1.5pt,line width=1pt}}{}%, mark=*, mark size=1.3pt}}{}%
\ifthenelse{\equal{#1}{storm.sound-abs.n-ii.0.001}}{\tikzset{mygreen, only marks,mark=+, mark size=1.5pt, line width=1pt}}{}%, mark=*, mark size=1.3pt}}{}%
\ifthenelse{\equal{#1}{storm.sound-abs.n-ii.1e-06}}{\tikzset{mygreen, only marks,mark=+, mark size=1.5pt, line width=1pt}}{}%, mark=*, mark size=1.3pt}}{}%
\ifthenelse{\equal{#1}{storm.sound-rel.topological-n-ii.0.001}}{\tikzset{mygreen, only marks,mark=+, mark size=1.5pt,line width=1pt}}{}%, mark=*, mark size=1.3pt}}{}%
\ifthenelse{\equal{#1}{storm.sound-rel.topological-n-ii.1e-06}}{\tikzset{mygreen, only marks,mark=+, mark size=1.5pt,line width=1pt}}{}%, mark=*, mark size=1.3pt}}{}%
\ifthenelse{\equal{#1}{storm.sound-abs.topological-n-ii.0.001}}{\tikzset{mygreen, only marks,mark=+, mark size=1.5pt, line width=1pt}}{}%, mark=*, mark size=1.3pt}}{}%
\ifthenelse{\equal{#1}{storm.sound-abs.topological-n-ii.1e-06}}{\tikzset{mygreen, only marks,mark=+, mark size=1.5pt, line width=1pt}}{}%, mark=*, mark size=1.3pt}}{}%
% II (stationary)  topo only
\ifthenelse{\equal{#1}{storm.sound-rel.topological-evt-n-ii.0.001}}{\tikzset{mygreen, only marks,mark=+, mark size=1.5pt, line width=1pt}}{}%, mark=*, mark size=1.3pt}}{}%
\ifthenelse{\equal{#1}{storm.sound-rel.topological-evt-n-ii.1e-06}}{\tikzset{mygreen, only marks,mark=+, mark size=1.5pt, line width=1pt}}{}%, mark=*, mark size=1.3pt}}{}%
%gmres
\ifthenelse{\equal{#1}{storm.sparse-abs.g-gmres.0.001}}{\tikzset{myorange, only marks, mark=*, mark size=1.5pt,line width=1pt}}{}%
\ifthenelse{\equal{#1}{storm.sparse-abs.g-gmres.1e-06}}{\tikzset{myorange, only marks, mark=*, mark size=1.5pt,line width=1pt}}{}%
\ifthenelse{\equal{#1}{storm.sparse-rel.g-gmres.0.001}}{\tikzset{myorange, only marks, mark=*, mark size=1.5pt,line width=1pt}}{}%	
\ifthenelse{\equal{#1}{storm.sparse-rel.g-gmres.1e-06}}{\tikzset{myorange, only marks, mark=*, mark size=1.5pt,line width=1pt}}{}%	
\ifthenelse{\equal{#1}{storm.sparse-abs.topological-g-gmres.0.001}}{\tikzset{myorange, only marks, mark=*, mark size=1.5pt,line width=1pt}}{}%
\ifthenelse{\equal{#1}{storm.sparse-abs.topological-g-gmres.1e-06}}{\tikzset{myorange, only marks, mark=*, mark size=1.5pt,line width=1pt}}{}%
\ifthenelse{\equal{#1}{storm.sparse-rel.topological-g-gmres.0.001}}{\tikzset{myorange, only marks, mark=*, mark size=1.5pt,line width=1pt}}{}%	
\ifthenelse{\equal{#1}{storm.sparse-rel.topological-g-gmres.1e-06}}{\tikzset{myorange, only marks, mark=*, mark size=1.5pt,line width=1pt}}{}%	
% GMRES (stationary)  topo only
\ifthenelse{\equal{#1}{storm.sparse-rel.topological-classic-g-gmres.0.001}}{\tikzset{myorange, only marks, mark=x, mark size=1.5pt, line width=1pt}}{}%, mark=*, }}{}% mark size=1.5pt
\ifthenelse{\equal{#1}{storm.sparse-rel.topological-classic-g-gmres.1e-06}}{\tikzset{myorange, only marks, mark=x, mark size=1.5pt, line width=1pt}}{}%, mark=*, }}{}% mark size=1.5pt
\ifthenelse{\equal{#1}{storm.sparse-rel.topological-eqsys-g-gmres.0.001}}{\tikzset{myred, only marks, mark=*, mark size=1.5pt,line width=1pt}}{}%	
\ifthenelse{\equal{#1}{storm.sparse-rel.topological-eqsys-g-gmres.1e-06}}{\tikzset{myred, only marks, mark=*, mark size=1.5pt,line width=1pt}}{}%
\ifthenelse{\equal{#1}{storm.sparse-rel.topological-evt-g-gmres.0.001}}{\tikzset{myblue, only marks, mark=o, mark size=1.5pt, line width=1pt}}{}%, mark=*, }}{}% mark size=1.5pt
\ifthenelse{\equal{#1}{storm.sparse-rel.topological-evt-g-gmres.1e-06}}{\tikzset{myblue, only marks, mark=o, mark size=1.5pt, line width=1pt}}{}%, mark=*, }}{}% mark size=1.5pt
%\ifthenelse{\equal{#1}{megg.solve}}{\tikzset{myorange, only marks, mark=square, mark size=1.3pt,line width=1pt}}{}%
}}
\newcommand{\precisionplot}[6]{%
	\begin{tikzpicture}
		\begin{axis}[
			width=\plotwidth,
			height=\plotheight,
			xmin=0,
			ymin=0.00000005,
			ymax=15,
			axis x line=bottom,
			axis y line=left,
			ytick= {0.000001,0.00001,0.0001, 0.001, 0.01,0.1},
			ymode=log,
			%yticklabels={0.000001., 0.001, 0.01, 0.1, $\geq$1},
			%tick = {$\leq$0.00001, } 
			extra y ticks = {0.0000001, 1, 5},
			extra y tick labels = {$\leq10^{-7}$, $\geq10^0$, OOR},
			xmode=log,
			xlabel= #4,
			ylabel= #5,
			y label style={at={(axis description cs:-0.2,0.5)},anchor=north},
			yticklabel style={font=\tiny},
			xticklabel style={font=\tiny},
			legend columns=\legendcols,
			legend style={\legendstyle,font=\scriptsize,
				nodes={scale=1, transform shape},inner sep=2pt},
			%every axis plot/.append style={ultra thick}, for presentation
			legend cell align={left}, 
			table/col sep=semicolon,
			]
			\foreach \scatool in {#2}{%
				\edef\loopbody{ 
					\noexpand\addplot[scatool=\scatool ] table [x={#6},y=\scatool, col sep=semicolon] {#1};
				}
				\loopbody
			}
			\draw[densely dotted] (axis cs: 0,0.0000001) -- (axis cs: 10000000000,0.0000001);
			\draw[densely dotted] (axis cs: 0,0.000001) -- (axis cs: 1000000000000,0.000001);
			\draw[densely dotted] (axis cs: 0,0.001) -- (axis cs: 1000000000000,0.001);
			\draw[densely dotted] (axis cs: 0,1) -- (axis cs: 1000000000000,1);
			\draw[densely dotted] (axis cs: 0,5) -- (axis cs: 1000000000000,5);
			%\draw[densely dotted] (axis cs: 0,15) -- (axis cs: 1000000000000,15);
			%\draw[densely dotted] (axis cs: 0,1800) -- (axis cs: 100,1800);
			\legend{#3}
		\end{axis}
	\end{tikzpicture}%
}
\newcommand{\numberruntimeplot}[7]{%
	\begin{tikzpicture}
		\begin{axis}[
			width=\scatterplotsize,
			height=\numberruntimeplotheight,
			xmin=0,
			xmax=#7,
			axis x line=bottom,
			ymin=1,
			ymax=30000,
			ytick={0.1,10,1000},
			axis y line=left,
			ymode=log,
			extra y ticks = {1, 6000,20000}, %20000
			extra y tick labels = {$\leq 1$, OOR, INC}, %n/a -20000
			xmode=log,
			xlabel= #4,
			ylabel= #5,
			xlabel style={font=\scriptsize},
			ylabel style={yshift=-6pt, font=\scriptsize},
			yticklabel style={font=\scriptsize},
			xticklabel style={font=\scriptsize},
			legend columns=\legendcols,
				legend style={\legendstyle, font=\scriptsize,
				nodes={scale=1, transform shape},inner sep=2pt},
			%every axis plot/.append style={ultra thick}, for presentation
			legend cell align={left}, 
			table/col sep=semicolon,
			]
			\foreach \scatool in {#2}{%
				\edef\loopbody{ 
					\noexpand\addplot[scatool=\scatool ] table [x={#6},y=\scatool, col sep=semicolon] {#1};
				}
				\loopbody
			}
			\draw[densely dotted] (axis cs: 0,10) -- (axis cs: 100000000000,10);
			\draw[densely dotted] (axis cs: 0,100) -- (axis cs: 100000000000,100);
			\draw[densely dotted] (axis cs: 0,1000) -- (axis cs: 100000000000,1000);
			\draw[densely dotted] (axis cs: 0,6000) -- (axis cs: 100000000000,6000);
			\draw[densely dotted] (axis cs: 0,20000) -- (axis cs: 100000000000,20000);
			\legend{#3}
		\end{axis}
%		\pgfresetboundingbox
%		\useasboundingbox (-1,-0.85) rectangle (4.5,4.9);
	\end{tikzpicture}%
}
\newcommand{\scatterplotstorm}[6]{%
	\begin{tikzpicture}
	\begin{axis}[
		width=\scatterplotsize,
		height=\scatterplotsize,
		axis equal image,
		xmin=1,
		ymin=1,
		ymax=30000,
		xmax=30000,
		xmode=log,
		ymode=log,
		axis x line=bottom,
		axis y line=left,
		xtick={10,100,1000},
		ytick={10,100,1000},
		extra x ticks = {1, 6000, 20000}, %20000
		extra x tick labels = {$\leq 1$, OOR, INC}, %n/a
		extra y ticks = {1, 6000, 20000}, %20000
		extra y tick labels = {$\leq 1$, OOR, INC}, %n/a
		% TO_VALUE = -4000 # timeout
		%extra x tick style = {grid = major},
		%ytick={0.1,1,10,100,1000},
		%yticklabels={0.1,1,10,100,1000},
		%extra y ticks = {-4000,-7000,-12000},
		%extra y tick labels = {\raisebox{4000ex}{OOR/TO},NA/NS,INC/UN},
		%extra y tick style = {grid = major},
		xlabel={#3},
		xlabel style={yshift=8pt, font=\scriptsize},
		ylabel={#5},
		ylabel style={yshift=-7pt, font=\scriptsize},
		yticklabel style={font=\scriptsize},
		xticklabel style={rotate=290,anchor=west,font=\scriptsize},
		%legend pos=outer north east,
		legend columns=\legendcols,
		legend style={font=\scriptsize,\legendstyle, 
				nodes={scale=1, transform shape},
				inner sep=1.5pt},
			legend cell align={left}
	]
	\addplot[
	scatter,
	only marks,
	scatter/classes={
		dtmc={mark=square*,myblue,mark size=1},
		%ctmc={mark=square*,myblue,mark size=1}
		ctmc={mark=diamond*,mypink,mark size=1.25}
	},
	scatter src=explicit symbolic
	]%
	table [col sep=semicolon,x=#2,y=#4,meta=Type] {#1};
	\ifthenelse{\NOT\equal{#6}{false}}{\legend{DTMC, CTMC}}{}
	\addplot[no marks] coordinates {(0.01,0.01) (6000,6000) };
	\addplot[no marks, densely dotted] coordinates {(0.1,1) (600,6000)};
	\addplot[no marks, densely dotted] coordinates {(1,0.1) (6000,600)};
	\draw[densely dotted] (axis cs: 0.1,6000) -- (axis cs: 20000,6000);
	\draw[densely dotted] (axis cs: 0.1,20000) -- (axis cs: 20000,20000);
	\draw[densely dotted] (axis cs: 6000,0.1) -- (axis cs: 6000, 20000);
	\draw[densely dotted] (axis cs: 20000, 0.1) -- (axis cs: 20000,20000);
	\end{axis}
%	\pgfresetboundingbox
%		\useasboundingbox (-1,-1.0) rectangle (2.9,3.5);
	\end{tikzpicture}
}
\newcommand{\scatterplotstormcyclic}[6]{%
	\begin{tikzpicture}
		% \path[use as bounding box,draw=red] (-1,-1.1) rectangle (2.9,3.4);
		
		\begin{axis}[
			width=\scatterplotsize,
			height=\scatterplotsize,
			axis equal image,
			xmin=1,
			ymin=1,
			ymax=30000,
			xmax=30000,
			xmode=log,
			ymode=log,
			axis x line=bottom,
			axis y line=left,
			xtick={10,100,1000},
			ytick={10,100,1000},
			extra x ticks = {1, 6000, 20000}, %20000
			extra x tick labels = {$\leq 1$, OOR, INC}, %n/a
			extra y ticks = {1, 6000, 20000}, %20000
			extra y tick labels = {$\leq 1$, OOR, INC}, %n/a
			% TO_VALUE = -4000 # timeout
			%extra x tick style = {grid = major},
			%ytick={0.1,1,10,100,1000},
			%yticklabels={0.1,1,10,100,1000},
			%extra y ticks = {-4000,-7000,-12000},
			%extra y tick labels = {\raisebox{4000ex}{OOR/TO},NA/NS,INC/UN},
			%extra y tick style = {grid = major},
			xlabel={#3},
			xlabel style={yshift=12pt,font=\scriptsize},
			xticklabel style={font=\scriptsize,rotate=290}, 
			ylabel={#5},
			ylabel style={yshift=-8pt, font=\scriptsize},
			yticklabel style={font=\scriptsize},
			%legend pos=outer north east,
			legend columns=\legendcols,
				legend style={\legendstyle, font=\scriptsize,
				nodes={scale=1, transform shape},inner sep=2pt},
			]
			\addplot[
			scatter,
			only marks,
			scatter/classes={
				acyclic={mark=triangle*,myred,mark size=1.5},
				cyclic={mark=*,mygreen,mark size=1.5}
			},
			scatter src=explicit symbolic, 
			% not working: 
			% visualization depends on={transient-states}\as\tsm, 
			% scatter/@pre marker code/.append style={/tikz/mark size=\tsm
			]%
			table [col sep=semicolon,x=#2,y=#4,meta=topology] {#1};
			\ifthenelse{\NOT\equal{#6}{false}}{\legend{acyclic, cyclic}}{}
			\addplot[no marks] coordinates {(0.01,0.01) (6000,6000) };
			\addplot[no marks, densely dotted] coordinates {(0.1,1) (600,6000)};
			\addplot[no marks, densely dotted] coordinates {(1,0.1) (6000,600)};
			\draw[densely dotted] (axis cs: 0.1,6000) -- (axis cs: 20000,6000);
			\draw[densely dotted] (axis cs: 0.1,20000) -- (axis cs: 20000,20000);
			\draw[densely dotted] (axis cs: 6000,0.1) -- (axis cs: 6000, 20000);
			\draw[densely dotted] (axis cs: 20000, 0.1) -- (axis cs: 20000,20000);
	\end{axis}
	\pgfresetboundingbox
		\useasboundingbox (-1,-1.0) rectangle (4.4,3.5);
	\end{tikzpicture}
}
\begin{document}

\title{Accurately Computing Expected Visiting Times and Stationary Distributions in Markov Chains}
\titlerunning{Computing EVTs and Stationary Distributions in Markov Chains}
% If the paper title is too long for the running head, you can set
% an abbreviated paper title here
%
%
%\titlerunning{Abbreviated paper title}
% If the paper title is too long for the running head, you can set
% an abbreviated paper title here
%
\author{Hannah Mertens$^{(\text{\Letter})}$\orcidID{0009-0009-6815-3285} \and
Joost-Pieter Katoen\orcidID{0000-0002-6143-1926} \and\\
Tim Quatmann\orcidID{0000-0002-2843-5511} \and
Tobias Winkler\orcidID{0000-0003-1084-6408}}
\authorrunning{H. Mertens et al.}
% First names are abbreviated in the running head.
% If there are more than two authors, 'et al.' is used.
%
\institute{RWTH Aachen University, Aachen, Germany\\
\email{\{hannah.mertens,katoen,tim.quatmann,tobias.winkler\}@cs.rwth-aachen.de}}

\maketitle    

\begin{abstract}
We study the accurate and efficient computation of the expected number of times each state is visited in discrete- and continuous-time Markov chains.
To obtain sound accuracy guarantees efficiently, we lift interval iteration and topological approaches known from the computation of reachability probabilities and expected rewards.
We further study applications of expected visiting times, including the sound computation of the stationary distribution and expected rewards conditioned on reaching multiple goal states.
The implementation of our methods in the probabilistic model checker \storm{} scales to large systems with millions of states.
Our experiments on the quantitative verification benchmark set show that the computation of stationary distributions via expected visiting times consistently outperforms existing approaches --- sometimes by several orders of magnitude.
\end{abstract}

\section{Introduction}\label{sec:intro}

\paragraph{Expected visiting times.}
Common questions for the quantitative analysis of Markov chains include reachability probabilities, stationary distributions, and expected rewards~\cite{Kat16}. 
Many authors~\cite{KS76,GT02,Sha+06,Gok+04,Mee+11,Pie+10,Fio+13} have recognized the importance of another quantity called \emph{expected visiting times (EVTs)}, which describe the expected time a system spends in each state. 
EVTs are characterized as the unique solution of a linear equation system~\cite{KS76}.
They are not only relevant in their own right, but also useful to obtain various other quantities, including the ones mentioned above.
This applies particularly to \emph{forward analyses} which aim at computing, e.g., the distribution over terminal states given an initial distribution.

\paragraph{Sound approximation of EVTs.}
In the context of (probabilistic) model checking, the two main requirements for any numeric procedure are \emph{scalability} and \emph{soundness}, i.e., the error in the reported result has to be bounded by a predefined threshold.
Scalability is typically achieved via numerically robust iterative methods~\cite{Saa03,Var99,Wim+08} such as the \emph{Jacobi} or \emph{Gauss-Seidel method}~\cite{Var99}. 
In general, these methods do not converge to the exact solution after a finite number of iterations. 
Thus, the procedure is usually stopped as soon as a termination criterion is satisfied~\cite{Saa03}. 
However, standard stopping criteria such as small difference of consecutive iterations are not sound in the above sense:
They do not actually indicate how close the approximation is to the true solution.
Since the correctness of results in model checking, especially for safety-critical systems, is crucial, 
several authors have proposed sound iterative algorithms~\cite{HM18,Bai+17,QK18,HK20}. 
While these works focus on computing quantities such as reachability probabilities and expected rewards, the sound computation of EVTs has not yet been studied. 

\paragraph{Motivating example: Verifying sampling algorithms.}
To illustrate the use of EVTs in probabilistic verification tasks, consider the Markov chain in \Cref{fig:lumbroso}.
It is a finite-state model of a program --- the \emph{Fast Dice Roller}~\cite{Lum13} --- which takes as input an integer $N \geq 1$ and produces a uniformly distributed output in $\{1,\ldots,N\}$ using unbiased coin flips only.
The Fast Dice Roller thus solves a generalized \emph{Bernoulli Factory} problem~\cite{DBLP:journals/tomacs/KeaneO94}.
Our model in \Cref{fig:lumbroso} is for the case where $N=6$ is fixed.
How can we establish that each of the terminal states $\diceOne, \dots, \diceSix$ is indeed reached with probability sufficiently close to or exactly $\tfrac 1 6$? 
\begin{wrapfigure}[14]{r}{0.40\linewidth} %The [number] represents the number of lines you want to wrap around.
    \centering
    \vspace{-0.7cm}
%    \begin{adjustbox}{max width=0.30\textwidth}
        \begin{tikzpicture}
% Define the state nodes
    \node[] (helper) {};
    \node[state] (s0) [right=0.5 of helper] {$s_0$};
    
    \node[state] (s1) [above right=0.95 and 0.01 of s0] {$s_1$};
    \node[state] (s2) [below right=0.95 and 0.01 of s0] {$s_2$};
    
    \node[state] (s3) [above right=0.15 and 0.7 of s1] {$s_3$};
    \node[state] (s4) [below right=0.55 and 0.7 of s1] {$s_4$};
    \node[state] (s5) [above right=0.35 and 0.7 of s2] {$s_5$};
    \node[state] (s6) [below right=0.15 and 0.7 of s2] {$s_6$};

% Draw the dice faces 
    \node[dice, right=0.6 of s3] (d1) {\diceFaceOne};
    \node[dice, below right=0.25 and 0.7 of s3] (d2) {\diceFaceTwo};

    \node[dice, right=0.6 of s4] (d3) {\diceFaceThree};
    \node[dice, below right=0.25 and 0.7 of s4] (d4) {\diceFaceFour};

    \node[dice, above right=0.25 and 0.7 of s6] (d5) {\diceFaceFive};
    \node[dice, right=0.6 of s6] (d6) {\diceFaceSix};

% Draw transitions (p=0.5)
\path[->]
      (helper) edge[] node[above=0.01] {$1$} (s0)

      (s0) edge[] node[left=0.01] {$0.5$} (s1)
      (s0) edge[] node[left=0.01] {$0.5$} (s2)

      (s1) edge[] node[above=0.08,pos=0.6] {$0.5$} (s4)
      (s1) edge[] node[above=0.01,pos=0.3] {$0.5$} (s3)   

      (s2) edge[bend right=30] node[above=0.01,pos=0.3] {$0.5$} (s5)
      (s2) edge[] node[below=0.01,pos=0.3]  {$0.5$} (s6)
   
      (s3) edge[] node[above=0.01,pos=0.3] {$0.5$} (d1)
      (s3) edge[] node[below=0.01,pos=0.3] {$0.5$} (d2)
      
      (s4) edge[] node[above=0.01,pos=0.3] {$0.5$} (d3)
      (s4) edge[] node[below=0.01,pos=0.3] {$0.5$} (d4)

      (s5) edge[] node[below=0.3,pos=0.6] {$0.5$} (s1)
      (s5) edge[bend right=30] node[above=0.03,pos=0.5] {$0.5$} (s2)

      (s6) edge[] node[above=0.01,pos=0.3] {$0.5$} (d5)
      (s6) edge[] node[below=0.01,pos=0.3] {$0.5$} (d6)
      ;

% Draw dice selfloops (p=1)
\path[->]
    (d1) edge [out=25, in=-25,loop,looseness=5] node[right]{1} (d1)
    (d2) edge [out=25, in=-25,loop,looseness=5] node[right]{1} (d2)
    (d3) edge [out=25, in=-25,loop,looseness=5] node[right]{1} (d3)
    (d4) edge [out=25, in=-25,loop,looseness=5] node[right]{1} (d4)
    (d5) edge [out=25, in=-25,loop,looseness=5] node[right]{1} (d5)
    (d6) edge [out=25, in=-25,loop,looseness=5] node[right]{1} (d6)
    ;

\end{tikzpicture}%

% // Fast Dice Roller Algorithm for uniform random sampling using coin flips [J. Lumbroso, arXiv:1304.1916, 2013]

% dtmc

% const int N; // N > 0; the algorithm samples uniformly from {0, ..., N-1}
% const int R; // is only used in prop

% module main
%     v : [0..2*N] init 1;
%     c : [0..2*N] init 0;
%     l : [4..11] init 4; // helper variable for current line in algorithm from paper
%     term : bool init false; // helper variable to indicate termination
 
%     [] (l = 4 & !term) -> (v' = 2*v) & (l' = 5);
%     [] (l = 5 & !term) -> 0.5 : (c' = 2*c) & (l' = 6)  +  0.5 : (c' = 2*c + 1) & (l' = 6);
%     [] (l = 6 & v >= N & !term) -> (l' = 7);
%     [] (l = 6 & v < N & !term) -> (l' = 4);
%     [] (l = 7 & c < N & !term) -> (term' = true);
%     [] (l = 7 & c >= N & !term) -> (l' = 10);
%     [] (l = 10 & !term) -> (v' = v - N) & (l' = 11);
%     [] (l = 11 & !term) -> (c' = c - N) & (l' = 4);

% endmodule

% label "result_is_R" = (term & c = R);%
%    \end{adjustbox}
    \caption{Fast Dice Roller}\label{fig:lumbroso}
\end{wrapfigure}
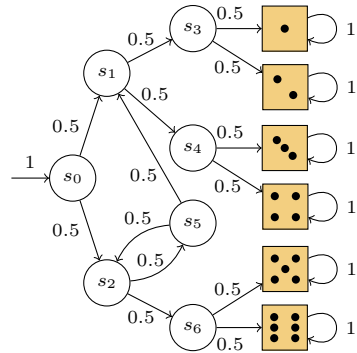
The standard approach for answering this question is to solve $N$ linear equation systems, one for each terminal state~\cite[Ch.~10]{BK08}.
An alternative (and seemingly less well-known) method is to compute the EVTs of each state in the Markov chain, which requires solving just a single linear equation system.
All $N$ desired probabilities can then easily be derived from the EVTs~\cite{KS76}:
For instance, the states $s_3,s_4,s_6$ can all be shown\footnote{See \Cref{sec:computingEVTs} for how to compute these numbers. An EVT of $\tfrac 1 3$ means that \emph{on average}, the state is visited $\tfrac 1 3$ times. In general, EVTs of reachable recurrent states are $\infty$, and the EVT of reachable transient states can take any value in $(0,\infty)$.} to have EVT $\tfrac 1 3$, and thus the reachability probabilities of the terminal states are all $\tfrac 1 6 = \tfrac 1 2 \cdot \tfrac 1 3$.
Similarly, EVTs are useful for computing conditional expected rewards. 
For Bernoulli Factories, this allows us to examine if some outcomes take longer to compute on expectation than others, which is important to analyze possible side channel attacks in a security context.
Furthermore, we show in this paper how computing stationary distributions reduces to EVTs.
Such distributions provide insight into a system's long-term behaviour; applications include the \emph{mean-payoff} of a given policy in a Markov decision process (MDP)~\cite[Pr.~8.1.1]{Put94}, the distribution computed by a Chemical Reaction Network~\cite{DBLP:journals/nc/CardelliKL18}, and the semantics of a probabilistic NetKAT network~\cite{DBLP:conf/pldi/SmolkaKKFHK019}.

\paragraph{Contributions.}
In summary, the contributions of this paper are as follows:
\begin{itemize}
    \item We describe, analyze, and implement the \emph{first sound numerical approximation algorithm for EVTs} in finite discrete- and continuous-time Markov chains.
    Our algorithm is an adaption of the known \emph{Interval Iteration (II)}~\cite{McM+05,HM14,Bai+17}.
    \item We show that computing (sound bounds on) a Markov chain's \emph{stationary distribution} reduces to EVT computations.
    The resulting algorithm significantly outperforms preexisting techniques~\cite{PRISM,Meg23} for stationary distributions.
    \item Similarly, we show how the \emph{conditional expected rewards} until reaching each of the, say, $M$ absorbing states of a Markov chain can be obtained by computing the EVTs and solving a second linear equation system --- this is in contrast to the standard approach which requires solving $M$ equation systems~\cite[Ch.~10]{BK08}.
    \item We implement our algorithm in the probabilistic model checker \storm{}~\cite{STORM} and demonstrate its scalability on various benchmarks.
\end{itemize}

\paragraph{Outline.} % Background, EVTs, Computation, Application: Stationary, Experiments. Conclusion. 
We define general notation and EVTs in \Cref{sec:prelim,sec:EVTs}, respectively.
In \Cref{sec:computingEVTs}, we present our sound iterative algorithms for computing EVTs approximately.
\Cref{sec:stationary,sec:conditional_rew} present the reductions of stationary distributions and conditional expected rewards to EVTs.
We report on the experimental evaluation of our algorithms in \Cref{sec:experiments}
and summarize related work in \Cref{sec:related}.

\section{Background}\label{sec:prelim}

Let $\mathbb N$ denote the set of non-negative integers and $\realsAndInf = \reals \cup \{\infty, -\infty\}$ the set of extended real numbers.
We equip finite sets $S \neq \emptyset$ with an arbitrary indexing $S = \{s_1,\ldots,s_n\}$ and identify functions of type $\vbf \colon S \to \realsAndInf$ and $\Abf \colon {S} \times S' \to \realsAndInf$ with (column) vectors $\vbf \in \realsAndInf^{\abs{S}}$ and matrices $\Abf \in \realsAndInf^{\abs{S} \times \abs{S'}}$, respectively.
$\Ibf$ denotes the identity matrix.
Vectors are compared component-wise, i.e., $\vbf \leq \vbf'$ iff for all $s \in S$, $\vbf(s) \leq \vbf'(s)$.
\emph{Iverson brackets} $\iverson{B}$ cast the truth value of a Boolean expression $B$ to a numerical value 1 or 0, such that $\iverson{B} = 1$ iff $B$ is true.
\begin{definition}\label{def:dtmc}
A \emph{discrete-time Markov chain (DTMC)} is defined as a triple $\dtmc = \dtmcTuple{\dtmc}$, where $S^{\dtmc}$ is a finite set of states, $\Pbf^{\dtmc}\colon S^{\dtmc} \times S^{\dtmc} \to [0,1]$ is the transition probability function satisfying $\sum_{t \in S} \Pbf^{\dtmc}(s, t)=1$ for all $s \in S$, and $\iotainit^{\dtmc} \colon S \to [0,1]$ is the initial distribution with $\sum_{s \in S} \iotainit^{\dtmc}(s)=1$. 
\end{definition}
\noindent We often omit the superscript from objects associated with a DTMC $\dtmc$ whenever this is clear from context, e.g., we write $\Pbf$ rather than $\Pbf^{\dtmc}$.
An \emph{infinite path} $\pi = s_{0}  s_{1} \cdots \in S^\omega$ in a DTMC $\dtmc = \dtmcTuple{}$ is a sequence of states  such that $\Pbf\tuple{s_{i}, s_{i+1}} >0 $ for all $i \in \nat$. 
We use $\pi[i]=s_i$ to refer to the $i$-th state. $\paths{\dtmc}$ denotes the set of all infinite paths in $\dtmc$. 
The probability measure $\PRstateMC{}{\dtmc}$ over measurable subsets of $\paths{\dtmc}$ is obtained by a standard construction: 
For finite path $\widehat{\pi}$ we set 
$\PRstateMC{}{\dtmc}(\cyl(\widehat{\pi}))= \iotainit(\widehat{\pi}[0])\cdot \prod_{ k= 0}^{\abs{\widehat{\pi}}-1} \Pbf(\widehat{\pi}[k], \widehat{\pi}[k+1])$, 
where the cylinder set $\cyl(\widehat{\pi}) = \{\pi \in \paths{\dtmc} \mid \forall i \in \{0, \dots ,\abs{\widehat{\pi}}\} \colon \pi[i] = \widehat{\pi}[i] \}$ contains all possible infinite continuations of $\widehat{\pi}$.
We write $\PRstateMC{s}{\dtmc}$ for the probability measure induced by $\dtmc$ with the initial distribution assigning probability $1$ to $s \in S$.
We use  LTL-style notation for measurable sets of infinite paths.
For $R, T \subseteq S$ and $k \in \nat$, let $R \until^{=k} T= \{ \pi \in \paths{\dtmc} \mid \pi[k] \in T \wedge \forall i < k \colon  \pi[i]\in R \}$ be the set of infinite paths that visit a state $s \in T$ in the $k$-th step while only visiting states in $R$ before.
We also define $R \until T = \bigcup_{k\geq 0} R \until^{=k} T$,  $\event T = S \until T$ and $\eventBound{k} T = S \until^{=k} T$. 

\paragraph{Expected Rewards.}
A (non-negative) \emph{random variable} over the probability space induced by $\dtmc$ is a measurable function $\genericRV \colon \paths{\dtmc} \to \rnonnegAndInf$.
Its \emph{expected value} is given by the Lebesgue integral $\EXPstateMC{\genericRV}{}{\dtmc} = \int_{\paths{\dtmc}} \genericRV \,d \PRstateMC{}{\dtmc}$.
We write $\mathbb{E}^{\dtmc}_s$ for the expectation obtained under $\PRstateMC{s}{\dtmc}$.
The \emph{total reward} \wrt{} a reward structure $\rew  \colon S \to  \rnonneg$ is defined by the random variable $\tr_{\rew} \colon \paths{\dtmc} \to \rnonnegAndInf$ with $\tr_{\rew}(\pi) = \sum_{k=0}^{\infty} \rew(\pi[k])$;
the \emph{expected total reward} is $\EXPstateMC{\tr_{\rew}}{}{\dtmc}$.

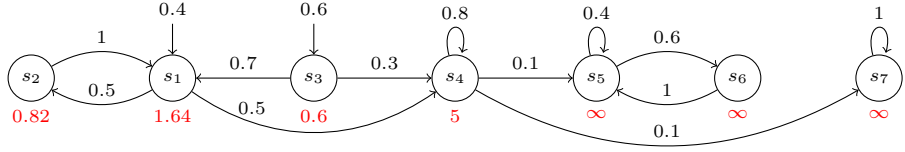
\begin{figure}[t]
    \centering
    \begin{tikzpicture}[node distance=10mm and 12.5mm]
    \node[state,label={below,red:$0.82$}] (s2) {$s_2$};
    \node[state,label={below,red:$1.64$},right=of s2,initial,initial text=$0.4$, initial where=above] (s1) {$s_1$};
    \node[state,label={below,red:$0.6$},right=of s1,initial,initial text=$0.6$, initial where=above] (s3) {$s_3$};
    \node[state,label={below,red:$5$},right=of s3] (s4) {$s_4$};
    \node[state,label={below,red:$\infty$},right=of s4] (s5) {$s_5$};
    \node[state,label={below,red:$\infty$},right=of s5] (s6) {$s_6$};
    \node[state,label={below,red:$\infty$},right=of s6] (s7) {$s_7$};
    
    \path[->] (s1) edge[bend left] node[above] {$0.5$} (s2);
    \path[->] (s1) edge[bend right=35] node[near start,above] {$0.5$} (s4);
    \path[->] (s2) edge[bend left] node[auto] {$1$} (s1);
    \path[->] (s3) edge node[auto] {$0.3$} (s4);
    \path[->] (s3) edge node[above] {$0.7$} (s1);
    \path[->] (s4) edge node[auto] {$0.1$} (s5);
    \path[->] (s4) edge[bend right] node[auto] {$0.1$} (s7);
    \path[->] (s4) edge[loop above] node[auto] {$0.8$} (s4);
    \path[->] (s5) edge[loop above] node[auto] {$0.4$} (s5);
    \path[->] (s5) edge[bend left] node[auto] {$0.6$} (s6);
    \path[->] (s6) edge[bend left] node[above] {$1$} (s5);
    \path[->] (s7) edge[loop above] node[above] {$1$} (s7);
    \end{tikzpicture}
    \caption{
        Running example DTMC. The individual EVTs are below the states.
    }
    \label{fig:runningExample}
\end{figure}

\paragraph{Connectivity in DTMCs.}
    A \emph{strongly connected component (SCC)} of a DTMC $\dtmc$ is a set of states $C \subseteq S$ 
    such that any $s, t \in C$ are mutually reachable, i.e., $\PRstateMC{s}{}(\event \{t\}) > 0$ and $\PRstateMC{t}{}(\event \{s\}) > 0$, and there is no proper subset of $C$ that satisfies this property. 
    An SCC $C$ is called \emph{bottom SCC (BSCC)} if no state outside $C$ is reachable from $C$.
In the following, $\scc{\dtmc}$ denotes the set of SCCs of $\dtmc$. % and $\sccTr{\dtmc}$ stands for the set of non-bottom SCCs, i.e., the SCCs consisting of transient states. 
We call a DTMC \emph{absorbing} if all its BSCCs are singleton sets, and \emph{irreducible} if $\scc{\dtmc} = \{S\}$.
The SCCs are ordered by a strict partial order $\hookrightarrow$ based on the topology of the DTMC, where $C' \hookrightarrow C$ if and only if $\PRstateMC{s'}{}(\event C) >0$ for some $s' \in C'$ and $C \not = C'$. 
An \emph{SCC chain} of $\dtmc$ is a sequence $\kappa = C_0 \hookrightarrow C_1  \hookrightarrow \dots \hookrightarrow C_k$ of SCCs $C_0, C_1, \ldots, C_k \in \scc{\dtmc}$, where $k \geq 0$. 
The set of all SCC chains in $\dtmc$ is denoted by $\chains^\dtmc$, and $\chainsTr^\dtmc$ denotes the set of SCC chains that do not contain a BSCC. 
The \emph{length} of SCC chain $\kappa = C_0 \hookrightarrow C_1  \hookrightarrow \dots \hookrightarrow C_k$ is $|\kappa| = k$. 

A state $s$ is called \emph{transient} if the probability that the DTMC, starting from $s$, will ever return to $s$ is strictly less than one, otherwise, $s$ is a \emph{recurrent} state.  
Thus, in a finite MC, recurrent states are precisely the states contained in the BSCCs whereas the transient states coincide with non-BSCC states.
The sets of recurrent and transient states in a DTMC are denoted by $\statesRe$ and $\statesTr$, respectively.

\begin{example}
    In the DTMC $\dtmc$ depicted in \Cref{fig:runningExample}, states $s_1,s_2,s_3,s_4$ are transient, and $s_5, s_6, s_7$ are recurrent.
    Also, $\scc{\dtmc} = \{\{s_1,s_2\}, \{s_3\}, \{s_4\}, \{s_5, s_6\}, \{s_7\}\}$, where only $\{s_5, s_6\}$ and $\{s_7\}$ are BSCCs.
    An example SCC chain is $\{s_3\} \hookrightarrow \{s_1,s_2\} \hookrightarrow \{s_4\} \hookrightarrow \{s_7\}$; its length is $3$.
\end{example}

\paragraph{Stationary distributions.}
The stationary distribution (also referred to as steady-state or long-run distribution) is a probability distribution that  specifies the fraction of time spent in each state in the long run (see, e.g.,~\cite[Def.~10.79]{BK08}).

\begin{definition}\label{definition:steadystate}
    The \emph{stationary} distribution of DTMC $\dtmc$ is given by $\stst{\dtmc} \in [0,1]^{\vert S \vert}$ with $\stst{\dtmc}(s) = \lim_{n \to \infty} \frac{1}{n} \sum_{k=1}^{n} \PRstateMC{}{}(\eventBound{k} \{s\})$. 
\end{definition}

\noindent If $\dtmc$ is irreducible, the stationary distribution is given by the unique eigenvector $\stst{\dtmc}$ satisfying $\stst{\dtmc} = \stst{\dtmc} \cdot \Pbf$ and $\sum_{s \in S} \stst{\dtmc}(s) = 1$, see, e.g.,~\cite[Thm.~4.18]{Kul20}.
If $\dtmc$ is reducible, $\stst{\dtmc}$ can be obtained by combining a reachability analysis and the eigenvector computation for each BSCC individually, see \Cref{sec:stationary}.
\begin{definition}
 A \emph{continuous-time Markov chain (CTMC)} is a quadruple $\ctmc = \ctmcTuple$, where $\dtmcTuple{}$ is a DTMC and $\rate \colon S \to \reals_{>0}$ defines exit rates. 
\end{definition} 

\noindent CTMCs extend DTMCs by assigning the rate of an exponentially distributed residence time to each state $s \in S$.
We denote the \emph{embedded DTMC} of a CTMC $\ctmc$ by $\emb({\ctmc}) = \dtmcTuple{}$. 
The semantics are defined in the usual way (see, e.g.,~\cite{Bai+003,Kwi+07}).  
An \emph{infinite timed path} in a CTMC $\ctmc = \ctmcTuple$ is a sequence $\pi = s_0 \overset{\tau_0}{\longrightarrow} s_{1} \overset{\tau_1}{\longrightarrow}\cdots$ consisting of states $s_{0}, s_{1}, \ldots \in S$ and time instances $\tau_0, \tau_1 \ldots  \in \reals_{\geq 0}$, such that $\Pbf\bigl(s_{i}, s_{i+1}\bigr) >0 $ for all $i \in \nat$. 
We denote the time $\tau_i$ spent in state $s_{i}$ by $\mathit{time}_i(\pi)$. 
Notations $\pi[i]$ and $\paths{\ctmc}$ are as for DTMCs.
The probability measure of $\ctmc$ over infinite timed paths~\cite{Bai+003} is denoted by $\PRstateMC{}{\ctmc}$. 
In a CTMC, the \emph{total reward} \wrt{} a reward structure $\rew  \colon S \to  \reals_{\geq 0}$ is the random variable $\tr_{\rew} \colon \paths{\ctmc} \to \rnonnegAndInf$ with $\tr_{\rew}(\pi) = \sum_{i=0}^{\infty} \rew(\pi[i]) \cdot \mathit{time}_i(\pi)$.

\section{Expected Visiting Times}\label{sec:EVTs}

\label{ch:EVTs_dtmc}

We provide characterizations of expected visiting times for a fixed DTMC $\dtmc = \dtmcTuple{}$.
Omitted proofs are in \Cref{ap:EVTs}. %the extended version of this paper~\cite{extendedArxiv}.  

\begin{definition}\label{definition:EVTs}
    The \emph{expected visiting time (EVT)} of a state $s \in S$ is the expected value $\EXPstateMC{\vt_{s}}{}{\dtmc}$ of the random variable $\vt_s$ with $\vt_s(\pi) = \sum_{k=0}^{\infty} \iverson{\pi[k] = s}$.
\end{definition}

\begin{example}
    The EVTs of the DTMC from \Cref{fig:runningExample} are depicted below its states.
\end{example}
\noindent
Intuitively, random variable $\vt_s$ \emph{counts} the number of times state $s$ occurs on an infinite path.
Consequently, the EVTs of unreachable states and reachable recurrent states in a DTMC are always $0$ and $\infty$, respectively.
For this reason we focus on the EVTs of the transient states $\statesTr$. The following lemma provides an alternative characterization of EVTs in terms of expected total rewards.

\begin{restatable}{lemma}{ETRsToEVTs}\label{theorem:ETRs_to_EVTs}
    For a fixed $s \in S$ and $x \in \reals_{>0}$ the reward structure $\rew \colon S \to \rnonneg$ given by $\rew(t) = x \cdot \iverson{t = s}$ satisfies $\EXP{\vt_s} = \frac{1}{x} \cdot \EXP{\tr_{\rew}}$. 
\end{restatable}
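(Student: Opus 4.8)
The plan is to reduce the claimed identity between expectations to a pointwise identity between the underlying random variables, and then to invoke the scaling property of the Lebesgue integral. The key observation is that the chosen reward structure $\rew(t) = x \cdot \iverson{t = s}$ assigns reward $x$ precisely at $s$ and $0$ everywhere else, so the total reward collected along a path is just $x$ times the number of visits to $s$.

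First I would fix an arbitrary path $\pi \in \paths{\dtmc}$ and unfold the definition of $\tr_{\rew}$ for this specific reward structure, obtaining
\[
\tr_{\rew}(\pi) \;=\; \sum_{k=0}^{\infty} \rew(\pi[k]) \;=\; \sum_{k=0}^{\infty} x \cdot \iverson{\pi[k] = s} \;=\; x \cdot \sum_{k=0}^{\infty} \iverson{\pi[k] = s} \;=\; x \cdot \vt_s(\pi),
\]
where the last sum is exactly $\vt_s(\pi)$ by \Cref{definition:EVTs}. Pulling the nonnegative constant $x$ out of the (possibly infinite) series is justified since all summands are nonnegative, so the partial sums scale termwise and the identity persists in the limit, with the convention $x \cdot \infty = \infty$ for the divergent case. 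This establishes $\tr_{\rew} = x \cdot \vt_s$ as functions on $\paths{\dtmc}$.

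Second, I would integrate this pointwise identity with respect to $\PRstateMC{}{\dtmc}$. Both $\vt_s$ and $\tr_{\rew}$ are nonnegative measurable functions into $\rnonnegAndInf$, so the scaling property of the Lebesgue integral yields $\EXP{\tr_{\rew}} = \int_{\paths{\dtmc}} x \cdot \vt_s \, d\PRstateMC{}{\dtmc} = x \cdot \int_{\paths{\dtmc}} \vt_s \, d\PRstateMC{}{\dtmc} = x \cdot \EXP{\vt_s}$. Dividing by $x \in \reals_{>0}$ gives $\EXP{\vt_s} = \tfrac{1}{x} \cdot \EXP{\tr_{\rew}}$, as required.

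The argument is almost entirely routine; the only point requiring care is the extended-real-valued setting, i.e., the case where $s$ is a reachable recurrent state and both sides equal $\infty$. The main (minor) obstacle is therefore ensuring that the scaling of sums and of the integral by the positive constant $x$ remains valid when the value is $\infty$ — which holds precisely because $x > 0$ and the functions are nonnegative, so no indeterminate form arises.
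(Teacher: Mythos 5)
Your proof is correct and follows essentially the same route as the paper's: establish the pointwise identity $\tr_{\rew}(\pi) = x \cdot \vt_s(\pi)$ on all paths, then pull the positive constant out of the Lebesgue integral. The paper's proof simply asserts the pointwise identity without your (welcome but routine) extra care about the extended-real case where both sides are $\infty$.
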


\noindent
By \Cref{theorem:ETRs_to_EVTs}, EVTs can be obtained using existing algorithms for expected total rewards. 
This  approach is, however, inefficient for computing the EVTs of multiple states since it requires solving an equation system for each single state. 

Next, we elaborate on EVTs for multiple states as a solution of a single linear equation system. 
In~\cite[Def.~3.2.2]{KS76}, EVTs are defined using the so-called \emph{fundamental matrix} for absorbing DTMCs.
The fundamental matrix contains as its coefficients for each possible start and target state $s$ and $t$ the EVT $\EXPstateMC{\vt_t}{s}{}$.%~\cite[Thm.~3.2.4]{KS76}. 
Computing the fundamental matrix explicitly becomes infeasible for large models as it requires determining the inverse of a $|\statesTr| \times |\statesTr|$ matrix.
To obtain the vector $(\EXPstateMC{\vt_s}{}{\dtmc})_{s \in \statesTr}$ of EVTs that take the initial distribution of $\dtmc$ into account, it suffices to solve an equation system which is linear in the size of the DTMC. 
The same equation system arises by applying the dual linear program for expected rewards in MDPs~\cite[Ch.~7.2.7]{Put94} to the special case of DTMCs.

\begin{theorem}[{\cite[Cor.~3.3.6]{KS76}}]\label{theorem:expected_visiting_times_LEQ}
    $(\EXP{\vt_{s}})_{s \in \statesTr}$ is the \emph{unique} solution $(\xbf(s))_{s \in \statesTr}$ of the following equation system:
    $\forall s \in \statesTr \colon \xbf(s) = \iotainit(s)+  \sum_{t \in \statesTr} \Pbf(t, s) \cdot \xbf(t) $.  
\end{theorem}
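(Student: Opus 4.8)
The plan is to prove the two halves of the statement separately: first that the EVT vector $(\EXP{\vt_s})_{s\in\statesTr}$ \emph{solves} the equation system (existence of a solution), and then that the system has \emph{at most one} solution (uniqueness). Throughout I would work with the step-wise visit probabilities $p_k(s) \coloneqq \PR(\eventBound{k}\{s\})$, i.e.\ the probability that $\dtmc$ occupies $s$ after exactly $k$ steps.

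For existence, I would first rewrite the EVT as a sum of these probabilities. Since $\vt_s = \sum_{k=0}^\infty \iverson{\pi[k]=s}$ is a countable sum of non-negative random variables, the monotone convergence theorem (Tonelli) lets me pull the expectation inside, yielding $\EXP{\vt_s} = \sum_{k=0}^\infty \EXP{\iverson{\pi[k]=s}} = \sum_{k=0}^\infty p_k(s)$. Next I would record the forward (Chapman--Kolmogorov) recursion $p_0 = \iotainit$ and $p_{k+1}(s) = \sum_{t\in S} p_k(t)\,\Pbf(t,s)$, which follows directly from the cylinder-set definition of $\PRstateMC{}{\dtmc}$. Summing this recursion over all $k\geq 0$ and again interchanging the two non-negative sums gives $\EXP{\vt_s} = p_0(s) + \sum_{k=0}^\infty p_{k+1}(s) = \iotainit(s) + \sum_{t\in S}\Pbf(t,s)\,\EXP{\vt_t}$. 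The final step is to restrict the inner sum from $S$ to $\statesTr$: recurrent states lie in BSCCs, from which no transient state is reachable, so $\Pbf(t,s)=0$ whenever $t\in\statesRe$ and $s\in\statesTr$; the (possibly infinite) EVTs of recurrent states therefore never contribute, and the equation reduces to exactly the claimed form.

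For uniqueness, I would cast the system in matrix form. Writing $\Pbf_{\statesTr}$ for the restriction of $\Pbf$ to $\statesTr\times\statesTr$, any solution $\xbf$ satisfies $(\Ibf - \Pbf_{\statesTr}^{\top})\,\xbf = \iotainit|_{\statesTr}$, so uniqueness is equivalent to the invertibility of $\Ibf - \Pbf_{\statesTr}$. This is the crux of the argument and the step I expect to be the main obstacle. The key fact is that every transient state almost surely leaves $\statesTr$, which means $\Pbf_{\statesTr}^k \to \0$ entrywise as $k\to\infty$ (its $(t,s)$-entry is precisely the probability of a length-$k$ path from $t$ to $s$ staying inside $\statesTr$). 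This forces the spectral radius of $\Pbf_{\statesTr}$ to be strictly below $1$, so $\Ibf - \Pbf_{\statesTr}$ is invertible --- in fact the Neumann series $\sum_{k\geq 0}\Pbf_{\statesTr}^k$ then converges to its inverse, the fundamental matrix of~\cite{KS76}. Invertibility simultaneously guarantees that the solution already exhibited is the only one, completing the proof.
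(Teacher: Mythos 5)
Your proof is correct, but note that the paper never proves this statement at all: it is imported wholesale from Kemeny and Snell \cite{KS76}, so what you have done is reconstruct the classical argument from scratch. Your ingredients line up exactly with the machinery the paper deploys in its appendix whenever it needs this theorem's content: the proof of \Cref{theorem:VI_convergence} rests on $\lim_{k\to\infty}\Qbf^k = \0$ and hence $\rho(\Qbf)<1$ (citing \cite{HC12}), and the proofs of \Cref{theorem:EVTs_constrPr} and \Cref{theorem:topo_bound_abs} use the Neumann series $(\Ibf-\Qbf^T)^{-1} = \sum_{k\geq 0}(\Qbf^T)^k$ and the identification of its entries with EVTs --- precisely your fundamental matrix. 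Your existence half (Tonelli to get $\EXP{\vt_s} = \sum_{k\geq 0} p_k(s)$, the forward recursion, then dropping recurrent predecessors because $\Pbf(t,s)=0$ for $t \in \statesRe$ and $s \in \statesTr$) is somewhat more hands-on than Kemeny--Snell's, who simply read off the solution as $\tauinit^T(\Ibf-\Qbf)^{-1}$; what your version buys is that the $0\cdot\infty$ terms and the restriction of the inner sum from $S$ to $\statesTr$ are handled explicitly rather than swept into the matrix formalism. Two points deserve tightening. First, for $(\EXP{\vt_s})_{s\in\statesTr}$ to be a solution of a real-valued system at all, you need $\EXP{\vt_s}<\infty$ for transient $s$; your existence half does not show this, but it follows from your uniqueness half, since $\sum_{k\geq 0} p_k(s)$ is dominated by an entry of $\tauinit^T\sum_{k\geq 0}\Pbf_{\statesTr}^k$, which converges once $\rho(\Pbf_{\statesTr})<1$ is established --- make that dependence explicit, as the two halves are not independent. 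Second, the assertion $\Pbf_{\statesTr}^k \to \0$, which you correctly flag as the crux, is standard but not free: the usual one-line justification is that from every transient state some BSCC is reachable within $|S|$ steps with probability bounded below, so the probability of remaining in $\statesTr$ decays geometrically; this is the fact the paper itself invokes elsewhere via \cite{BK08} for exactly this purpose.
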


\noindent
Intuitively, this equation system shows that a state $s$ can be visited initially and that it can receive visits from its predecessor states, i.e., the EVT is computed by considering the \emph{incoming} transitions to a state. 
As a consequence, we obtain that the EVTs of the transient states are always finite. 
In particular, if $s \in \statesTr$ is reachable, then $\EXP{\vt_{s}} \in  \reals_{>0}$, and otherwise $\EXP{\vt_{s}} =  0$.

\begin{example}
    Reconsider the DTMC from \Cref{fig:runningExample} with transient states $\statesTr = \{s_1,s_2,s_3,s_4\}$.
    The EVTs of $\statesTr$ are the unique solution $(\xbf(s))_{s \in \statesTr}$ of
    \begin{align*}
        &\xbf(s_1) = 0.4 + 1.0\cdot\xbf(s_2) + 0.7\cdot\xbf(s_3) &&\xbf(s_2) = 0.5\cdot\xbf(s_1)\\
        &\xbf(s_4) = 0.5 \cdot \xbf(s_1) + 0.3 \cdot \xbf(s_3) + 0.8 \cdot \xbf(s_4)  &&\xbf(s_3) = 0.6 
    \end{align*}
\end{example}

\paragraph{Expected visiting times in CTMCs.} Following~\cite{KS61}, we define the EVT of a state $s\in S^\ctmc$ of a CTMC $\ctmc$ as the expected value  $\EXPstateMC{\vt_s}{}{\ctmc}$ of the random variable $\vt_s \colon \paths{\ctmc} \to \rnonnegAndInf$ with  $\vt_s(\pi) = \sum_{k=0}^{\infty} \iverson{\pi[k]= s} \cdot \mathit{time}_k(\pi)$. 
Intuitively, $\vt_s$ considers the total time the system spends in state $s$.
Computing EVTs in CTMCs reduces to the discrete-time case: The EVT of state $s$ coincides with the EVT in the embedded DTMC weighted by the expected residence time $\frac{1}{\rate(s)}$ in $s$:
\begin{restatable}{theorem}{EVTsCTMC}\label{theorem:EVTs_CTMC}
    For all states $s \in S$, it holds that $\EXPstateMC{\vt_s}{}{\ctmc} = \frac{1}{\rate(s)} \cdot \EXPstateMC{\vt_s}{}{\emb(\ctmc)}$.
\end{restatable}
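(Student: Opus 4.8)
The plan is to unfold the definition of $\EXPstateMC{\vt_s}{}{\ctmc}$ and exploit the defining property of the CTMC semantics: conditioned on the sequence of visited states (the \emph{jump chain}), the residence times $\mathit{time}_k$ are independent, and $\mathit{time}_k$ is exponentially distributed with rate $\rate(\pi[k])$, hence of mean $\tfrac{1}{\rate(\pi[k])}$. Crucially, the jump chain of $\ctmc$ is distributed exactly as a path of the embedded DTMC $\emb(\ctmc)$, which is the bridge connecting the two sides of the claimed identity.

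First I would write, using the definition of the random variable $\vt_s$ in the CTMC, $\EXPstateMC{\vt_s}{}{\ctmc} = \EXPstateMC{\sum_{k=0}^{\infty} \iverson{\pi[k]=s}\cdot\mathit{time}_k(\pi)}{}{\ctmc}$. Since all summands are non-negative, Tonelli's theorem (equivalently, monotone convergence) permits interchanging the expectation with the infinite sum, giving $\EXPstateMC{\vt_s}{}{\ctmc} = \sum_{k=0}^{\infty} \EXPstateMC{\iverson{\pi[k]=s}\cdot\mathit{time}_k}{}{\ctmc}$.

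Next I would evaluate each summand. On the event $\{\pi[k]=s\}$ the residence time $\mathit{time}_k$ is exponentially distributed with rate $\rate(s)$ and, by the CTMC semantics, independent of the occupied state at step $k$; therefore $\EXPstateMC{\iverson{\pi[k]=s}\cdot\mathit{time}_k}{}{\ctmc} = \tfrac{1}{\rate(s)}\cdot\PRstateMC{}{\ctmc}(\pi[k]=s)$. Because the $k$-th state of a $\ctmc$-path has the same distribution as the $k$-th state of an $\emb(\ctmc)$-path, we have $\PRstateMC{}{\ctmc}(\pi[k]=s) = \PRstateMC{}{\emb(\ctmc)}(\eventBound{k}\{s\})$. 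Summing over $k$ and pulling out the constant $\tfrac{1}{\rate(s)}$ yields $\tfrac{1}{\rate(s)}\sum_{k=0}^{\infty}\PRstateMC{}{\emb(\ctmc)}(\eventBound{k}\{s\})$, which by the definition of $\vt_s$ in the embedded DTMC and a second application of Tonelli equals $\tfrac{1}{\rate(s)}\cdot\EXPstateMC{\vt_s}{}{\emb(\ctmc)}$, as desired.

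The main obstacle is the rigorous justification of the factorization $\EXPstateMC{\iverson{\pi[k]=s}\cdot\mathit{time}_k}{}{\ctmc} = \tfrac{1}{\rate(s)}\PRstateMC{}{\ctmc}(\pi[k]=s)$, which rests on the conditional independence of the residence times from the jump chain in the standard CTMC probability space; this is where the measure-theoretic construction of $\PRstateMC{}{\ctmc}$ must be invoked rather than merely quoted. The resulting identity holds in $\rnonnegAndInf$: if $s$ is a reachable recurrent state both sides equal $\infty$, and since $\rate(s)\in\reals_{>0}$ multiplication by $\tfrac{1}{\rate(s)}$ preserves $\infty$, so no separate case analysis is needed.
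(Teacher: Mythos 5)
Your proof is correct, but it takes a genuinely different route from the paper's. The paper never manipulates the CTMC measure directly: it notes that $\vt_s$ coincides pathwise with the total reward $\tr_{\rew}$ for $\rew(t) = \iverson{t=s}$, invokes the standard correspondence $\EXPstateMC{\tr_{\rew}}{}{\ctmc} = \EXPstateMC{\tr_{\rew^{\emb(\ctmc)}}}{}{\emb(\ctmc)}$ for the rate-scaled reward $\rew^{\emb(\ctmc)}(t) = \rew(t)/\rate(t)$ (quoted from~\cite[Page 31]{Kwi+07}), and then converts the scaled total reward in the embedded DTMC back into an EVT via \Cref{theorem:ETRs_to_EVTs} with $x = 1/\rate(s)$. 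You instead compute $\EXPstateMC{\vt_s}{}{\ctmc}$ from first principles: Tonelli to exchange $\sum_k$ and the expectation, the factorization $\EXPstateMC{\iverson{\pi[k]=s}\cdot \mathit{time}_k}{}{\ctmc} = \frac{1}{\rate(s)}\cdot\PRstateMC{}{\ctmc}(\pi[k]=s)$ via the conditional law of residence times given the jump chain, and the observation that the jump chain is distributed as $\emb(\ctmc)$, so that $\sum_k \PRstateMC{}{\ctmc}(\pi[k]=s) = \EXPstateMC{\vt_s}{}{\emb(\ctmc)}$. In effect, your middle steps re-prove, for indicator rewards, exactly the reward correspondence that the paper outsources to the literature; the measure-theoretic obstacle you honestly flag is precisely the content of that cited result, so neither proof is more rigorous than the other on this point --- you just localize the burden differently. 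What your route buys is self-containedness and transparency: the factor $1/\rate(s)$ is visibly the mean of the exponential sojourn time, with no appeal to a general reward-transformation theorem. What the paper's route buys is brevity and modularity, reusing its own \Cref{theorem:ETRs_to_EVTs} plus one citation. Two small points of precision on your write-up: the phrase ``independent of the occupied state at step $k$'' is slightly misleading (the residence time's rate \emph{depends} on that state; what you need is that, conditionally on the jump chain, $\mathit{time}_k$ is exponential with rate $\rate(\pi[k])$, so $\mathbb{E}[\mathit{time}_k \mid \pi[k]=s] = 1/\rate(s)$), and your final remark that the identity holds in $\rnonnegAndInf$ without case distinctions is correct and matches the theorem's scope over all $s \in S$.
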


\noindent
\Cref{theorem:EVTs_CTMC} implies that all results and algorithms to compute ETVs in DTMCs are readily applicable to CTMCs, too.
We thus focus on DTMCs in the remainder.

\section{Accurately Computing EVTs}\label{sec:computingEVTs}

In this section, we discuss algorithms to compute EVTs approximately:
An unsound value iteration algorithm (\Cref{sec:VI}), its sound interval iteration extension (\Cref{sec:II}), and finally a topological, i.e., SCC-wise algorithm (\Cref{sec:topological_algorithms}).
Since the EVTs for recurrent states are always either $0$ or $\infty$, we focus on the EVTs of the transient states.
Omitted proofs are in \Cref{ap:computing}.%the extended version~\cite{extendedArxiv}. 

\subsection{Value Iteration}\label{sec:VI}

Value Iteration (VI) was originally introduced to approximate expected rewards in MDPs~\cite{Bel57}.
In a broader sense, VI simply refers to iterating a function $f \colon \reals^{|S|} \to \reals^{|S|}$ (called \emph{Bellman operator} in the MDP setting) from some given initial vector $\xbf^{(0)}$, i.e., to compute the sequence $\xbf^{(1)} = f(\xbf^{(0)}), \xbf^{(2)} = f(\xbf^{(1)})$, etc.
Instances of VI are usually set up such that the sequence converges to a (generally non-unique) fixed point $\xbf = f(\xbf)$.
In this paper, we only consider VI for the case where $f$ is a linear function $f(\xbf) = \Abf \xbf + \bbf$, where $\Abf$ and $\bbf$ are a matrix and a vector, respectively.
A fixed point $\xbf$ of $f$ is then a solution of the linear equation system $(\Ibf - \Abf) \xbf = \bbf$.
Other iterative methods for solving linear equation systems such as the Jacobi or Gauss-Seidel method can be considered optimized variants of VI, and are applicable in our setting as well, see~\Cref{ap:GS} (Appendix C). %\cite{extendedArxiv}. %

\paragraph{Value iteration for EVTs.}
For EVTs, the function iterated during VI is as follows:

\begin{definition}\label{definition:EVTs_operator}
    The \emph{EVTs-operator} $\vioperator \colon \evtsdomain{\statesTr} \to \evtsdomain{\statesTr}$ for DTMC $\dtmc$ is defined as
    \begin{align*}
        \vioperator(\xbf) = \paren{\iotainit(s) + \sum_{t \in \statesTr} \xbf(t) \cdot \Pbf(t, s)}_{s \in \statesTr}
        ~.
    \end{align*}
\end{definition}

The above definition is motivated by \Cref{theorem:expected_visiting_times_LEQ}.
The following result, which is analogous to~\cite[Thm.~6.3.1]{Put94}, means that VI for EVTs (stated explicitly as \Cref{algorithm:VI} for the sake of concreteness) works for arbitrary initial vectors.
%states that the sequence of vectors $(\xbf^{(k)})_{k \in \nat} = (\vioperator^{(k)}(\xbf^{(0)}))_{k \in \nat}$ converges to the fixed point $(\EXP{\vt_s})_{s \in \statesTr}$ of $\vioperator$ for any initial vector $\xbf^{(0)} \in \evtsdomain{\statesTr}$.
\begin{restatable}{theorem}{VIConvergence}\label{theorem:VI_convergence}
    The EVTs-operator from \Cref{definition:EVTs_operator} has the following properties:
    \begin{enumthm}
        \item\label{theorem:VI:operator_fixpoint}  $(\EXP{\vt_s})_{s \in \statesTr}$ is the unique fixed point of $\vioperator$.
        \item\label{theorem:VI:operator_limit} For all $\xbf^{(0)} \in \evtsdomain{\statesTr}$ we have $\lim_{k \to \infty} \vioperator^{(k)}(\xbf^{(0)})= (\EXP{\vt_s})_{s \in \statesTr}$.
    \end{enumthm}
\end{restatable}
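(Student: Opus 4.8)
The plan is to exploit the affine-linear structure of $\vioperator$. Writing $\Qbf \in \reals^{\abs{\statesTr} \times \abs{\statesTr}}$ for the substochastic submatrix of $\Pbf$ obtained by restricting both rows and columns to $\statesTr$, and setting $\bbf = (\iotainit(s))_{s \in \statesTr}$, \Cref{definition:EVTs_operator} reads $\vioperator(\xbf) = \Qbf^\top \xbf + \bbf$. Hence a vector is a fixed point of $\vioperator$ iff it solves $(\Ibf - \Qbf^\top)\xbf = \bbf$, which is exactly the equation system of \Cref{theorem:expected_visiting_times_LEQ}. The whole theorem therefore reduces to two standard facts about $\Qbf$: that $\Ibf - \Qbf^\top$ is invertible (for part (i)) and that $\Qbf^k \to \0$ as $k \to \infty$ (for part (ii)).

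For part (i) I would not reprove uniqueness from scratch: \Cref{theorem:expected_visiting_times_LEQ} already states that the system $(\Ibf - \Qbf^\top)\xbf = \bbf$ has $(\EXP{\vt_s})_{s\in\statesTr}$ as its \emph{unique} solution. Since the fixed points of $\vioperator$ are precisely the solutions of this system, the EVT vector is the unique fixed point, giving part (i) immediately.

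For part (ii) the decisive step is to show $\Qbf^k \to \0$. I would argue probabilistically: the entry $(\Qbf^k)_{s,t}$ equals $\PRstateMC{s}{}$ of the event $\{\pi[k]=t \text{ and } \pi[i]\in\statesTr \text{ for all } i \le k\}$, so the $s$-th row sum of $\Qbf^k$ equals $\PRstateMC{s}{}(\forall i\le k\colon \pi[i]\in\statesTr)$, the probability of staying among the transient states for the first $k$ steps. This sequence is non-increasing in $k$ and converges to $\PRstateMC{s}{}(\forall i\colon \pi[i]\in\statesTr)$, the probability of never entering a BSCC, which is $0$ since a BSCC is reached almost surely in a finite DTMC. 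As the entries of $\Qbf^k$ are non-negative and dominated by their row sums, $\Qbf^k \to \0$, hence $(\Qbf^\top)^k = (\Qbf^k)^\top \to \0$ as well. Writing $\xbf^\ast = (\EXP{\vt_s})_{s\in\statesTr}$ for the fixed point, a one-line induction on the affine recurrence gives $\vioperator^{(k)}(\xbf^{(0)}) - \xbf^\ast = (\Qbf^\top)^k(\xbf^{(0)} - \xbf^\ast)$, and the right-hand side tends to $\0$ for every starting vector $\xbf^{(0)}$, proving part (ii). The invertibility of $\Ibf - \Qbf^\top$ invoked above is then also free, because $\Qbf^k \to \0$ is equivalent to $\rho(\Qbf) = \rho(\Qbf^\top) < 1$, so $1$ is not an eigenvalue of $\Qbf^\top$.

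I expect the only genuine obstacle to be the claim that the transient states are left almost surely, equivalently $\Qbf^k \to \0$; the rest is bookkeeping around the linear recurrence. This almost-sure escape is classical, but to keep the argument self-contained I would justify it uniformly: from every transient state there is a finite path of positive probability into some BSCC, and taking a minimum over the finitely many transient states yields a constant $\delta > 0$ such that the probability of remaining in $\statesTr$ throughout any block of $\abs{S}$ consecutive steps is at most $1-\delta$; iterating over blocks forces the row sums of $\Qbf^k$ to decay geometrically.
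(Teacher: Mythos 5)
Your proof is correct, and while part (i) coincides exactly with the paper's argument (fixed points of $\vioperator$ are precisely the solutions of the linear system in \Cref{theorem:expected_visiting_times_LEQ}, whose unique solution is the EVT vector), your part (ii) takes a genuinely different route. The paper asserts $\lim_{k\to\infty}\Qbf^k=\0$ as a known property of the transient submatrix, deduces $\rho(\Qbf)<1$, constructs a matrix norm with $\normM{\Qbf}_{\Qbf}<1$ together with a compatible vector norm on a complete space (citing Horn--Johnson), verifies that $\vioperator$ is a contraction in that norm, invokes Banach's fixed point theorem, and finally appeals to norm-independence of convergence in finite dimensions. You instead unroll the affine recurrence to the exact identity $\vioperator^{(k)}(\xbf^{(0)})-\xbf^{\ast}=(\Qbf^{\top})^{k}(\xbf^{(0)}-\xbf^{\ast})$ and prove $\Qbf^{k}\to\0$ from scratch via the probabilistic escape argument: the $s$-th row sum of $\Qbf^{k}$ is the probability of remaining among transient states for $k$ steps, and the uniform $\delta$-escape over blocks of $\lvert S\rvert$ steps (valid by the Markov property and finiteness of $\statesTr$) forces geometric decay. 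This buys two things: you bypass the auxiliary-norm/Banach machinery entirely --- which is indeed dispensable here, since existence and uniqueness of the fixed point are already supplied by \Cref{theorem:expected_visiting_times_LEQ} --- and you obtain an explicit convergence rate of order $(1-\delta)^{\lfloor k/\lvert S\rvert\rfloor}$ rather than contraction in an unspecified norm. Conversely, the paper's spectral-radius route gets the key fact $\Qbf^{k}\to\0$ for free from cited textbook results, whereas you must (and correctly do) establish it from the graph structure of the chain; both uses of it --- yours directly, the paper's via $\rho(\Qbf)<1$ --- are sound.
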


\begin{algorithm}[t]
    \Indm{}  
    \Input{DTMC $\dtmc$, $\xbf^{(0)} \in \evtsdomain{\statesTr}$, $crit \in \{abs, rel\}$, $\epsilon >0$}
    \Output{$\xbf \in \evtsdomain{\statesTr}$}
    \Indp{}    
%    $k \leftarrow 0$
    \For(){$k=1,2,3,\dots$}{        
%        $k \leftarrow k+1$
%
        $\xbf^{(k)} \leftarrow \vioperator(\xbf^{(k-1)})$\label{algorithm:operator_line}
        \tcp{$\xbf^{(k)}(s) = \iotainit(s) + \sum_{t \in \statesTr} \xbf^{(k-1)}(t) \cdot  \Pbf(t, s)$, $s \in \statesTr$}
        \lIf{$\diff{crit}(\xbf^{(k-1)}, \xbf^{(k)})\leq \epsilon$\label{algorithm:VI:error} }{\Return{$\xbf^{(k)}$}}
    }
    \caption{Value iteration for EVTs without precision guarantee.}\label{algorithm:VI}
\end{algorithm}

\paragraph{When to stop VI?}
A general issue with value iteration is that even if the generated sequence converges to the desired fixed point in the limit, it is not easy to determine how many iterations are necessary to obtain an $\epsilon$-precise result.
An ad hoc solution, which is implemented in probabilistic model checkers such as \storm{}~\cite{STORM}, \prism~\cite{PRISM}, and \mcsta~\cite{HH14}, is to stop the iteration once the difference between two consecutive approximations is small, i.e., the number of iterations is the smallest $k > 0$ such that $\diff{}(\xbf^{(k)}, \xbf^{(k-1)}) < \epsilon$ for some predefined fixed $\epsilon > 0$.
Common choices for the distance $\diff{}$ between vectors $\xbf,\ybf \in \reals^{|S|}$ are the \emph{absolute difference} 
$\diff{abs}(\xbf,\ybf) = \max_{s \in S}  \abs{\xbf(s)- \ybf(s)}$, and the \emph{relative difference}
\begin{align*}
    \diff{rel}(\xbf,\ybf) = \max_{s \in S} \abs*{\frac{ \xbf(s) - \ybf(s) }{\ybf(s)}} ~,
\end{align*}
where by convention $0/0 = 0$ and $a/0 = \infty$ for $a \neq 0$. 
As pointed out by various authors~\cite{Wim+08,HM14,Bai+17,McM+05,QK18}, there exist instances where the iteration terminates with a result which vastly differs from the true fixed point, even if $\epsilon$ is small (e.g. $\epsilon=10^{-6}$).
An example of this for the EVT variant of VI is given in \Cref{ap:VI_unsound} (Appendix B). %\cite{extendedArxiv}. 

\subsection{Interval Iteration}\label{sec:II}

\emph{Interval iteration (II)}~\cite{McM+05,HM14,Bai+17} is an extension of VI that formally guarantees $\epsilon$-close results for all possible inputs.
The general idea of II is to construct \emph{two} sequences of vectors $(\lbf^{(k)})_{k \in \nat}$ and $(\ubf^{(k)})_{k \in \nat}$ such that for all $k \in \nat$ we have $\lbf^{(k)} \leq \xbf \leq \ubf^{(k)}$, where $\xbf$ is the desired fixed point solution.
II can be stopped with precision guarantee $\epsilon$ once it detects that $\diff{}(\lbf^{(k)}, \ubf^{(k)}) \leq \epsilon$.

\paragraph{Initial bounds for II.}
In general, II requires initial vectors $\lbf^{(0)}$ and $\ubf^{(0)}$ which are already sound (but perhaps very crude) lower and upper bounds on the solution.
In the case of EVTs, we can use $\lbf^{0} = \0$.
Finding an upper bound $\ubf^{(0)} \geq (\EXP{\vt_{s}})_{s \in \statesTr}$ is more involved since EVTs may be unboundedly large in general.
We solve this issue using a technique from~\cite{Bai+17}.
\paragraph{II for EVTs.}
In \Cref{theorem:II_convergence} below we show that once we have found initial bounds $\lbf^{(0)}$ and $\ubf^{(0)}$, we can readily perform a sound II for EVTs by simply iterating the operator $\vioperator$ from \Cref{definition:EVTs_operator} on $\lbf^{(0)}$ and $\ubf^{(0)}$ in parallel.
Inspired by~\cite{Bai+17}, we propose the following optimization to speed up convergence:
Whenever $\vioperator$ \emph{decreases} the current lower bound in some entries, we retain the old values for these entries (and similar for upper bounds).
The next definition formalizes this.
\begin{definition}\label{definition:II_operator} 
    The \emph{Max and Min EVTs-operators} $\maxvioperator, \minvioperator\colon \evtsdomain{\statesTr} \to \evtsdomain{\statesTr}$ are defined by 
    $\maxvioperator(\xbf) = \max \left\{\xbf, \vioperator(\xbf)\right\} = (\max \{\xbf(s) , (\vioperator(\xbf))(s)\})_{s\in \statesTr}$
    and $\minvioperator(\xbf) = \min\{\xbf, \vioperator(\xbf)\} = (\min \{ \xbf(s) , (\vioperator(\xbf))(s)\})_{s\in \statesTr}$.
\end{definition}

\noindent
The following result is analogous to~\cite[Lem.~3.3]{Bai+17}: 
\begin{restatable}{lemma}{IIConvergence}
    \label{theorem:II_convergence}
    Let $\ubf,\lbf \in \evtsdomain{\statesTr}$ with $\lbf \leq (\EXP{\vt_s})_{s \in \statesTr} \leq \ubf$. Then, 
    \begin{enumthm}
        \item $\maxvioperator^{(k)}(\lbf) \leq \maxvioperator^{(k+1)}(\lbf) \quad \text{and} \quad \minvioperator^{(k)}(\ubf) \geq \minvioperator^{(k+1)}(\ubf)$ for all $k \in \nat$.\label{theorem:II:seqmono} 
        \item $\vioperator^{(k)}(\lbf) \leq  \maxvioperator^{(k)}(\lbf) \leq \left(\EXP{\vt_s}\right)_{s\in \statesTr} \leq \minvioperator^{(k)}(\ubf) \leq \vioperator^{(k)}(\ubf)$ for all $k \in \nat$.\label{theorem:II:seqbound}
        \item $ \lim_{k\to \infty} \maxvioperator^{(k)}(\lbf) = \lim_{k \to \infty} \minvioperator^{(k)}(\ubf) = \left(\EXP{\vt_s}\right)_{s\in \statesTr}$.\label{theorem:II:seqconv}
    \end{enumthm}
\end{restatable}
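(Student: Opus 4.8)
The plan is to prove the three items in order, leaning on two structural facts about $\vioperator$. First, $\vioperator$ is a linear operator $\vioperator(\xbf) = \Abf\xbf + \bbf$ with $\Abf(s,t) = \Pbf(t,s) \geq 0$ and $\bbf(s) = \iotainit(s)$; since $\Abf \geq \0$, the operator is \emph{monotone}, i.e.\ $\xbf \leq \ybf$ implies $\vioperator(\xbf) \leq \vioperator(\ybf)$. Second, writing $\xbf^* := (\EXP{\vt_s})_{s \in \statesTr}$, \Cref{theorem:VI_convergence}\ref{theorem:VI:operator_fixpoint} tells us $\xbf^*$ is a fixed point ($\vioperator(\xbf^*) = \xbf^*$), and it is finite-valued.

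For item (i), I would read off \Cref{definition:II_operator} directly that $\maxvioperator(\ybf) \geq \ybf$ and $\minvioperator(\ybf) \leq \ybf$ hold componentwise for every $\ybf$, since each is a componentwise max (resp.\ min) with $\ybf$ itself. Instantiating the first inequality at $\ybf = \maxvioperator^{(k)}(\lbf)$ gives $\maxvioperator^{(k+1)}(\lbf) \geq \maxvioperator^{(k)}(\lbf)$, and symmetrically $\minvioperator^{(k+1)}(\ubf) \leq \minvioperator^{(k)}(\ubf)$; so monotonicity of both sequences is immediate and needs no induction.

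For item (ii), I would prove the four inequalities by induction on $k$ in a uniform style. For $\maxvioperator^{(k)}(\lbf) \leq \xbf^*$: assuming the bound at $k$, monotonicity gives $\vioperator(\maxvioperator^{(k)}(\lbf)) \leq \vioperator(\xbf^*) = \xbf^*$, so both arguments of the componentwise maximum defining $\maxvioperator^{(k+1)}(\lbf)$ lie below $\xbf^*$. For $\vioperator^{(k)}(\lbf) \leq \maxvioperator^{(k)}(\lbf)$: the inductive hypothesis together with monotonicity and $\maxvioperator(\ybf) \geq \vioperator(\ybf)$ yields $\vioperator^{(k+1)}(\lbf) = \vioperator(\vioperator^{(k)}(\lbf)) \leq \vioperator(\maxvioperator^{(k)}(\lbf)) \leq \maxvioperator^{(k+1)}(\lbf)$. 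The two $\minvioperator$-side inequalities are the exact mirror images, using $\ubf \geq \xbf^*$ and $\minvioperator(\ybf) \leq \vioperator(\ybf)$.

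For item (iii), the decisive observation is that item (i) makes $(\maxvioperator^{(k)}(\lbf))_k$ nondecreasing while item (ii) bounds it above by the finite vector $\xbf^*$; hence it converges componentwise to some $\lbf^* \leq \xbf^*$, and dually $\minvioperator^{(k)}(\ubf) \searrow \ubf^* \geq \xbf^*$. The hard part is pinning these limits down to $\xbf^*$, and for this I would squeeze using ordinary VI: \Cref{theorem:VI_convergence}\ref{theorem:VI:operator_limit} gives $\vioperator^{(k)}(\lbf) \to \xbf^*$ from any start, so chaining the lower bound $\vioperator^{(k)}(\lbf) \leq \maxvioperator^{(k)}(\lbf) \leq \xbf^*$ from item (ii) and passing to the limit forces $\lbf^* = \xbf^*$; symmetrically, $\xbf^* \leq \minvioperator^{(k)}(\ubf) \leq \vioperator^{(k)}(\ubf) \to \xbf^*$ forces $\ubf^* = \xbf^*$. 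This squeeze — reducing convergence of the clipped iterations to the already-established convergence of the plain iteration — is the main device of the argument, and everything else is routine monotonicity bookkeeping.
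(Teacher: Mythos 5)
Your proposal is correct and follows essentially the same route as the paper's proof: item~(i) read off directly from the componentwise max/min in the definition, item~(ii) by induction on $k$ using monotonicity of $\vioperator$ (which holds since its matrix has non-negative entries) together with the unique-fixed-point property from \cref{theorem:VI_convergence}, and item~(iii) by squeezing the clipped iterates between $\vioperator^{(k)}(\lbf)$ (resp.\ $\vioperator^{(k)}(\ubf)$) and $(\EXP{\vt_s})_{s \in \statesTr}$, then invoking convergence of plain VI. Your explicit appeal to monotone bounded convergence to justify that the limits exist before squeezing is a slightly more careful phrasing of the same argument the paper makes implicitly.
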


\noindent
The resulting II algorithm for EVTs is presented as \Cref{alg:II}.
Note the following additional optimization:
The algorithm stops as soon as $\diff{crit}(\ubf^{(k)}, \lbf^{(k)}) \leq 2 \epsilon$ and returns the mean of $\ubf^{(k)}$ and $\lbf^{(k)}$, ensuring that the absolute or relative difference between $(\EXP{\vt_s})_{s \in \statesTr}$ and the returned result is at most $\epsilon$.

\begin{example}
    We illustrate a run of \Cref{alg:II} on the DTMC from \Cref{fig:runningExample}, with $crit = abs$ and $\epsilon=0.05$ (the following numbers are rounded to 4 decimal digits):
    \begin{center}
        {\setlength{\tabcolsep}{5pt}
        \renewcommand{\arraystretch}{1}
        \begin{tabular}{r c c c}
%            \toprule
            $k$ & $\lbf^{(k)}$ & $\ubf^{(k)}$ & $\diff{abs}(\lbf^{(k)}, \ubf^{(k)})$ \\[-1pt]
            \midrule
            $0$ & $(0.000,0.000,0.000,0.000)$ & $(2.000,2.000,1.000,5.000)$ & $5.000$ \\
            $1$ & $(0.400,0.000,0.600,0.000)$ & $(2.000,1.000,0.600,5.000)$ & $5.000$ \\
            $2$ & $(0.820,0.200,0.600,0.380)$ & $(1.820,1.000,0.600,5.000)$ & $4.602$ \\
            & $\cdots$ & $\cdots$ & $\cdots$ \\
            $22$ & $(1.639, 0.819, 0.600, 4.899)$ & $(1.640, 0.820, 0.600, 5.000)$ & $0.101$\\
            $23$ & $(1.639, 0.819, 0.600, 4.919)$ & $(1.640, 0.820, 0.600, 5.000)$ & $\mathbf{0.081}$ %\\
%            \bottomrule
        \end{tabular}}
    \end{center}
    After $k=23$ iterations, the algorithm stops as $\diff{abs}(\lbf^{(k)}, \ubf^{(k)}) = 0.081 \leq 2 \cdot \epsilon$ and outputs the mean $\tfrac{1}{2}(\lbf^{(23)} + \ubf^{(23)})$.
\end{example}

\begin{algorithm}[t]
  \Indm{} 
    \Input{DTMC $\dtmc$, $\lbf^{(0)} \leq (\EXP{\vt_{s}})_{s \in \statesTr }, \ubf^{(0)} \geq (\EXP{\vt_{s}})_{s \in \statesTr }$, $crit \in \{abs, rel\}, \epsilon >0$}
    \Output{$\xbf \in \evtsdomain{\statesTr}$ with $\diff{crit}(\xbf, (\EXP{\vt_{s}})_{s \in \statesTr}) \leq \epsilon$} 
  \Indp{}
        
      \For(){$k=1,2,\dots$}{
            
            $\lbf^{(k)}\leftarrow  \maxvioperator(\lbf^{(k-1)})$~;\quad $\ubf^{(k)}\leftarrow \minvioperator(\ubf^{(k-1)})$

\lIf{$\diff{crit}(\ubf^{(k)}, \lbf^{(k)}) \leq 2\cdot\epsilon$\label{alg:II:error}}{    \Return{$\frac{1}{2}(\lbf^{(k)}+\ubf^{(k)})$} 
}

    }

    \caption{Interval iteration for EVTs with precision guarantee.}\label{alg:II}
\end{algorithm}

\begin{restatable}[{Correctness of \Cref{alg:II}}]{theorem}{IICorrectness}\label{theorem:II_correctness}  
    Given an input DTMC $\dtmc$, initial vectors $\lbf^{(0)}$, $\ubf^{(0)}$ with $\lbf^{(0)} \leq (\EXP{\vt_s})_{s \in \statesTr } \leq \ubf^{(0)}$, $crit\in \{abs,rel\}$, and a threshold $\epsilon>0$, 
    \Cref{alg:II} terminates and returns a vector $\xbf^{res} \in \evtsdomain{\statesTr}$ satisfying $\diff{crit} (\xbf^{res},(\EXP{\vt_s})_{s \in \statesTr}) \leq \epsilon$. 
\end{restatable}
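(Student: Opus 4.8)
The plan is to decouple the two claims---soundness of the returned vector and termination of the loop---and to derive both from \Cref{theorem:II_convergence}. Throughout, abbreviate $\xbf^\ast = (\EXP{\vt_s})_{s \in \statesTr}$ and observe that, by construction, at the start of iteration $k$ the algorithm holds $\lbf^{(k)} = \maxvioperator^{(k)}(\lbf^{(0)})$ and $\ubf^{(k)} = \minvioperator^{(k)}(\ubf^{(0)})$. Since the input vectors satisfy $\lbf^{(0)} \leq \xbf^\ast \leq \ubf^{(0)}$, part~\Cref{theorem:II:seqbound} of \Cref{theorem:II_convergence} yields the loop invariant $\lbf^{(k)} \leq \xbf^\ast \leq \ubf^{(k)}$ for every $k$. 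This single fact underlies both parts of the argument.

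For soundness, suppose the loop exits at iteration $k$, so that $\diff{crit}(\ubf^{(k)}, \lbf^{(k)}) \leq 2\epsilon$ and $\xbf^{res} = \tfrac{1}{2}(\lbf^{(k)} + \ubf^{(k)})$. By the invariant, for each $s$ both $\xbf^\ast(s)$ and $\xbf^{res}(s)$ lie in the interval $[\lbf^{(k)}(s), \ubf^{(k)}(s)]$, and since $\xbf^{res}(s)$ is its midpoint, $\abs{\xbf^{res}(s) - \xbf^\ast(s)} \leq \tfrac{1}{2}(\ubf^{(k)}(s) - \lbf^{(k)}(s))$. For $crit = abs$ this immediately gives $\diff{abs}(\xbf^{res}, \xbf^\ast) \leq \tfrac{1}{2}\cdot 2\epsilon = \epsilon$. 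For $crit = rel$ I would divide by $\xbf^\ast(s)$ and use $\xbf^\ast(s) \geq \lbf^{(k)}(s)$ to bound $\abs{\xbf^{res}(s) - \xbf^\ast(s)}/\xbf^\ast(s) \leq \tfrac{1}{2}(\ubf^{(k)}(s) - \lbf^{(k)}(s))/\lbf^{(k)}(s) \leq \epsilon$, the final inequality being exactly the relative stopping criterion; the degenerate case $\lbf^{(k)}(s) = 0$ forces $\ubf^{(k)}(s) = 0$ (otherwise the relative gap is $\infty > 2\epsilon$), whence $\xbf^\ast(s) = \xbf^{res}(s) = 0$ consistently with the $0/0 = 0$ convention.

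For termination, I would invoke part~\Cref{theorem:II:seqconv}: both $\lbf^{(k)} \to \xbf^\ast$ and $\ubf^{(k)} \to \xbf^\ast$, so $\ubf^{(k)} - \lbf^{(k)} \to \0$ component-wise. For $crit = abs$ this makes $\diff{abs}(\ubf^{(k)}, \lbf^{(k)}) = \max_s (\ubf^{(k)}(s) - \lbf^{(k)}(s))$ fall below $2\epsilon$ after finitely many steps, so the stopping guard in \Cref{alg:II} eventually triggers. For $crit = rel$ the quotient $\tfrac{\ubf^{(k)}(s) - \lbf^{(k)}(s)}{\lbf^{(k)}(s)}$ tends to $\tfrac{0}{\xbf^\ast(s)}$, which vanishes precisely when $\xbf^\ast(s) > 0$, i.e.\ when $s$ is reachable.

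The main obstacle is exactly this relative corner case: an \emph{unreachable} transient state has $\xbf^\ast(s) = 0$, so if its upper iterate $\ubf^{(k)}(s)$ approaches $0$ without ever attaining it, the relative gap stays $+\infty$ and the loop never halts. I would dispose of this by the standard preprocessing assumption that unreachable states are pruned beforehand via a single graph reachability pass, ensuring $\xbf^\ast(s) > 0$ for all $s \in \statesTr$ and hence denominators bounded away from $0$ in the limit; alternatively one sets $\ubf^{(0)}(s) = 0$ for unreachable $s$, a value that $\minvioperator$ preserves. Under either fix the relative gap converges to $0$ and termination follows. The remaining pieces---the midpoint estimate and the two case distinctions---are routine.
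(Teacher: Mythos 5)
Your proof is correct and follows essentially the same route as the paper's: the sandwich invariant $\lbf^{(k)} \leq (\EXP{\vt_s})_{s \in \statesTr} \leq \ubf^{(k)}$ from \Cref{theorem:II_convergence}~\Cref{theorem:II:seqbound}, the midpoint estimate with denominator bounded below by $\lbf^{(k)}(s)$ for relative soundness, and convergence of both sequences (part~\Cref{theorem:II:seqconv}, with $\lbf^{(k)}(s)$ eventually exceeding $\tfrac{1}{2}\EXP{\vt_s}$) for termination. The unreachable-state corner case you explicitly flag for the relative criterion is resolved in the paper by the standing assumption, invoked inside its proofs, that $\statesTr$ contains only reachable states so that $(\EXP{\vt_s})_{s \in \statesTr} > \0$ --- which is precisely the pruning preprocessing you propose.
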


\begin{remark}
The monotonicity of the sequences $(\maxvioperator^{(k)}(\lbf))_{k \in \nat}$ and $(\minvioperator^{(k)}(\ubf))_{k \in \nat}$ (see \Cref{theorem:II_convergence} \Cref{theorem:II:seqmono}) is not used in the proof of \Cref{theorem:II_correctness}. 
By \Cref{theorem:II_convergence} \Cref{theorem:II:seqbound}, we can replace $\maxvioperator$ and $\minvioperator$ with $\vioperator$ in \Cref{alg:II} and still obtain $\epsilon$-sound results. % for valid inputs. 
However, using $\maxvioperator$ and $\minvioperator$ instead of $\vioperator$ can lead to faster convergence. 
\end{remark}

\subsection{Topological Algorithm}\label{sec:topological_algorithms}

To increase the efficiency of VI for the analysis of rewards and probabilities in MDPs, several authors have proposed topological VI~\cite{DG07,Dai+11}, which is also known as blockwise VI~\cite{Cie+08}. 
The idea is to avoid the analysis of the complete model at once and instead consider the strongly connected components (SCCs)  sequentially based on the order relation $\hookrightarrow$.  
We lift this approach to EVTs and in particular consider error propagations when approximative methods are used.

\paragraph{SCC Restrictions.}
To formalize the topological approach, we introduce the \emph{SCC restriction} $\dtmcRestr{C}[\xbf]$ of DTMC $\dtmc$ to $C \in \scc{\dtmc}$ with parameters $\xbf \in \reals^{|\statesTr|}$.
Intuitively, $\dtmcRestr{C}[\xbf]$ is a DTMC-like model obtained by restricting $\dtmc$ to the states $C$ and assigning each $s \in C$ the ``initial value'' $\iotainit^{\dtmc}(s) + \sum_{s' \in S\setminus C} \Pbf^{\dtmc}(s',s) \cdot \xbf(s')$.
The idea is that $\xbf$ is an approximation of the EVTs of the predecessor SCCs $C' \hookrightarrow C$.
We also define $\dtmcRestr{C} = \dtmcRestr{C}[\xbf]$ with $\xbf(s) = \EXPstateMC{\vt_s}{}{\dtmc}$ (i.e., the \emph{exact} EVT) for each state $s \in S$ that can reach $C$ with positive probability in one step. 
See \Cref{definition:param_SCC_restriction} in \Cref{ap:topo} for more formal details. %See~\cite[Definition 7]{extendedArxiv} for more formal details.
\newline %wrapfig below needs this
\begin{wrapfigure}[7]{r}{0.42\linewidth}
    \raggedleft
    \begin{tikzpicture}[node distance=10mm and 12.5mm]
        \node[state] (s2) {$s_2$};
        \node[state,right=of s2,initial,initial text=$0.4 + 0.7 \cdot \textcolor{red}{0.6}$, initial where=above] (s1) {$s_1$};
        \node[gray,dashed,state,label={below,xshift=3mm:$\xbf(s_3){=}\textcolor{red}{0.6}$},right=of s1] (s3) {$s_3$};
        \node[below=5mm of s3] (ghost) {};
        
        \path[->] (s1) edge[bend left] node[above] {$0.5$} (s2);
        \path[->] (s1) edge[dashed,gray,bend right=15] node[below left] {$0.5$} (ghost);
        \path[->] (s2) edge[bend left] node[auto] {$1$} (s1);
        \path[->] (s3) edge[gray,dashed] node[above] {$0.7$} (s1);
    \end{tikzpicture}
    \caption{SCC restriction.}
    \label{fig:sccRestrExample}
\end{wrapfigure}
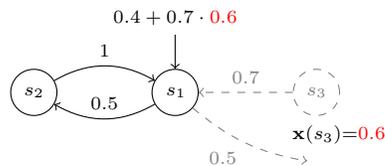

\begin{example}
    The SCC restriction of the DTMC $\dtmc$ from \Cref{fig:runningExample} to the SCC $C = \{s_1,s_2\}$ with parameters $\xbf = (\EXPstateMC{\vt_s}{}{\dtmc})_{s \in S}$ is depicted in \Cref{fig:sccRestrExample}.
    Note that the initial values depend only on the initial distribution $\iotainit^\dtmc$ and the $\xbf$-values of the states that can reach the SCC $C$ in one step.
\end{example}

\begin{restatable}{lemma}{EVTsInSCCRestriction}\label{theorem:EVTs_in_SCC_restriction}
    For non-bottom SCC $C$ and $s \in C$ we have $\EXPstateMC{\vt_s}{}{\dtmc} = \EXPstateMC{\vt_s}{}{\dtmcRestr{C}}$. 
\end{restatable}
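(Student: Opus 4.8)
The plan is to reduce the claim to the \emph{uniqueness} of solutions of the EVT equation system from \Cref{theorem:expected_visiting_times_LEQ}. Concretely, I would show that the vector $(\EXPstateMC{\vt_s}{}{\dtmc})_{s \in C}$ of exact EVTs of $\dtmc$, restricted to the index set $C$, is a solution of the EVT equation system associated with $\dtmcRestr{C}$. Since that system has a unique solution, namely $(\EXPstateMC{\vt_s}{}{\dtmcRestr{C}})_{s \in C}$, the two vectors must coincide entrywise, which is exactly the claim.

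For the key computation, fix $s \in C$. Because $C$ is non-bottom, $s$ is transient, so \Cref{theorem:expected_visiting_times_LEQ} applied to $\dtmc$ gives
\[
\EXPstateMC{\vt_s}{}{\dtmc} = \iotainit^{\dtmc}(s) + \sum_{t \in C} \Pbf^{\dtmc}(t,s)\cdot \EXPstateMC{\vt_t}{}{\dtmc} + \sum_{t \in \statesTr \setminus C} \Pbf^{\dtmc}(t,s)\cdot \EXPstateMC{\vt_t}{}{\dtmc}.
\]
I would then observe that the first and third summands together are precisely the initial value $\hatiotainit(s) = \iotainit^{\dtmc}(s) + \sum_{s' \in S \setminus C} \Pbf^{\dtmc}(s',s)\cdot \EXPstateMC{\vt_{s'}}{}{\dtmc}$ that the restriction assigns to $s$ (see \Cref{definition:param_SCC_restriction}). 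The only point to check here is that summing over $s' \in S \setminus C$ and over $t \in \statesTr \setminus C$ agrees on the relevant terms: a recurrent predecessor $s'$ of $s$ would lie in some BSCC and hence could not reach $C$, which differs from that BSCC --- a contradiction --- so every $s' \notin C$ with $\Pbf^{\dtmc}(s',s) > 0$ is transient. Since transitions inside $C$ are preserved by the restriction, i.e.\ $\Pbf^{\dtmcRestr{C}}(t,s) = \Pbf^{\dtmc}(t,s)$ for $t,s \in C$, the identity above becomes
\[
\EXPstateMC{\vt_s}{}{\dtmc} = \hatiotainit(s) + \sum_{t \in C} \Pbf^{\dtmcRestr{C}}(t,s)\cdot \EXPstateMC{\vt_t}{}{\dtmc},
\]
which is exactly the defining EVT equation of $\dtmcRestr{C}$ evaluated at the vector $(\EXPstateMC{\vt_t}{}{\dtmc})_{t \in C}$.

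The hard part will be justifying that \Cref{theorem:expected_visiting_times_LEQ} is applicable to the ``DTMC-like'' model $\dtmcRestr{C}$ in the first place, so that its solution is unique: I must argue that every state of $C$ is transient in $\dtmcRestr{C}$. This follows from $C$ being non-bottom --- there is strictly positive escape probability out of $C$ in $\dtmc$, and the restriction retains this escape (as sub-stochasticity or via an absorbing sink), so no state of $C$ is recurrent in $\dtmcRestr{C}$; moreover no state outside $C$ is a transient predecessor of any $s \in C$ in $\dtmcRestr{C}$, since $C$ is strongly connected and nothing reenters it. Granting the generalized form of \Cref{theorem:expected_visiting_times_LEQ} for the restriction (as set up in \Cref{definition:param_SCC_restriction}), uniqueness then forces $\EXPstateMC{\vt_s}{}{\dtmc} = \EXPstateMC{\vt_s}{}{\dtmcRestr{C}}$ for every $s \in C$, completing the proof.
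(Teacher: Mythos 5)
Your proposal is correct and takes essentially the same route as the paper: both arguments show that the global EVT vector restricted to $C$ solves the EVT equation system of $\dtmcRestr{C}$ (whose states in $C$ are transient because $C$ is non-bottom) and then conclude by the uniqueness of that solution via \Cref{theorem:expected_visiting_times_LEQ}, respectively the fixed-point characterization in \Cref{theorem:VI_convergence}. The only bookkeeping difference is that the paper's formal definition of $\dtmcRestr{C}$ fixes the initial values as the constrained reachability probabilities $\PRstateMC{}{\dtmc}\bigl((S\setminus C)\until\{s\}\bigr)$, so its proof routes through the auxiliary \Cref{theorem:EVTs_constrPr} to identify these with $\iotainit^{\dtmc}(s)+\sum_{s'\in S\setminus C}\Pbf^{\dtmc}(s',s)\cdot\EXPstateMC{\vt_{s'}}{}{\dtmc}$, whereas you work with the parametric restriction instantiated at the exact EVTs, making that identification definitional --- and your check that a state in a non-bottom SCC has no recurrent predecessors correctly disposes of the $S\setminus C$ versus $\statesTr\setminus C$ indexing point.
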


\begin{remark}\label{remark:sccrestr}
    Since a parametric SCC restriction $\dtmcRestr{C}[\xbf]$ is defined for arbitrary vectors $\xbf\in \evtsdomain{\statesTr}$, the initial values do not necessarily form a probability distribution. 
    Strictly speaking, this means that $\dtmcRestr{C}[\xbf]$ is not a DTMC in general. 
    Thus, by abuse of notation, we define the ``EVTs'' in the parametric SCC restriction $\dtmcRestr{C}[\xbf]$ as the unique solution 
    of the following linear equation system: 
    For all $s \in C$: $\xbf(s) = \iotainit^{\dtmcRestr{C}[\xbf]}(s)+  \sum_{t \in \statesTr} \Pbf^{\dtmcRestr{C}}(t, s) \cdot \xbf(t)$. 
    To solve this system, we can still apply the methods described in \Cref{sec:VI,sec:II} without further ado. 
\end{remark}

\begin{algorithm}[t]
    \Indm{} 
    \Input{DTMC $\dtmc$, $\epsilon \geq 0$}
    \Output{$\xbf \in \evtsdomain{\statesTr}$ with $\diff{rel}(\xbf, (\EXP{\vt_{s}})_{s \in \statesTr}) \leq \epsilon$}    
    \Indp{}
    $\{C_1, \dots ,C_n\} \leftarrow \sccTr{\dtmc}$\label{alg:topo_exact:line:SCC} \tcp*{obtain the non-bottom SCCs, $C_i \hookrightarrow C_j$ implies $i<j$}
    
    $\xbf \leftarrow \0$\label{alg:topo_exact:line:init}
    
    $\delta \leftarrow \sqrt[L+1]{1+\epsilon}-1$ \tcp{$L$ is the length of a longest non-bottom SCC chain}
    
    \For{$j = 1$ \KwTo{} $n$}
    { 
        
        compute $\widehat{\xbf}$ such that $\diff{rel}((\EXPstateMC{\vt_s}{}{\dtmcRestr{C_j}[\xbf]})_{s \in C_j}, \widehat{\xbf}) \leq \delta$ \tcp{use e.g.~Alg.~\ref{alg:II}}\label{alg:topo_exact:line:compute} 
        
        \lFor{$s \in C_j$}
        { 
            ${\xbf(s)} \leftarrow \widehat{\xbf}(s)$\label{alg:topo_exact:line:update}
        }
    }
    \Return{$\xbf$}
    \caption{Topological EVT algorithm with relative precision.}\label{alg:topo_exact}
\end{algorithm}

We now describe an algorithm for computing the EVTs SCC-wise in topological order. 
The desired precision $\epsilon \geq 0$ is an input parameter ($\epsilon = 0$ is possible).
Due to space limitations we only discuss relative precision, see~\Cref{ap:topo_abs}for an algorithm with absolute precision. %\cite{extendedArxiv}
The idea is to solve the linear equation systems tailored to the parametric SCC restrictions, each of which is constructed based on the analysis of the preceding SCCs.
\Cref{alg:topo_exact} outlines the procedure. 

The algorithm first decomposes the input DTMC $\dtmc$ into its non-bottom SCCs: 
The function $\sccTr{\dtmc}$ called in \Cref{alg:topo_exact:line:SCC} returns the set $\{C_1,\dots ,C_n\}$ of SCCs of $\dtmc$ which consist of transient states, i.e., the non-bottom SCCs. 
Furthermore, we assume that the SCCs are indexed such that $C_i \hookrightarrow C_j$ implies $i<j$. 
The algorithm considers each SCC in topological order.
We assume that a ``black box'' can obtain the (approximate) EVTs in \Cref{alg:topo_exact:line:compute}. 
Then, the vector $\xbf \in \evtsdomain{\statesTr}$
is updated such that it contains the approximations of the EVTs in $\dtmc$ upon termination. 
For the analysis of an SCC $C_j$ for some $j \in \{1, \dots ,n\}$, the algorithm considers the parametric SCC restriction $\dtmcRestr{C_j}[\xbf]$, which is based on the result $\xbf(t)$ for states $t$ in SCCs $C_i$ that are topologically before $C_j$. 
For each parametric SCC restriction $\dtmcRestr{C_j}[\xbf]$, the algorithm computes the vector $(\EXPstateMC{\vt_s}{}{\dtmcRestr{C_j}[\xbf]})_{s \in C}$ in \Cref{alg:topo_exact:line:compute}. 
Then, the algorithm updates the corresponding entries in $\xbf$ in \Cref{alg:topo_exact:line:update}. 
After each non-bottom SCC has been considered, the algorithm terminates and returns the vector $\xbf$ that contains the updated value for each transient state. 
The following lemma provides an upper bound on the error that accumulates during the topological computation.

\begin{restatable}{lemma}{topoBoundRel}\label{theorem:topo_bound_rel}
    Let $\epsilon \in [0,1)$ and let  
    $\xbf \in \evtsdomain{\statesTr}$ such that for every non-bottom SCC $C$, 
    $(\xbf(s))_{s\in C}$ satisfies $\diff{rel}\left((\xbf(s))_{s\in C}, (\EXPstateMC{\vt_s}{}{\dtmcRestr{C}[\xbf]})_{s \in C} 
      \right) \leq \epsilon$.
    Then
    \begin{align*} 
      \diff{rel}\left(\xbf, (\EXPstateMC{\vt_s}{}{\dtmc})_{s \in \statesTr} \right) 
      \leq  (1+\epsilon)^{L+1} -1, 
    \end{align*}
    where $L= \max_{\kappa \in \chainsTr^\dtmc}\abs{\kappa}$ is the largest length of a chain of non-bottom SCCs. 
\end{restatable}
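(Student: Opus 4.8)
The plan is to prove a sharper, SCC-local statement by induction on the topological order and then specialize. For a non-bottom SCC $C$, let $d(C)$ denote the length of the longest chain of non-bottom SCCs ending in $C$, so that $d(C)\le L$ for every such $C$. I will show by induction on $d(C)$ that
\[
  \diff{rel}\bigl((\xbf(s))_{s\in C},\,(\EXPstateMC{\vt_s}{}{\dtmc})_{s\in C}\bigr)\le (1+\epsilon)^{d(C)+1}-1 .
\]
Taking the maximum over all non-bottom SCCs and using $d(C)\le L$ then yields the claim, since the transient states $\statesTr$ are exactly the union of the non-bottom SCCs.

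The key ingredient is a sensitivity estimate for the parametric restriction. By \Cref{remark:sccrestr}, the vector $(\EXPstateMC{\vt_s}{}{\dtmcRestr{C}[\xbf]})_{s\in C}$ is the unique solution of a linear system whose only $\xbf$-dependence enters through the initial values $b_\xbf(s)=\iotainit(s)+\sum_{s'\in S\setminus C}\Pbf(s',s)\cdot\xbf(s')$, and this solution is a nonnegative linear (hence monotone and positively homogeneous) function of $b_\xbf$. A contributing state $s'$ lies in a non-bottom SCC $C'$ with $C'\hookrightarrow C$ and $C'\neq C$ (a state of a BSCC cannot leave it), so $d(C')\le d(C)-1$ and the induction hypothesis bounds its relative error by $(1+\epsilon)^{d(C')+1}-1\le (1+\epsilon)^{d(C)}-1=:\delta$. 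Since $\iotainit\ge\0$ is error-free, this componentwise relative error $\delta$ carries over to $b_\xbf$ relative to its exact counterpart $b_e$ (using the true EVTs as parameters), and monotonicity plus homogeneity of the solution map transfer it to
\[
  \diff{rel}\bigl((\EXPstateMC{\vt_s}{}{\dtmcRestr{C}[\xbf]})_{s\in C},\,(\EXPstateMC{\vt_s}{}{\dtmcRestr{C}})_{s\in C}\bigr)\le\delta .
\]
By \Cref{theorem:EVTs_in_SCC_restriction}, $\EXPstateMC{\vt_s}{}{\dtmcRestr{C}}=\EXPstateMC{\vt_s}{}{\dtmc}$, so the denominators on the right are the true EVTs.

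It remains to compose this with the per-SCC approximation error. The hypothesis gives $\diff{rel}((\xbf(s))_{s\in C},(\EXPstateMC{\vt_s}{}{\dtmcRestr{C}[\xbf]})_{s\in C})\le\epsilon$. Writing, for $s\in C$ with nonzero true EVT and $m_s=\EXPstateMC{\vt_s}{}{\dtmcRestr{C}[\xbf]}$, the factorization $\tfrac{\xbf(s)}{\EXPstateMC{\vt_s}{}{\dtmc}}=\tfrac{\xbf(s)}{m_s}\cdot\tfrac{m_s}{\EXPstateMC{\vt_s}{}{\dtmc}}$, the two factors lie in $[1-\epsilon,1+\epsilon]$ and $[1-\delta,1+\delta]$, so their product lies in $[(1-\epsilon)(1-\delta),(1+\epsilon)(1+\delta)]$. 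The upper end equals $(1+\epsilon)^{d(C)+1}$; for the lower end, $(1-\epsilon)(1-\delta)\ge 1-\epsilon-\delta-\epsilon\delta=2-(1+\epsilon)^{d(C)+1}=1-\bigl((1+\epsilon)^{d(C)+1}-1\bigr)$. Hence the relative deviation is at most $(1+\epsilon)^{d(C)+1}-1$, closing the induction. The base case $d(C)=0$ is immediate: then $C$ has no non-bottom predecessor, hence no predecessor outside $C$ at all, so $\dtmcRestr{C}[\xbf]=\dtmcRestr{C}$ and the bound reduces to the assumption $\epsilon=(1+\epsilon)^1-1$.

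I expect the main obstacle to be the sensitivity estimate, namely making precise that relative errors in the parameters propagate through the restriction with factor exactly one. This rests on nonnegativity of the fundamental matrix (equivalently, monotonicity and positive homogeneity of the solution map) together with the fact that the exact, nonnegative offset $\iotainit$ can only shrink the relative error. Unreachable transient states, where the true EVT is $0$, require a separate remark: finiteness of all assumed relative errors (guaranteed by $\epsilon<1$) forces the corresponding approximations to vanish as well, so the convention $0/0=0$ keeps every relative difference well defined throughout the induction.
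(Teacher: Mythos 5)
Your proposal follows essentially the same route as the paper's proof: the same induction on the length $d(C)$ of the longest non-bottom SCC chain ending in $C$, the same key facts (the EVT vector of $\dtmcRestr{C}[\xbf]$ is a nonnegative linear, hence monotone, function of the initial values --- the paper writes this out via the Neumann series $\sum_{i\ge 0}\Pbf^i$; the exact nonnegative offset $\iotainit$ cannot amplify relative error; \Cref{theorem:EVTs_in_SCC_restriction} to identify the exact restriction's EVTs with the true ones; finally composition with the per-SCC error $\epsilon$). The one real difference is bookkeeping: you carry the symmetric invariant $\diff{rel}\le(1+\epsilon)^{d(C)+1}-1$ through the induction, whereas the paper carries two one-sided multiplicative bounds, $\xbf(s)\le\EXPstateMC{\vt_s}{}{\dtmc}(1+\epsilon)^{k+1}$ and $\xbf(s)\ge\EXPstateMC{\vt_s}{}{\dtmc}(1-\epsilon)^{L+1}$, and then needs the auxiliary inequality $1-(1-\epsilon)^{L+1}\le(1+\epsilon)^{L+1}-1$ (proved by a second induction). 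Your version subsumes both directions at once and skips that lemma, which is a genuine (if modest) simplification.

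There is, however, one step that is false as stated. You claim that since the two factors lie in $[1-\epsilon,1+\epsilon]$ and $[1-\delta,1+\delta]$, their product lies in $[(1-\epsilon)(1-\delta),(1+\epsilon)(1+\delta)]$. Nothing in your induction keeps $\delta=(1+\epsilon)^{d(C)}-1$ below $1$; e.g.\ for $\epsilon=\tfrac12$ and $d(C)=2$ one has $\delta=\tfrac54$, so the second factor may be negative, and then the minimum of the product over the rectangle is the corner $(1+\epsilon)(1-\delta)$, which is strictly smaller than your claimed lower end $(1-\epsilon)(1-\delta)$. The conclusion nonetheless survives, because
\begin{align*}
(1+\epsilon)(1-\delta)\;\ge\;2-(1+\epsilon)(1+\delta)
\quad\Longleftrightarrow\quad
2\epsilon\ge 0,
\end{align*}
so the ratio still stays above $1-\bigl((1+\epsilon)^{d(C)+1}-1\bigr)$. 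To repair the proof, replace the interval claim by a corner analysis of the rectangle (the first factor is positive since $\epsilon<1$, so extremes are attained at corners) and check both signs of $1-\delta$. Note that the paper's one-sided bookkeeping never meets this issue, since all its quantities remain nonnegative ($(1-\epsilon)^k\ge 0$ because $\epsilon<1$). With this local fix, and the separate remark you already flag for unreachable states (true EVT $0$ forces $\xbf(s)=0$ under finite relative error), your argument is correct.
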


\begin{restatable}[{Correctness of \Cref{alg:topo_exact}}]{theorem}{topoCorrectnessRel}\label{theorem:topo_correctness_rel}
\Cref{alg:topo_exact} returns a vector $\xbf^{res} \in \evtsdomain{\statesTr}$ such that $\diff{rel}(\xbf^{res}, (\EXPstateMC{\vt_s}{}{\dtmc})_{s \in \statesTr}) \leq \epsilon$.
\end{restatable}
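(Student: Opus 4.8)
The goal is to show that Algorithm~\ref{alg:topo_exact} returns $\xbf^{res}$ with $\diff{rel}(\xbf^{res}, (\EXPstateMC{\vt_s}{}{\dtmc})_{s \in \statesTr}) \leq \epsilon$. The plan is to reduce the correctness of the whole algorithm to the already-established error-propagation bound of Lemma~\ref{theorem:topo_bound_rel}. The key observation is that Lemma~\ref{theorem:topo_bound_rel} assumes a \emph{global} hypothesis on the output vector $\xbf$: namely, that for \emph{every} non-bottom SCC $C$, the restriction $(\xbf(s))_{s \in C}$ is within relative distance $\epsilon$ of the EVTs of the parametric restriction $\dtmcRestr{C}[\xbf]$ \emph{evaluated at the final $\xbf$}. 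So the main task is to verify that the vector $\xbf^{res}$ produced by the loop actually satisfies this hypothesis, with the per-SCC error $\delta = \sqrt[L+1]{1+\epsilon}-1$ in place of $\epsilon$. Once this is done, plugging $\delta$ into Lemma~\ref{theorem:topo_bound_rel} gives a total error of at most $(1+\delta)^{L+1}-1 = (1+\epsilon)-1 = \epsilon$, which is exactly the claim.

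First I would argue termination, which is immediate: the loop runs over the finitely many non-bottom SCCs $C_1,\dots,C_n$, and each call in Line~\ref{alg:topo_exact:line:compute} terminates by the correctness of the chosen black box (e.g.\ Algorithm~\ref{alg:II} via Theorem~\ref{theorem:II_correctness}, or an exact linear solver when $\delta = 0$). Next comes the crucial step, which I expect to be the main obstacle: showing that at termination, for each $C_j$, we have $\diff{rel}\bigl((\xbf^{res}(s))_{s \in C_j}, (\EXPstateMC{\vt_s}{}{\dtmcRestr{C_j}[\xbf^{res}]})_{s \in C_j}\bigr) \leq \delta$. The subtlety is that Line~\ref{alg:topo_exact:line:compute} computes $\widehat{\xbf}$ relative to $\dtmcRestr{C_j}[\xbf]$ using the \emph{value of $\xbf$ at the time $C_j$ is processed}, whereas the hypothesis of Lemma~\ref{theorem:topo_bound_rel} refers to $\dtmcRestr{C_j}[\xbf^{res}]$, the \emph{final} vector. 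I would resolve this by the topological-order invariant: when $C_j$ is processed, all SCCs $C_i$ with $C_i \hookrightarrow C_j$ (equivalently, all predecessor states that can reach $C_j$ in one step) satisfy $i < j$ and have therefore already received their final values in Line~\ref{alg:topo_exact:line:update}. Since the parametric restriction $\dtmcRestr{C_j}[\xbf]$ depends on $\xbf$ only through entries of states in strictly-earlier SCCs (those feeding into $C_j$), and the loop never overwrites entries of SCCs indexed below $j$ after they are set, we have $\dtmcRestr{C_j}[\xbf] = \dtmcRestr{C_j}[\xbf^{res}]$ at the moment $C_j$ is processed. Hence the $\delta$-approximation computed in Line~\ref{alg:topo_exact:line:compute} is exactly a $\delta$-approximation of the EVTs of $\dtmcRestr{C_j}[\xbf^{res}]$, establishing the hypothesis.

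To make the ``depends only on strictly-earlier SCCs'' claim rigorous, I would appeal to the definition of the parametric SCC restriction (Definition~\ref{definition:param_SCC_restriction}): the initial value assigned to $s \in C_j$ is $\iotainit^{\dtmc}(s) + \sum_{s' \in S \setminus C_j} \Pbf^{\dtmc}(s',s)\cdot \xbf(s')$, and $\Pbf^{\dtmc}(s',s) > 0$ with $s' \notin C_j$ forces $s'$ to lie in some SCC $C'$ with $C' \hookrightarrow C_j$, hence in a strictly-earlier SCC by the indexing assumption. Therefore the only $\xbf$-entries consulted when building $\dtmcRestr{C_j}[\xbf]$ are those of states in $C_1,\dots,C_{j-1}$, all of which are already at their final values. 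This closes the gap between the processing-time vector and $\xbf^{res}$.

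Finally, I would assemble the pieces: the per-SCC invariant just established verifies the hypothesis of Lemma~\ref{theorem:topo_bound_rel} with error parameter $\delta$, so
\begin{align*}
    \diff{rel}\bigl(\xbf^{res}, (\EXPstateMC{\vt_s}{}{\dtmc})_{s \in \statesTr}\bigr) \leq (1+\delta)^{L+1} - 1 = \Bigl(\sqrt[L+1]{1+\epsilon}\Bigr)^{L+1} - 1 = \epsilon,
\end{align*}
where the choice $\delta = \sqrt[L+1]{1+\epsilon} - 1$ from the algorithm is precisely what makes the propagated error telescope back to $\epsilon$. One edge case worth a remark is $L = 0$ (no non-bottom SCC chains of positive length, i.e.\ all non-bottom SCCs are incomparable), where $\delta = \epsilon$ and the bound degenerates gracefully; the $\epsilon = 0$ case is likewise handled since then $\delta = 0$ and Line~\ref{alg:topo_exact:line:compute} must solve exactly. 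I anticipate that the only genuinely delicate point is the bookkeeping argument in the second paragraph — confirming that processing-time and final values coincide on the relevant entries — while the rest is a direct substitution into the already-proven lemma.
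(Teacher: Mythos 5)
Your proof is correct and takes essentially the same route as the paper's: termination is immediate, and the bound follows by applying \Cref{theorem:topo_bound_rel} with per-SCC tolerance $\delta = \sqrt[L+1]{1+\epsilon}-1$, which telescopes to $(1+\delta)^{L+1}-1 = \epsilon$. Your middle paragraphs --- using the topological indexing and \Cref{definition:param_SCC_restriction} to show that $\dtmcRestr{C_j}[\xbf]$ at processing time coincides with $\dtmcRestr{C_j}[\xbf^{res}]$, so the lemma's hypothesis genuinely holds for the \emph{final} vector --- spell out a bookkeeping step that the paper's proof leaves implicit, which is a refinement of the same argument rather than a different approach.
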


\section{Stationary Distributions via EVTs}\label{sec:stationary}
We show that EVTs can be used to determine \emph{sound} approximations of the stationary distribution $\stst{\dtmc}$ of a (reducible) DTMC $\dtmc$. 
It is known that $\stst{\dtmc}$ can be computed as follows (see, e.g.,~\cite[Thm.~4.23]{Kul20}). 
For each BSCC $B$:
\begin{itemize}
    \item Compute the reachability probability $\PRstateMC{}{\dtmc}(\event B)$ to $B$. 
    \item Determine the stationary distribution $\stst{\dtmcRestr{B}}$ of the DTMC restricted to $B$.
\end{itemize}
Then, we obtain the stationary distribution for the recurrent states $s$ in the BSCC $B$: 
$\stst{\dtmc}(s) = \PRstateMC{}{\dtmc}(\event B) \cdot \stst{\dtmcRestr{B}}(s)$. 
For the remaining transient states, we have $\stst{\dtmc}(s) = 0$. 
We show how both, $\PRstateMC{}{\dtmc}(\event B)$ and $\stst{\dtmcRestr{B}}$, can be computed efficiently for every BSCC $B$ using EVTs.
We also elaborate on how relative errors propagate through the computation, allowing us to derive sound lower- and upper bounds for the stationary distribution $\stst{\dtmc}$.
Omitted proofs are in~\Cref{ap:stationary}. % the extended report \cite{extendedArxiv}. %(Aubsection A.3) \Cref{ap:stationary}

\paragraph{Computing BSCC reachability probabilities.}\label{sec:reachability_probabilities}
The \emph{absorption probabilities}, i.e., the probability of reaching a singleton BSCC can be computed using EVTs~\cite{KS76}. 
By collapsing the BSCCs into a single state, a slightly generalised result is obtained: 
\begin{theorem}[{\cite[Thm.~3.3.7]{KS76}}]\label{theorem:EVTs_absorPr}
    For any BSCC $B$ of a DTMC $\dtmc$ it holds that
    $\PRstateMC{}{}(\event B) =  \sum_{s \in B}  \left(\iotainit(s) + \sum_{s' \in \statesTr} \Pbf(s', s)  \cdot \EXPstateMC{\vt_{s'}}{}{} \right)$.
\end{theorem}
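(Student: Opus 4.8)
The plan is to express $\PRstateMC{}{}(\event B)$ as the expectation of a random variable that counts the \emph{single} entry into $B$, and then to evaluate that expectation using the definition of EVTs from \Cref{definition:EVTs}. The crucial structural fact is that $B$, being a bottom SCC, is never left once entered, so every path enters $B$ at most once. Moreover, the only states from which a transition into $B$ is possible are the transient states: a state in another BSCC cannot reach $B$ at all, and a recurrent state that can reach $B$ must already lie in $B$.

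First I would introduce the counting random variable
\begin{align*}
  \mathsf{enter}_B(\pi) = \iverson{\pi[0] \in B} + \sum_{k=0}^{\infty} \iverson{\pi[k] \in \statesTr \wedge \pi[k+1] \in B},
\end{align*}
which records whether the path starts inside $B$, plus the number of transitions from a transient state into $B$. The key step is to argue that $\mathsf{enter}_B = \iverson{\event B}$ for almost every path $\pi$: if $\pi$ never reaches $B$, both sides are $0$; if $\pi[0] \in B$, then $\pi$ stays in $B$ forever so the sum vanishes and both sides equal $1$; and if $\pi$ starts outside $B$ but reaches it, then by the BSCC property it crosses into $B$ exactly once, necessarily from a transient predecessor, so the sum contributes exactly $1$. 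This yields $\PRstateMC{}{}(\event B) = \EXP{\mathsf{enter}_B}$.

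It then remains to compute $\EXP{\mathsf{enter}_B}$. By linearity of expectation, justified by monotone convergence as all summands are non-negative, the first indicator contributes $\sum_{s \in B} \iotainit(s)$, while the infinite sum contributes
\begin{align*}
  \sum_{k=0}^{\infty} \sum_{s' \in \statesTr} \sum_{s \in B} \PRstateMC{}{}(\pi[k] = s') \cdot \Pbf(s', s),
\end{align*}
using the Markov property to factor the one-step transition $\Pbf(s',s)$. Reordering the non-negative sums, the inner sum over $k$ becomes $\sum_{k=0}^{\infty} \PRstateMC{}{}(\pi[k] = s') = \EXPstateMC{\vt_{s'}}{}{}$ directly by \Cref{definition:EVTs}. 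Collecting both contributions gives exactly $\sum_{s \in B}\bigl(\iotainit(s) + \sum_{s' \in \statesTr} \Pbf(s',s)\cdot \EXPstateMC{\vt_{s'}}{}{}\bigr)$, as claimed.

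The hard part will be the identity $\mathsf{enter}_B = \iverson{\event B}$ almost surely. It rests on two facts that must be spelled out carefully: (i) $B$ is absorbing as a set, so each path makes at most one boundary crossing into $B$, and (ii) every such crossing originates from a transient state, so the index set $\statesTr$ in the sum is precisely right and nothing is missed or double-counted. Both follow from the definition of a BSCC together with the characterization of recurrent states as exactly the BSCC states, but they are what makes the counting variable coincide with the reachability indicator rather than overcounting repeated visits.
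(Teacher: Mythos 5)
Your proof is correct, but it takes a genuinely different route from the paper. The paper does not prove \Cref{theorem:EVTs_absorPr} from scratch --- it cites Kemeny and Snell --- and the closest in-paper argument is the proof of its generalization, \Cref{theorem:EVTs_constrPr} in the appendix, which is linear-algebraic: it identifies EVTs with the entries of the fundamental matrix $(\Ibf-\Abf)^{-1}$ via \cite[Thm.~3.2.4]{KS76}, invokes the equation-system characterization of reachability probabilities from \cite[Thm.~10.15]{BK08}, and combines the two by matrix multiplication. You instead argue purely probabilistically: the one-crossing variable $\mathsf{enter}_B$ coincides path-wise with $\iverson{\event B}$ because a BSCC, once entered, is never left and can only be entered from a transient state, and Tonelli plus the Markov property turn its expectation into $\sum_{s\in B}\bigl(\iotainit(s) + \sum_{s'\in\statesTr}\Pbf(s',s)\cdot\EXP{\vt_{s'}}\bigr)$, using nothing beyond \Cref{definition:EVTs}. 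Your approach buys self-containedness --- no fundamental matrix, no invertibility or Neumann-series argument, no external characterization of reachability --- and your crossing identity in fact holds \emph{surely}, not just almost surely, since paths in $\paths{\dtmc}$ by definition follow only positive-probability transitions (so the predecessor of the first $B$-state is forced to be transient). The paper's matrix route, in turn, meshes with the computational machinery reused throughout the appendix (equation systems, Neumann series) and directly delivers the constrained-reachability generalization needed for the topological algorithm; it is worth noting that your counting argument would also extend to that lemma, since an SCC, once left, is never re-entered, so the same one-crossing reasoning applies.
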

Applying \Cref{theorem:EVTs_absorPr}, we can compute the EVTs \emph{once} to derive the reachability probabilities for \emph{every} BSCC.
Further, when using interval iteration from \Cref{sec:II} to obtain $\xbf \in \evtsdomain{S}$ with $\diff{rel}\left(\xbf, (\EXPstateMC{\vt_s}{}{})_{s \in S} \right) \leq \epsilon$ for some $\epsilon \in (0,1)$, the relative error does not increase when deriving the reachability probabilities, i.e.,
$      \diff{rel}\left(\sum_{s \in B}  \left(\iotainit(s) + \sum_{s' \in \statesTr} \Pbf(s', s)  \cdot \xbf(s) \right),\PRstateMC{}{}(\event B) \right)  
      \leq \epsilon.
$

\paragraph{Computing the stationary distribution within a BSCC.}
Next, we leverage EVTs to compute the stationary distribution $\stst{\mathcal{B}}$ of an irreducible DTMC $\mathcal B = (S, \Pbf, \iotainit)$.
This method can be applied to derive the stationary distribution $\stst{\dtmcRestr{B}}$ of $\dtmcRestr{B}$ since the latter is an irreducible DTMC for each BSCC $B$. 
Let $v \in S$ be an arbitrary state. 
We construct the DTMC ${\dtmcBv{}}$ in which all incoming transitions of state $v$ are redirected to a fresh absorbing state $\hat{v}$.
Thus, its only BSCC is $\{\hat{v}\}$, all other states are transient. 
Formally, ${\dtmcBv{}} = (S \uplus \{\hat{v}\}, \hat{\Pbf}, \hatiotainit)$, where $\hat{\Pbf}(\hat{v}, \hat{v})=1$, 
$\hat{\Pbf}(s,\hat{v}) = \Pbf(s,v)$ and $\hat{\Pbf}(s,v) = 0$ for all $s \in S$, $\hat{\Pbf}(s,t) = \Pbf(s,t)$ for all $s \in S, t \in S \setminus \{v\}$, and $\hatiotainit(v)= 1$.
\begin{restatable}{theorem}{stationaryViaEVTs}\label{theorem:stationary_via_EVTs}
    The stationary distribution $\stst{\mathcal{B}}$ of an irreducible DTMC $\mathcal{B}$ is given by $\stst{\mathcal{B}}(s) = \frac{\EXPstateMC{\vt_s}{}{{\dtmcBv{}}}}{\sum_{t\in S} \EXPstateMC{\vt_t}{}{{\dtmcBv{}}}}$.
    Further, if $\diff{rel}\left(\xbf, (\EXPstateMC{\vt_s}{}{{\dtmcBv{}}})_{s \in S} 
      \right) \leq \epsilon$ for $\xbf \in \evtsdomain{S}$ and $\epsilon \in (0,1)$, then 
$      \diff{rel}\left(\tfrac{\xbf}{\sum_{s\in S} \xbf(s)}, \stst{\mathcal{B}}\right)  
      \leq \frac{2\epsilon}{1-\epsilon}
$.
\end{restatable}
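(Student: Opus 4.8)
The plan is to establish the two assertions separately. For the closed form, set $\theta(s) := \EXPstateMC{\vt_s}{}{\dtmcBv{}}$ for $s \in S$; I would show that the normalised vector $\pi := \theta/\sum_{t\in S}\theta(t)$ satisfies both defining conditions of the stationary distribution of the irreducible DTMC $\mathcal B$, namely $\pi = \pi\Pbf$ and $\sum_{s}\pi(s)=1$, and then invoke uniqueness (\cite[Thm.~4.18]{Kul20}) to conclude $\pi = \stst{\mathcal B}$. Well-definedness comes first: since $\mathcal B$ is finite and irreducible, every $s\in S$ is reachable from $v$ in $\dtmcBv{}$ (a shortest path from $v$ to $s$ does not revisit $v$, hence survives the redirection), so by the discussion after \Cref{theorem:expected_visiting_times_LEQ} each $\theta(s)$ is finite and strictly positive and $\sum_t\theta(t)\in(0,\infty)$.

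The heart of the argument is the balance equation $\theta = \theta\Pbf$. In $\dtmcBv{}$ all states of $S$ are transient while $\{\hat v\}$ is the unique BSCC, so \Cref{theorem:expected_visiting_times_LEQ} applies with $\statesTr = S$ and $\hatiotainit = \iverson{\cdot=v}$, yielding $\theta(s) = \iverson{s=v} + \sum_{t\in S}\hat{\Pbf}(t,s)\,\theta(t)$ for every $s\in S$. For $s\neq v$ we have $\hat{\Pbf}(t,s)=\Pbf(t,s)$ and $\iverson{s=v}=0$, which is precisely $\theta(s)=(\theta\Pbf)(s)$. For $s=v$, the redirection forces $\hat{\Pbf}(t,v)=0$, so the equation only gives $\theta(v)=1$; it therefore remains to verify $(\theta\Pbf)(v)=\sum_{t\in S}\Pbf(t,v)\,\theta(t)=1$. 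I would obtain this from \Cref{theorem:EVTs_absorPr} applied to the BSCC $B=\{\hat v\}$: because $\hat{\Pbf}(t,\hat v)=\Pbf(t,v)$ and absorption in $\hat v$ is almost sure (in recurrent $\mathcal B$ the walk returns to $v$ a.s., which in $\dtmcBv{}$ means reaching $\hat v$), we get $\sum_{t\in S}\Pbf(t,v)\,\theta(t)=\PRstateMC{}{\dtmcBv{}}(\event\{\hat v\})=1=\theta(v)$. This closes the balance equation at $v$ and completes the first part.

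The error bound is then a short arithmetic estimate. Writing $\Sigma_\theta=\sum_s\theta(s)$ and $\Sigma_x=\sum_s\xbf(s)$, the hypothesis $\diff{rel}(\xbf,\theta)\le\epsilon$ gives $\xbf(s)/\theta(s)\in[1-\epsilon,1+\epsilon]$ for each $s$, and summing these bounds (using $\theta(s)>0$) yields $\Sigma_x/\Sigma_\theta\in[1-\epsilon,1+\epsilon]$. Hence the componentwise relative deviation of the normalised vectors satisfies
\begin{align*}
\frac{\xbf(s)/\Sigma_x}{\theta(s)/\Sigma_\theta}-1
= \frac{\xbf(s)}{\theta(s)}\cdot\frac{\Sigma_\theta}{\Sigma_x}-1
\in \left[\frac{1-\epsilon}{1+\epsilon}-1,\ \frac{1+\epsilon}{1-\epsilon}-1\right]
= \left[\frac{-2\epsilon}{1+\epsilon},\ \frac{2\epsilon}{1-\epsilon}\right],
\end{align*}
and since $\tfrac{2\epsilon}{1-\epsilon}\ge\tfrac{2\epsilon}{1+\epsilon}$ for $\epsilon\in(0,1)$, taking the maximum over $s$ gives $\diff{rel}(\xbf/\Sigma_x,\stst{\mathcal B})\le\tfrac{2\epsilon}{1-\epsilon}$.

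The step I expect to be the main obstacle is the balance equation at the reference state $v$: the EVT system by itself returns only the trivial identity $\theta(v)=1$, because all mass that would re-enter $v$ has been diverted to $\hat v$. One genuinely needs the independent fact that this diverted flow totals exactly $1$, and linking it to \Cref{theorem:EVTs_absorPr} together with almost-sure absorption (which relies on finiteness and irreducibility of $\mathcal B$) is the one non-bookkeeping ingredient. Everything else---well-definedness of $\pi$ and the relative-error propagation---is elementary once the strict positivity and finiteness of the $\theta(s)$ are in hand.
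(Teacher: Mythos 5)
Your proof is correct and takes essentially the same route as the paper's own argument: you derive the left-eigenvector property of $(\EXPstateMC{\vt_s}{}{\dtmcBv{}})_{s\in S}$ from the EVT equation system (\Cref{theorem:expected_visiting_times_LEQ}) for $s \neq v$, close the balance equation at $v$ via \Cref{theorem:EVTs_absorPr} together with almost-sure absorption in $\hat v$, invoke uniqueness of the stationary distribution for irreducible chains, and then propagate relative errors through the sum and the quotient to get $\tfrac{2\epsilon}{1-\epsilon}$. The only (welcome) addition is your explicit justification that the EVTs are finite and strictly positive, which the paper leaves implicit.
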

\noindent The first part of~\Cref{theorem:stationary_via_EVTs} can also be established by considering the renewal processes embedded in $\mathcal{B}$ (see, e.g.,~\cite[Theorem 2.2.3]{Vol05}).

\paragraph{Combining both steps}. \Cref{theorem:EVTs_absorPr,theorem:stationary_via_EVTs} and the interval iteration method yield approximations $p_B$ and $d_s$ for $\PRstateMC{}{\dtmc}(\event B)$ and $\stst{\dtmcRestr{B}}(s)$, respectively, where $B$ is a BSCC $s \in B$, and $\epsilon_1, \epsilon_2 \in (0,1)$ such that
$\diff{rel}\left(p_B, \PRstateMC{}{\dtmc}(\event B)\right) \le \epsilon_1$ and $\diff{rel}\left(d_s, \stst{\dtmcRestr{B}}(s)\right) \le \epsilon_2$.
The product $p_B \cdot d_s$ approximates $\stst{\dtmc}(s) = \PRstateMC{}{\dtmc}(\event B) \cdot \stst{\dtmcRestr{B}}(s)$ such that
$\diff{rel}\left(p_B \cdot d_s, \stst{\dtmc}(s)\right) \le \epsilon_1 + \epsilon_2 + \epsilon_1  \epsilon_2$.\newline %wrapfig below needs this
\begin{wrapfigure}[9]{r}{0.38\linewidth}
%    \vspace{-0.2cm}
    \raggedleft
    \begin{tikzpicture}[node distance=10mm and 12.5mm]
        \node[state,label={below,red:$\tfrac 5 3$}] (s5) {$s_5$};
        \node[state,initial, initial where=right, initial text=$1$,label={below,red:$1$},right=of s5] (s6) {$s_6$};
        \node[state,label={below,red:$\infty$},right=of s6] (s6hat) {$\hat{s_6}$};
        
        \path[->] (s5) edge[loop above] node[auto] {$0.4$} (s5);
        \path[->] (s5) edge[bend left] node[auto] {$0.6$} (s6hat);
        \path[->] (s6) edge[bend left] node[above] {$1$} (s5);
        \path[->] (s6hat) edge[loop above] node[above] {$1$} (s6hat);
    \end{tikzpicture}
    \caption{The DTMC $\mathcal{B}{\Rsh}^{\hat{s_6}}$.}
    \label{fig:stationary}
\end{wrapfigure}

\begin{example}
    We compute the stationary distribution of the running example DTMC $\dtmc$ from \Cref{fig:runningExample}.
    Its only non-trivial BSCC is $\mathcal B = \dtmcRestr{\{s_5,s_6\}}$.
    The DTMC $\mathcal{B}{\Rsh}^{\hat{s_6}}$ along with its (exact) EVTs is depicted in \Cref{fig:stationary}.
    We conclude that $\stst{\mathcal B}$ is proportional to $(\tfrac{3}{5}, 1)$, i.e., $\stst{\mathcal B} = (\tfrac{5}{8}, \tfrac{3}{8})$.
    Since the two BSCCs $\{s_5, s_6\}$ and $\{s_7\}$ are both reached with probability $\tfrac 1 2$, it follows that the stationary probabilities of the three recurrent states $s_5, s_6, s_7$ are $\tfrac{5}{16}, \tfrac{3}{16}$, and $\tfrac 1 2$, respectively.
\end{example}

\section{Conditional Expected Rewards}\label{sec:conditional_rew}

\newcommand{\CEXPstateMCcond}[4]{\ensuremath{\mathbb{E}^{#3}_{#2}[#1 | #4]}}
\newcommand{\rhoinit}{\rho_{\text{init}}}

\Cref{theorem:EVTs_absorPr} states that the EVTs of the transient states of a DTMC $\dtmc$ can be used to compute the probability to reach each individual BSCC of $\dtmc$.
We now generalize this result and show that the EVTs can also be used to compute the \emph{total expected rewards conditioned on reaching each BSCC}.

The total expected reward conditioned on reaching a set $T \subseteq S$ of states with $\PRstateMC{}{\dtmc}(\event T) > 0$ is defined as:
\begin{align*}
    \CEXPstateMCcond{\tr_\rew}{}{\dtmc}{\event T}
    =
    \frac{1}{\PRstateMC{}{\dtmc}(\event T)} \int_{\event T} \tr_{\rew} \,d \PRstateMC{}{\dtmc}
\end{align*}
Our next result asserts that given the EVTs of $\dtmc$, all the values $\{\CEXPstateMCcond{\tr_\rew}{}{\dtmc}{\event B} \mid B \text{ a BSCC of } \dtmc \}$ can be computed by solving a \emph{single} linear equation system (the standard approach is to solve one linear equation system \emph{per BSCC}~\cite[Ch.~10]{BK08}).
For simplicity, we state the result only for BSCCs $\{r\}$ with a single absorbing state $r$, and for reward functions that assign zero reward to all recurrent (BSCC) states;
this is w.l.o.g.\ as larger BSCCs can be collapsed, and positive reward in a (reachable) BSCC causes the conditional expected reward \wrt{} this BSCC to be $\infty$, rendering numeric computations unnecessary. 
\begin{restatable}{theorem}{condExpRew}\label{theorem:cond_exp_rew}
    Let $\rew \colon S \to \Rationals_{\geq 0}$ with $\rew(\statesRe) = \{0\}$.
    Then the equation system
    \begin{align*}
        \forall s \in \statesTr \colon
        \qquad
        \ybf(s) = \rew(s) \cdot \EXPstateMC{\vt_s}{}{\dtmc} + \sum_{t \in \statesTr} \Pbf(t, s) \cdot \ybf(t)
    \end{align*}
    has a unique solution $(\ybf(s))_{s \in \statesTr}$ and for all absorbing $r \in S$, $\PRstateMC{}{\dtmc}(\event \{r\}) > 0$, 
    \begin{align*}
        \CEXPstateMCcond{\tr_\rew}{}{\dtmc}{\event \{r\}}
        =
        \frac{\sum_{t \in \statesTr} \Pbf(t,r) \cdot \ybf(t)}{\iotainit(r) + \sum_{t \in \statesTr} \Pbf(t,r) \cdot \EXPstateMC{\vt_t}{}{\dtmc}}
        ~.
    \end{align*}
\end{restatable}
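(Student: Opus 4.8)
The plan is to prove the two assertions separately: uniqueness of the solution, and the closed form for the conditional reward. For uniqueness, I would note that the equation system has \emph{exactly} the same coefficient matrix as the EVT system of \Cref{theorem:expected_visiting_times_LEQ}. Writing $\Pbf_{\statesTr}$ for the restriction of $\Pbf$ to the transient states, both systems read $\ybf = \bbf + \Pbf_{\statesTr}^{\top}\ybf$ and differ only in the source vector: here $\bbf(s) = \rew(s)\cdot\EXPstateMC{\vt_s}{}{\dtmc}$ instead of $\iotainit(s)$. Since $\Ibf-\Pbf_{\statesTr}$ is invertible on the transient states — its inverse is the fundamental matrix $N$ with $N(s,t) = \EXPstateMC{\vt_t}{s}{\dtmc}$ — so is $\Ibf-\Pbf_{\statesTr}^{\top}$, whence the system has the unique (finite) solution $\ybf = (\Ibf-\Pbf_{\statesTr}^{\top})^{-1}\bbf$. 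Reading off coordinates and using the fundamental-matrix interpretation gives the explicit form
\[
  \ybf(t) = \sum_{s\in\statesTr}\EXPstateMC{\vt_t}{s}{\dtmc}\cdot\rew(s)\cdot\EXPstateMC{\vt_s}{}{\dtmc} ~,
\]
which I will use below.

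For the formula I would treat the denominator and numerator of the claimed fraction independently. The denominator is immediate: \Cref{theorem:EVTs_absorPr} applied to the singleton BSCC $B=\{r\}$ gives $\iotainit(r) + \sum_{t\in\statesTr}\Pbf(t,r)\EXPstateMC{\vt_t}{}{\dtmc} = \PRstateMC{}{\dtmc}(\event\{r\})$, which is positive by assumption. For the numerator I would substitute the explicit form of $\ybf$ and then reapply \Cref{theorem:EVTs_absorPr}, this time under the point-mass initial distribution, i.e.\ under $\PRstateMC{s}{}$:
\[
  \sum_{t\in\statesTr}\Pbf(t,r)\,\ybf(t)
  = \sum_{s\in\statesTr}\rew(s)\,\EXPstateMC{\vt_s}{}{\dtmc}\sum_{t\in\statesTr}\Pbf(t,r)\,\EXPstateMC{\vt_t}{s}{\dtmc}
  = \sum_{s\in\statesTr}\rew(s)\,\EXPstateMC{\vt_s}{}{\dtmc}\cdot\PRstateMC{s}{}(\event\{r\}) ~,
\]
where the last equality is \Cref{theorem:EVTs_absorPr} started in $s$ (the term $\delta_s(r)$ vanishes since $s$ is transient and $r$ is not).

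It then remains to identify $\sum_{s\in\statesTr}\rew(s)\,\EXPstateMC{\vt_s}{}{\dtmc}\,\PRstateMC{s}{}(\event\{r\})$ with the unnormalized conditional reward $\int_{\event\{r\}}\tr_{\rew}\,d\PRstateMC{}{\dtmc}$; this is the step I expect to be the main obstacle, as it is the only genuinely measure-theoretic part. Since $\tr_{\rew}\geq 0$ and $\rew$ vanishes on recurrent states, Tonelli's theorem lets me expand $\int_{\event\{r\}}\tr_{\rew}\,d\PRstateMC{}{\dtmc} = \sum_{k\geq 0}\sum_{s\in\statesTr}\rew(s)\,\PRstateMC{}{\dtmc}(\eventBound{k}\{s\}\cap\event\{r\})$. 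The key observation is that whenever $\pi[k]=s$ with $s$ transient, the absorbing target $r$ cannot have been reached before step $k$ (otherwise the path would already sit in $r\neq s$); hence on $\eventBound{k}\{s\}$ the event $\event\{r\}$ amounts to reaching $r$ from time $k$ onward, and the Markov property yields $\PRstateMC{}{\dtmc}(\eventBound{k}\{s\}\cap\event\{r\}) = \PRstateMC{}{\dtmc}(\eventBound{k}\{s\})\cdot\PRstateMC{s}{}(\event\{r\})$. Summing over $k$ turns $\sum_{k}\PRstateMC{}{\dtmc}(\eventBound{k}\{s\})$ into $\EXPstateMC{\vt_s}{}{\dtmc}$ by \Cref{definition:EVTs}, giving the desired identity. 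Dividing the numerator by the denominator and recalling $\CEXPstateMCcond{\tr_\rew}{}{\dtmc}{\event\{r\}} = \tfrac{1}{\PRstateMC{}{\dtmc}(\event\{r\})}\int_{\event\{r\}}\tr_{\rew}\,d\PRstateMC{}{\dtmc}$ then finishes the proof.
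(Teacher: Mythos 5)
Your proof is correct, and it takes a genuinely different --- and in fact stronger --- route than the paper's. The paper first proves the statement for the unit-reward case $\rew(\statesTr)=\{1\}$ by expanding $\int_{\event \{r\}} \tr_{\rew}\,d\PRstateMC{}{\dtmc}$ into the series $\sum_{k\geq 1} k\cdot(\tauinit \cdot \Qbf^{k-1}\Rbf)_r$ and massaging it into the equation system $\ybf = (\EXPstateMC{\vt_t}{}{\dtmc})_{t\in\statesTr} + \ybf\cdot\Qbf$; it then bootstraps to general rewards by induction on $\max_{s}\rew(s)$ over positive integers (splitting a reward-$n$ state into a fresh reward-$1$ predecessor and a reward-$(n{-}1)$ state), a state-elimination argument for reward-$0$ states, and finally scaling by a common denominator for rational rewards --- this bootstrapping chain is precisely why the theorem hypothesizes $\rew \colon S \to \Rationals_{\geq 0}$, as the paper itself remarks. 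You instead prove the central identity $\int_{\event\{r\}}\tr_{\rew}\,d\PRstateMC{}{\dtmc} = \sum_{s\in\statesTr}\rew(s)\cdot\EXPstateMC{\vt_s}{}{\dtmc}\cdot\PRstateMC{s}{}(\event\{r\})$ directly via Tonelli and the Markov property, resting on the correct key observation that the absorbing state $r$ cannot be visited strictly before a visit to a transient state, so that $\PRstateMC{}{\dtmc}(\eventBound{k}\{s\}\cap\event\{r\}) = \PRstateMC{}{\dtmc}(\eventBound{k}\{s\})\cdot\PRstateMC{s}{}(\event\{r\})$; combined with your fundamental-matrix formula $\ybf(t)=\sum_{s\in\statesTr}\EXPstateMC{\vt_t}{s}{\dtmc}\cdot\rew(s)\cdot\EXPstateMC{\vt_s}{}{\dtmc}$ and a second application of \Cref{theorem:EVTs_absorPr} under point-mass initial distributions (where you rightly note the $\iotainit(r)$ term vanishes), this yields the numerator in one step, while your uniqueness argument and your treatment of the denominator coincide with the paper's. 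What your route buys: it handles arbitrary rewards in $\rnonneg$ uniformly, so the rationality hypothesis becomes superfluous and the three separate model-surgery constructions (state splitting, elimination, scaling) are avoided entirely. What the paper's route buys: it stays within linear-algebraic manipulations of $\Qbf$-series and graph transformations reused elsewhere in the paper, invoking no probabilistic machinery beyond \Cref{theorem:expected_visiting_times_LEQ} and \Cref{theorem:EVTs_absorPr}.
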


\noindent
\Cref{theorem:cond_exp_rew} assumes rational rewards as required in our proof in~\Cref{ap:conditional_reward}. % \cite{extendedArxiv}

\section{Experimental Evaluation}\label{sec:experiments}

\paragraph{Implementation details.}
We integrated the presented algorithms for EVTs and stationary distributions in the model checker \storm~\cite{STORM}. 
The implementation is part of \storm's main release available at \url{https://stormchecker.org}. 
It uses explicit data structures such as sparse matrices and vectors. 
When computing EVTs, we can use value iteration (\method{VI}) or interval iteration (\method{II}) as presented in \Cref{algorithm:VI,alg:II}.
Alternatively, the corresponding linear equation systems can be solved using \method{LU} factorization (a direct method implemented in the \tool{Eigen} library~\cite{eigenweb}) or \method{gmres} (a numerical method implemented in \tool{gmm++}~\cite{getfemweb}).
Each EVT approach can be used in combination with the topological algorithm (\method{topo})  from \Cref{sec:topological_algorithms}.
We use double precision floating point numbers.
For \method{II}, the propagation of (relative) errors is respected in a way that the error of the end result does not exceed a user-defined threshold (here: $\epsilon = 10^{-3}$).
Implementing \method{II} with safe rounding modes as in~\cite{Har22} is left for future work.
The methods \method{gmres} and \method{VI} are  configured with a fixed relative precision parameter (here: $\epsilon = 10^{-6}$).
For \method{LU}, floating point errors are the only source of inaccuracies.
We also consider an exact configuration \method{LU$^X$} that uses rational arithmetic instead of floats.

Stationary distributions can be computed in \storm{} using the approaches \method{Classic}, \method{EVTreach}, and \method{EVTfull}.
The \method{Classic} approach computes each BSCC reachability probability separately and the stationary distributions within the BSCCs are computed using the standard equation system~\cite[Thm.~4.18]{Kul20}. 
\method{EVTreach} and \method{EVTfull} implement our approaches from \Cref{sec:stationary}, where \method{EVTreach} only considers EVTs for  BSCC reachability and \method{EVTfull} also derives the BSCC distributions from EVTs.
As for EVTs, we use \method{LU$^{(X)}$} or \method{gmres} to solve linear equation systems.  For the BSCC reachability probabilities, a topological algorithm can be enabled as well.
Using \method{EVTfull} with \method{II} yields sound approximations.

\paragraph{Experimental setup.}
The experiments ran on an Intel$^{\mbox{\scriptsize\textregistered}}$ Xeon$^{\mbox{\scriptsize\textregistered}}$ Platinum 8160 Processor limited to 4 cores and 12 GB of memory. The time timeout was set to 30 minutes. Our implementation does not use multi-threading.

%/newpage (% for arxiv)
\subsection{Verifying the Fast Dice Roller}
\begin{wrapfigure}[16]{r}{0.475\linewidth}
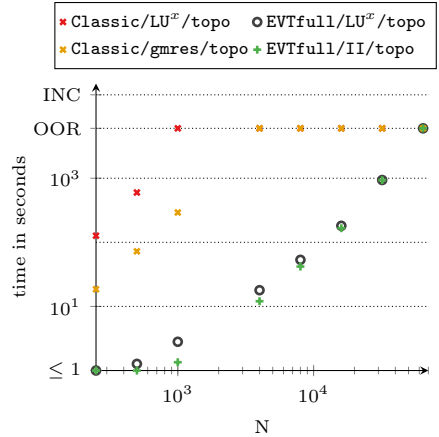
 %The [number] represents the number of lines you want to wrap around.
    \centering
	\vspace{-0.8cm}
	\setlength{\numberruntimeplotheight}{0.9\scatterplotsize}
	\renewcommand{\legendcols}{2}
	\renewcommand{\legendstyle}{at={(0.45,1.05)},anchor=south}
	{\numberruntimeplot{stationary_csv/lumbroso/wallclock-time-1800-INC-relative-error-max-norm-value-texpgf-scatter.csv"}%
	{
	storm.exact.topological-classic-sparselu.ignored,
	storm.exact.topological-evt-e-sparselu.ignored,
	storm.sparse-rel.topological-classic-g-gmres.1e-06,
	storm.sound-rel.topological-evt-n-ii.0.001}
	%storm.sparse.topological-classic-sparselu.ignored}
	%storm.sparse-rel.topological-eqsys-g-gmres.1e-06,
	%storm.sparse-rel.topological-evt-g-gmres.1e-06}%
	{
	{\method{Classic/LU$^x$/topo}}, 
	{\method{EVTfull/LU$^x$/topo}}, 
	{\method{Classic/gmres/topo}},
	{\method{EVTfull/II/topo}}}
	{N}
	{time in seconds}{Parameters}{70000}%64000 = max N
	}
	\caption{Fast Dice Roller Results.}\label{fig:runtime_lumbroso}
\end{wrapfigure}
Recall Lumbroso's Fast Dice Roller~\cite{Lum13} from \Cref{sec:intro}.
For a given parameter $N\ge1$, we verify that the resulting distribution is indeed uniform by computing the stationary distribution of the corresponding DTMC which, for this model family, coincides with the individual BSCC reachability probabilities as each BSCC consists of a single state.
We conducted our experiments with an equivalent state-reduced variant of the Fast Dice Roller which we obtained automatically using the technique from~\cite{DBLP:conf/vmcai/WinklerLK22}, i.e., for every given $N$, our variant has fewer states than the original algorithm from~\cite{Lum13}.
The plot in \Cref{fig:runtime_lumbroso} shows for different values of $N$ (x-axis) the runtime (y-axis) of the approaches \method{Classic} (using \method{gmres} or \method{LU$^x$}) and \method{EVTfull} (using \method{II} or \method{LU$^x$}), all using topological algorithms. 
Our novel \method{EVTfull} approach is significantly faster than the \method{Classic} method, enabling us to verify large instances with up to 4\,800\,255 states ($N=32\,000$) within the time limit. 
In particular, we can compute the values using exact arithmetic as \method{LU$^x$} has runtimes similar to those of \method{II} in the \method{EVTfull} approach.

\subsection{Performance Comparison}
To evaluate the various approaches, we computed EVTs and stationary distributions for all applicable finite models of type DTMC or CTMC of the Quantitative Verification Benchmark Set (QVBS) \cite{Har+19}: 
We excluded 2 model families for which none of the tested parameter valuations allowed any algorithm to complete using the available resources  %  bluetooth, polling(?)
% 2 models with infinite state space.  %p53,  hill-toggle
and for the EVT computation we excluded 8 models that do not contain any transient states. %since the analysis of recurrent states is independent of the solving method, %majority , speed-ind, mapk_cascade, fms, kanban, cluster, tandem, toggle switch,
In addition to QVBS, we included Lumbroso's Fast Dice Roller~\cite{Lum13} and the handcrafted models (\method{branch} and \method{loop}) introduced in~\cite{Meg23}. 
We considered multiple parameter valuations 
% EVT: 12 DTMCs and 3 CTMCs , 
% Stationary: 12 DTMCs and 7 CTMCs
yielding a total of 62 instances (including 12 CTMCs) for computing EVTs  and 
79 instances (including 28 CTMCs) for computing the stationary distribution. 

Our experiments for stationary distributions also include the implementations of the naive and guided sampling approaches (\method{ap-naive} and \method{ap-sample}) from~\cite{Meg23} as well as the implementation of the \method{classic} approach in \prism~\cite{PRISM}\footnote{We consider the \method{explicit} engine of \prism v4.8 with the \method{Jacobi} method (default).} as external baselines.
These tools do not support the \jani models and can not compute EVTs. 

We measured the runtime of the respective computation including the time for model construction. 
In cases where the exact results are known (using exact computations via \method{LU$^X$}), we consider results as \emph{incorrect} if the relative difference to the exact value is greater than $10^{-3}$.
Results provided by the tool from~\cite{Meg23} and by \prism{} are not checked for correctness.
We set the relative termination threshold of \method{gmres} and \method{VI} to $\epsilon = 10^{-6}$ to compensate for inaccuracies of the unsound methods. 
When using \method{II} or \prism, a relative precision of $\epsilon = 10^{-3}$ was requested. 
For the implementation of~\cite{Meg23} --- which exclusively support absolute precision --- we set the threshold to $\epsilon = 10^{-3}$.
See~\Cref{app:evaluation} for more experiments. %  \cite{extendedArxiv}

\begin{figure}[t]
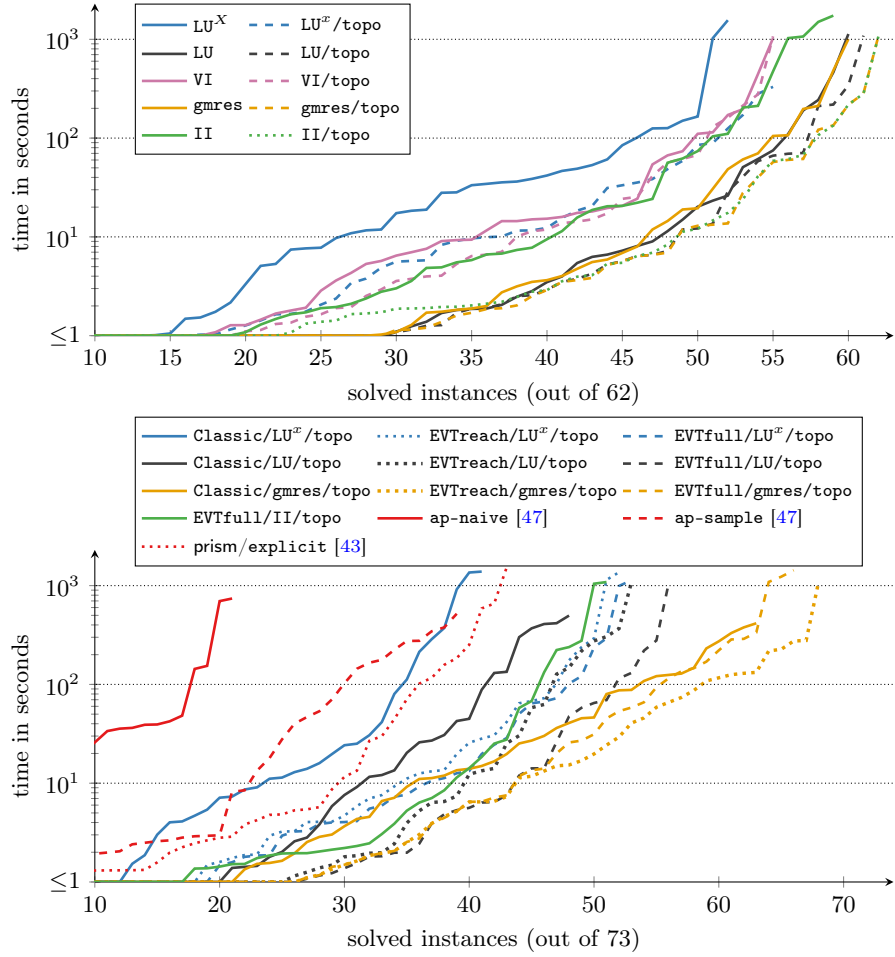

	\renewcommand{\plotwidth}{\linewidth}
	\renewcommand{\plotheight}{6cm}
	\renewcommand{\legendstyle}{at={(0.05,1)},anchor=north west}
	\renewcommand{\legendcols}{2}
	\quantileplot{evts_csv/wallclock-time-INC-relative-error-max-norm-value-quantile.csv"}%
	{storm.exact.e-sparselu.ignored,
	storm.exact.topological-e-sparselu.ignored,
	storm.sparse.e-sparselu.ignored,
	storm.sparse.topological-e-sparselu.ignored,
	storm.sparse-rel.n-power.1e-06,
	storm.sparse-rel.topological-n-power.1e-06,
	storm.sparse-rel.g-gmres.1e-06,
	storm.sparse-rel.topological-g-gmres.1e-06,
	storm.sound-rel.n-ii.0.001,
	storm.sound-rel.topological-n-ii.0.001}%
	{{\method{LU$^X$}},
	{\method{LU$^x$/topo}},
	{\method{LU}}, 
	{\method{LU/topo}},
	{\method{VI}}, 
	{\method{VI/topo}},
	{\method{gmres}}, 
	{\method{gmres/topo}},
	{\method{II}},
	{\method{II/topo}}}%
	{62}

	\renewcommand{\legendstyle}{at={(0.05,0.95)},anchor=south west}
	\renewcommand{\legendcols}{3}
	\quantileplot{stationary_csv/wallclock-time-INC-relative-error-max-norm-value-quantile.csv"}%
		{storm.exact.topological-classic-sparselu.ignored,
		storm.exact.topological-eqsys-e-sparselu.ignored,
		storm.exact.topological-evt-e-sparselu.ignored,
		storm.sparse.topological-classic-sparselu.ignored,
		storm.sparse.topological-eqsys-e-sparselu.ignored,
		storm.sparse.topological-evt-e-sparselu.ignored,
		storm.sparse-rel.topological-classic-g-gmres.1e-06,
		storm.sparse-rel.topological-eqsys-g-gmres.1e-06,
		storm.sparse-rel.topological-evt-g-gmres.1e-06,
		storm.sound-rel.topological-evt-n-ii.0.001,
		%prism.hybrid-rel.default.0.001, 
		%prism.sparse-rel.default.0.001,
		megg.default-abs.ap-naive.0.001,
		megg.default-abs.ap-sample.0.001,
		prism.explicit-rel.default.0.001}
		{{\method{Classic/LU$^x$/topo}}, 
		{\method{EVTreach/LU$^x$/topo}},
		{\method{EVTfull/LU$^x$/topo}},
		{\method{Classic/LU/topo}}, 
		{\method{EVTreach/LU/topo}}, 
		{\method{EVTfull/LU/topo}}, 
		{\method{Classic/gmres/topo}}, 
		{\method{EVTreach/gmres/topo}}, 
		{\method{EVTfull/gmres/topo}},
		{\method{EVTfull/II/topo}},
		%{Prism (Hybrid)},
		%{Prism (Sparse)}, 
		{\method{ap-naive}~\cite{Meg23}},
		{\method{ap-sample}~\cite{Meg23}},
		{\prism/\method{explicit}~\cite{PRISM}}}%
		{73}% 78 - 5 non prism files
	%\scatterplotstorm{stationary_csv/wallclock-time-1800-INC-relative-error-max-norm-value-texpgf-scatter.csv}{storm.sound-rel.topological-evt-n-ii.0.001}{{top. II (EVT)}}{storm.sparse-rel.topological-classic-g-gmres.1e-06}{top. GMRES (CLASSIC)}{true}
%	\vspace{-0.2cm}
	\caption{Runtime comparison for EVTs (top) and stationary distributions (bottom).}
	\label{fig:runtime_quantile_evt}
	\label{fig:runtime_quantile_stationary}
\end{figure}

\begin{figure}[t]
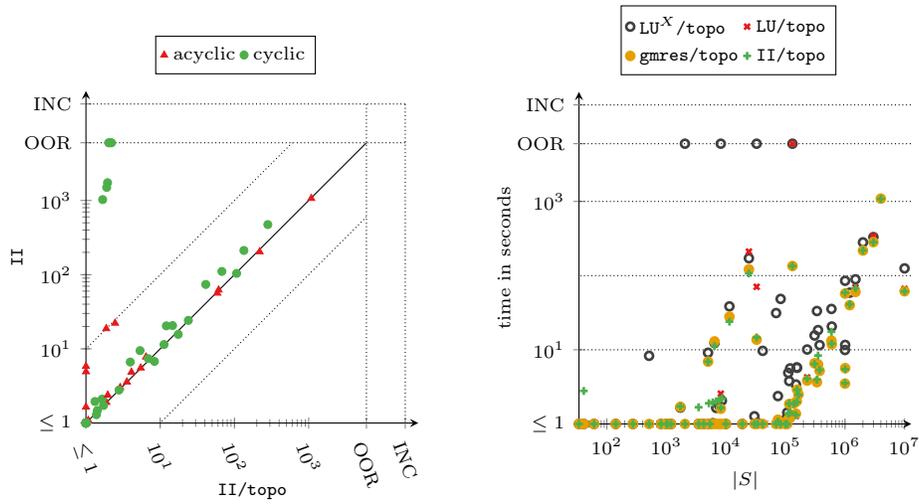

			\centering
			\renewcommand{\legendcols}{2}
			\renewcommand{\legendstyle}{at={(0.45,1.05)},anchor=south}
			\scatterplotstormcyclic{evts_csv/wallclock-time-1800-INC-relative-error-max-norm-value-texpgf-scatter.csv}
			{storm.sound-rel.topological-n-ii.0.001}{{\method{II/topo}}}
			{storm.sound-rel.n-ii.0.001}{{\method{II}}}
			{true}%
            \hfill
			\numberruntimeplot{evts_csv/wallclock-time-1800-INC-relative-error-max-norm-value-texpgf-scatter.csv"}%
				{storm.exact.topological-e-sparselu.ignored,
				storm.sparse.topological-e-sparselu.ignored,
				storm.sparse-rel.topological-g-gmres.1e-06,
				storm.sound-rel.topological-n-ii.0.001}%
				{{\method{LU$^X$/topo}},
				{\method{LU/topo}},
				{\method{gmres/topo}},
				{\method{II/topo}}}%
				{$\vert S \vert$}%
				{time in seconds}{states}{13007079}% max num of states in model
			% \numberruntimeplot{stationary_csv/wallclock-time-1800-INC-relative-error-max-norm-value-texpgf-scatter.csv"}%
			% 	{
			% 	%storm.exact.topological-classic-sparselu.ignored,
			% 	%storm.exact.topological-evt-e-sparselu.ignored}
			% 	storm.sparse-rel.topological-eqsys-g-gmres.1e-06,
			% 	storm.sparse-rel.topological-classic-g-gmres.1e-06}
			% 	%storm.sound-rel.topological-evt-n-ii.0.001
			% 	%storm.sparse.topological-classic-sparselu.ignored}
			% 	%storm.sparse-rel.topological-eqsys-g-gmres.1e-06,
			% 	{
			% 	%{\method{Classic/LU$^x$/topo}}, 
			% 	%{\method{EVTfull/LU$^x$/topo}}} 
			% 	{\method{EVTreach/gmres/topo}},
			% 	{\method{Classic/gmres/topo}}}
			% 	%{\method{EVTfull/II/topo}}}
			% 	{$\vert S \vert$}
			% 	{time in seconds}{states}{20007079}%10007079 max num of states in model
			\caption{Scatter plots showing the EVT computation runtime for standard and topological \method{II} (left) as well as the runtime of the EVT (right) approaches for different state space sizes $|S|$.}
			\label{fig:scatter_plots}
\end{figure}

\paragraph{Computing EVTs.}
The quantile plot at the top of \Cref{fig:runtime_quantile_evt} indicates the time that is required for computing the EVTs in 62 models for different approaches. 
A point at position $(x,y)$ indicates that the corresponding method solved the $x^{th}$ fastest instance in $y$ seconds, where only correctly solved instances are considered. 
The unsound methods \method{VI} and \method{gmres} produced 7 and 2 incorrect results, respectively. 
Furthermore, as errors accumulated, the topological variations of \method{VI} and \method{gmres} more frequently exceeded the threshold of $10^{-3}$. 
The variants of \method{II} always produced correct results. 

The plot indicates that (topological) \method{LU} and \method{gmres} outperform \method{VI} and \method{II} for easier instances. 
However, \method{II} catches up for the more intricate instances as it is more scalable than \method{LU} and always yields correct results. 
The exact method \method{LU$^X$} is significantly slower compared to the other methods. 
We also observe that the topological algorithms are superior to the non-topological variants. 
This is confirmed by the leftmost scatter plot in \Cref{fig:scatter_plots} which compares the topological and non-topological variant of \method{II}.
Here, each point $(x,y)$ indicates an instance for which the methods on the $x$-axis and the $y$-axis respectively required $x$ and $y$ seconds to compute the EVTs. 
Instances that contain only singleton SCCs are depicted as triangles (\mytriangle{myred}), whereas cycles (\mycircle{mygreen}) represent the remaining instances.
No incorrect results (INC) were obtained. 
The scatter plot in the middle in \Cref{fig:scatter_plots} indicates that the iteartive methods are more scalable than exact \method{LU$^X$}.
We also see that models with millions of states can be solved in reasonable time.

\paragraph{Computing stationary distributions.}
The quantile plot at the bottom of \Cref{fig:runtime_quantile_stationary} summarizes the runtimes for the different stationary distribution approaches\footnote{Six \jani{}~\cite{Bud+17} models that \method{ap-naive}, \method{ap-sample}, and \prism do not support are omitted.}.
We only consider the topological variants of the approaches of \storm{} as they were consistently faster.
No incorrect results were observed in this experiment.

The plot indicates that the guided sampling method (\method{ap-sample}) from~\cite{Meg23} and \prism perform significantly better 
than \method{ap-naive}. 
However, all algorithms provided by \storm{} \textemdash{} except for \method{Classic/LU$^x$} \textemdash{} outperform the other implementations.
For \method{LU} and \method{GMRES}, we observe that the \method{EVTreach} and \method{EVTfull} variants are significantly faster than the \method{Classic} approach. 
The \method{EVTreach} approach using \method{gmres} provides the fastest configuration, but is also the least reliable one in terms of accuracy.
For the sound methods, we observe that the \method{EVTfull} approach with \method{II} is outperformed by 
\method{LU} combined with either \method{EVTreach} or \method{EVTfull}, where the latter shows the better performance. 

\section{Related Work}\label{sec:related}

\paragraph{Computing stationary distributions.}
Other methods for the sound computation of the stationary distribution have been proposed in, e.g.,~\cite{FQ12,BF11,Bre+20}. 
In contrast to our work, they consider only subclasses of Markov chains: 
\cite{FQ12,BF11} introduce an (iterative) algorithm applicable to Markov chains with positive rows while~\cite{Bre+20} presents a technique limited to time-reversible Markov chains.
The recent approach from~\cite{Meg23} can also handle general Markov chains. 
Our technique ensures soundness with respect to both absolute and relative differences, whereas the approaches of~\cite{Meg23} only consider absolute precision. 

\paragraph{Other applications of EVTs.}
The authors of \cite{KS76} have suggested to use EVTs for the \emph{expected time to absorption}, i.e., the expected number of steps until the chain reaches an absorbing state. 
Indeed, this quantity is given by the sum of the vector $(\EXP{\vt_s})_{s \in \statesTr}$~\cite[Thm.~3.3.5]{KS76}. 
For acyclic DTMCs the EVT of a state coincides with the probability of reaching this state. %{follows directly from \Cref{theorem:EVTs_absorPr}}
This is relevant in the context of Bayesian networks~\cite{SK20,SK21} since inference queries in the network can be reduced to reachability queries by translating Bayesian networks into tree-like DTMCs. 
Existing procedures for multi-objective model checking of MDPs employ linear programming methods relying on the EVTs of state-action pairs~\cite{For+11,Ete+07,Cha+06,DKQR20}.  
EVTs are also employed in an algorithm proposed in~\cite{Ben+13} for LTL model checking of \emph{interval Markov chains}. % i.e., Markov chains with interval-valued transition probabilities. 
Moreover, EVTs have been leveraged for minimizing and learning DTMCs~\cite{Bac+17,Bac+21}. 
Further recent applications of EVTs to MDPs include verifying \emph{cause-effect dependencies}~\cite{Bai+22}, as well as an abstraction-refinement procedure that measures the importance of states based on the EVTs under a fixed policy~\cite{Jun+22}. 
\cite{DBLP:journals/corr/abs-2309-01107} employs EVTs in the context of policy iteration in reward-robust MDPs.
\section{Conclusion}\label{sec:conclusion}
We elaborated on the computation of EVTs in DTMCs and CTMCs:
The EVTs in DTMCs can be determined by solving a linear equation system, while computing EVTs in CTMCs reduces to the discrete-time setting. 
We developed an iterative algorithm based on the value iteration~\cite{Put94} algorithm lacking assurance of precision. % typically used for reachability probabilities and expected rewards. 
Building on interval iteration~\cite{HM14,Bai+17} --- an algorithm for the sound computation of reachability probabilities and expected rewards --- we developed an algorithm for approximating EVTs with accuracy guarantees. 
To enhance efficiency, we adapted a topological algorithm~\cite{DG07,Dai+11,Cie+08} to compute EVTs SCC-wise in topological order.
We showed that EVTs enable the sound approximation of the stationary distribution and the efficient computation of conditional expected rewards.
For future work, we want to extend our implementation provided in the model checker \storm{}~\cite{STORM} by symbolic computations. 
Another direction is to combine EVT-based computations with approximate verification approaches based on partially exploring relevant parts of the system~\cite{KM19,Meg23}. 
We conjecture that EVTs serve as a good heuristic to identify significant sub-regions within the state space.

\paragraph{Data availability statement.} The datasets generated and analyzed in this study and code to regenerate them are available in the accompanying artifact \cite{zenodo}.

\clearpage

\bibliographystyle{splncs04}
\bibliography{ref}

\clearpage

\appendix
\allowdisplaybreaks % allow aligned equations to break over pages, to avoid a lot of whitespace in app
\section{Omitted Proofs}

\subsection{Proofs of \Cref{sec:EVTs}}\label{ap:EVTs}

\ETRsToEVTs*
\begin{proof}
    Let $s \in S$ be a state of $\dtmc$ and a let $\rew$ be a reward structure $\rew \colon S \to \rnonneg$ with  $\rew(t) = x \cdot \iverson{t = s}$ for some $x \in \reals_{>0}$. 
    We have $\vt_s(\pi) = \frac{1}{x} \cdot \tr_{\rew}(\pi)$ holds for all paths $\pi \in \paths{\dtmc}$.
    Thus, we obtain that 
    \begin{align*}
    \EXP{\vt_s}
       & = \int_{\paths{\dtmc}} \vt_s \,d \PR
        && \text{by \Cref{definition:EVTs}} \\ 
        &  = \int_{\paths{\dtmc}} \frac{1}{x} \cdot \tr_{\rew} \,d \PR \\
        &  = \frac{1}{x} \cdot \EXP{\tr_{\rew}}. 
    \end{align*}
\end{proof}

\EVTsCTMC*
\begin{proof}
    Let $s \in S$ be an arbitrary but fixed state in the CTMC $\ctmc$. Consider the reward structure $\rew$ with $\rew(t) = \iverson{s = t}$ for all $t \in S$. 
    It follows that for all paths $\pi \in \paths{\ctmc}$, the visiting time $\vt_s(\pi)$ is equal to the total reward $\tr_{\rew}(\pi)$ of $\pi$. 
    Thus,
    \begin{align*}
        \EXPstateMC{\vt_s}{}{\ctmc}  
        & = \int_{\paths{\ctmc}} \vt_s \,d \PRstateMC{}{\ctmc}  \\
        & = \int_{\paths{\ctmc}} \tr_{\rew} \,d \PRstateMC{}{\ctmc}\\
        & = \EXPstateMC{\tr_{\rew}}{}{\ctmc}. 
    \end{align*}
    Consider the reward structure for the DTMC $\emb(\ctmc)$ with 
    $\rew^{\emb(\ctmc)}(t)  = \frac{\rew(t)}{\rate({t})}$ 
    for all $t\in S$, or equivalently 
    \begin{align*}
        \rew^{\emb(\ctmc)}(t)  
        &= \frac{\iverson{s = t}}{\rate({t})} = \frac{\iverson{s = t}}{\rate({s})}. 
    \end{align*} 
    It holds that $\EXPstateMC{\tr_{\rew}}{}{\ctmc} = \EXPstateMC{\tr_{\rew^{\emb({\ctmc})}}}{}{\emb({\ctmc})}$ (see, e.g.,~\cite[Page 31]{Kwi+07}). 
    Consequently,
    \begin{align*}
        \EXPstateMC{\vt_s}{}{\ctmc} 
        & = \EXPstateMC{\tr_{\rew}}{}{\ctmc} \\
        & = \EXPstateMC{\tr_{\rew^{\emb(\ctmc)}}}{}{\emb(\ctmc)} \\
        & = \frac{1}{\rate(s)}\cdot \EXPstateMC{\vt_s}{}{\emb(\ctmc)} && \text{by \Cref{theorem:ETRs_to_EVTs}}. 
    \end{align*}
\end{proof}

\subsection{Proofs of \Cref{sec:computingEVTs} (Accurately Computing EVTs)}\label{ap:computing}

\subsubsection{Proofs of \Cref{sec:VI} (Value Iteration)}\label{ap:VI}

\VIConvergence*
\begin{proof}
To prove claim \Cref{theorem:VI:operator_fixpoint}, we write the linear equation system from \Cref{theorem:expected_visiting_times_LEQ} in matrix notation: 
    \[ \tauinit + \Qbf^T \cdot \xbf = \xbf  \quad \Leftrightarrow  \quad  \vioperator(\xbf) = \xbf. \]
    
    By \Cref{theorem:expected_visiting_times_LEQ}, the vector $(\EXP{\vt_{s}})_{s\in \statesTr}$ is the unique solution $\xbf$ of the equation system above.
    Thus, the unique fixed point of $\vioperator$ is $(\EXP{\vt_{s}})_{s\in \statesTr}$.  %solved for $\xbf$
    
    To show claim \Cref{theorem:VI:operator_limit}, we prove that $\vioperator$ is a contraction mapping on a suitable normed vector space and apply Banach's fixed point theorem (see, e.g.,~\cite[Theorem 6.2.3]{Put94}). %(see \Cref{mydefi:cm}). 
    First, we describe the normed vector space. 
    The matrix $\Qbf$ satisfies $\lim_{k\to \infty} \Qbf^k = \0$ and thus the spectral radius $\rho(\Qbf)$ is strictly smaller than 1 (see, e.g.,~\cite[Theorem 5.6.12]{HC12}). 
    Thus, there is a matrix norm $\normM{\cdot}_{\Qbf}$ satisfying $\normM{\Qbf}_{\Qbf} = \rho(\Qbf) <1$ (see, e.g.,~\cite[ Lemma 5.6.10]{HC12}). 
    It follows that there is a vector norm $\norm{\cdot}_{\Qbf}$ on $\evtsdomain{\statesTr}$ such that for all $\xbf \in \evtsdomain{\statesTr}$ we have $\normQ{\Qbf  \cdot \xbf} \leq \normM{\Qbf}_{\Qbf} \cdot \normQ{\xbf}$ (see, e.g.,~\cite[Theorem 5.7.13]{HC12}). 
    As Cauchy sequences are the only sequences that converge in a finite-dimensional normed vector space, we have that $(\evtsdomain{\statesTr}, \normQ{\cdot})$ is a complete normed vector space (see, e.g.,~\cite[Theorem 5.4.10]{HC12}). 
    The contraction property of $\vioperator$ follows straightforwardly from the definition of $\normQ{\cdot}$: 
    For any $\xbf,\ybf\in \evtsdomain{\statesTr}$ we have 
        \begin{align*}
            \normQ{\vioperator(\xbf) - \vioperator(\ybf)} & = \normQ{\Qbf (\xbf - \ybf)} \\ 
            & \leq \underbrace{\normQ{\Qbf}}_{<1} 
            \cdot \normQ{\xbf - \ybf} ~. && \text{\cite[Theorem 5.7.13]{HC12}}
        \end{align*}
    We have shown that $\vioperator$ is a contraction mapping on $(\evtsdomain{\statesTr}, \normQ{\cdot})$ and thus Banach's fixed point theorem (see, e.g.,~\cite[Theorem 6.2.3]{Put94}) implies that $(\vioperator^{(k)}(\xbf))_{k \in \nat}$ converges to the unique fixed point $(\EXP{\vt_s})_{s \in \statesTr}$ of $\vioperator$ with respect to the norm $\normQ{\cdot}$.  
    Lastly, $\lim_{k\to \infty} \vioperator^{(k)}(\xbf)$ holds with respect to any norm on $\evtsdomain{\statesTr}$ (see, e.g.,~\cite[Corollary 5.4.6.]{HC12}), which concludes the proof. 
\end{proof} 
\begin{theorem}\label{theorem:VI_termination}
    For input DTMC $\dtmc$, stopping criterion $crit \in \{abs,rel\}$, and any termination threshold $\epsilon>0$, 
    \Cref{algorithm:VI} terminates. 
\end{theorem}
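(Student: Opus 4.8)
The plan is to leverage the convergence result \Cref{theorem:VI_convergence} \Cref{theorem:VI:operator_limit}, which guarantees that for any initial vector $\xbf^{(0)}$ the iterates $\xbf^{(k)} = \vioperator^{(k)}(\xbf^{(0)})$ converge to $\xbf^\star := (\EXP{\vt_s})_{s\in\statesTr}$. Since $\evtsdomain{\statesTr}$ is finite-dimensional, this convergence holds in every norm; in particular both $\xbf^{(k)}$ and $\xbf^{(k-1)}$ converge to $\xbf^\star$, so their difference tends to the zero vector and $\diff{abs}(\xbf^{(k-1)},\xbf^{(k)}) = \max_{s\in\statesTr}\abs{\xbf^{(k-1)}(s)-\xbf^{(k)}(s)} \to 0$ as $k\to\infty$. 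The whole proof then reduces to showing that the stopping criterion in \Cref{algorithm:VI:error} is eventually satisfied, so that the loop exits at the first index $k$ for which this happens.

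For $crit = abs$ this is immediate: there is some $K$ with $\diff{abs}(\xbf^{(k-1)},\xbf^{(k)}) \le \epsilon$ for all $k \ge K$, so \Cref{algorithm:VI} returns no later than iteration $K$. For $crit = rel$ I would first assume w.l.o.g.\ that every transient state is reachable; unreachable transient states have EVT $0$ and can be pruned in preprocessing without affecting the EVTs of the remaining states. Under this assumption $\xbf^\star(s) > 0$ for every $s\in\statesTr$, so $m := \min_{s\in\statesTr}\xbf^\star(s) > 0$. Convergence yields an index beyond which $\xbf^{(k)}(s) \ge m/2$ for all $s$, and therefore
\begin{align*}
    \diff{rel}(\xbf^{(k-1)},\xbf^{(k)})
    = \max_{s\in\statesTr}\abs*{\frac{\xbf^{(k-1)}(s)-\xbf^{(k)}(s)}{\xbf^{(k)}(s)}}
    \le \frac{2}{m}\cdot\diff{abs}(\xbf^{(k-1)},\xbf^{(k)}) \longrightarrow 0 .
\end{align*}
Thus the relative criterion is met for all sufficiently large $k$ as well, and the algorithm terminates.

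The step I expect to be the main obstacle is the treatment of states with zero EVT under the relative criterion: for such a state $s$ the denominator $\xbf^{(k)}(s)$ may tend to $0$ while the numerator $\xbf^{(k-1)}(s)-\xbf^{(k)}(s)$ decays only at the same geometric rate, so the corresponding ratio need not vanish and could in principle block termination. The clean way around this is the reachability reduction above, which makes every limit strictly positive. Alternatively, one can observe that the zero-EVT states form a closed subsystem that receives neither initial mass nor inflow from reachable states, so under the natural initialization $\xbf^{(0)} = \0$ they remain exactly $0$ throughout and contribute $0/0 = 0$ to $\diff{rel}$ by convention; either way the problematic terms disappear and the argument above goes through.
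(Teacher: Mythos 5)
Your proof is correct and takes essentially the same route as the paper's: it invokes the convergence of the VI iterates (\Cref{theorem:VI_convergence}~(ii)) to make consecutive differences vanish, which settles the absolute criterion via the triangle inequality, and uses strict positivity of the EVTs of reachable transient states to bound the relative difference by a constant multiple of the absolute one once the iterates are bounded away from zero. Your explicit treatment of unreachable transient states (pruning them w.l.o.g., or observing that they remain exactly $0$ under $\xbf^{(0)}=\0$ and contribute $0/0=0$) is in fact slightly more careful than the paper, which simply asserts that $\statesTr$ contains only reachable states.
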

\begin{proof}
    We first proof termination for the absolute termination criterion. %, where $\epsilon >0$. 
    The sequence $(\xbf^{(k)})_{k\in \nat}$ converges to $(\EXP{\vt_s})_{s \in \statesTr}$ in $\evtsdomain{\statesTr}$, by \Cref{theorem:VI_convergence}. 
    Thus, for $\epsilon' = \frac{\epsilon}{2} >0$ there exists $N \in \nat$ such that for all $k >N$
     \[
       \max_{s \in \statesTr} \left\{ \Bigl\vert \xbf^{(k)}(s) - \EXP{\vt_s} \Bigr\vert \right\} < \epsilon'.
    \]
    As the inequality holds for all $k>N$ we conclude that 
    \[
        \max_{s \in \statesTr}  \left\{ \Bigl\vert \xbf^{(k+1)}(s) - \EXP{\vt_s} \Bigr\vert \right\}  < \epsilon'.
    \]
    By the triangular inequality and homogeneity of the max norm, we have that 
    \begin{align*}
        \max_{s \in \statesTr}  \left\{\Bigl\vert \xbf^{(k)}(s) -  \xbf^{(k+1)}(s) \Bigr\vert \right\} 
        &\leq 
        \max_{s \in \statesTr}  \left\{\Bigl\vert \xbf^{(k)}(s) -  \EXP{\vt_s} \Bigr\vert \right\} \\ 
        & \qquad + 
        \max_{s \in \statesTr}  \left\{ \Bigl\vert \xbf^{(k+1)}(s) - \EXP{\vt_s}\Bigr\vert \right\} \\
        & \leq 2 \cdot  \epsilon' = \epsilon. 
\end{align*}
    Thus, the  absolute stopping criterion is satisfied after a finite number of iterations. 
    
    Next, we show that the relative stopping criterion is also satisfied for all $\epsilon >0$ within a finite number of iterations. 
    Recall that $\statesTr$ only contains reachable states, i.e., $(\EXP{\vt_s})_{s \in \statesTr}>0$. 
    Let $s \in \statesTr$. 
    Since $(\xbf^{(k)}(s))_{k\in \nat}$ converges to $\EXP{\vt_s}$ in $\evtsdomain{\statesTr}$, for $\epsilon' = \frac{1}{4} \cdot \EXP{\vt_s} \cdot \epsilon  >0$ there exists $N \in \nat$ such that for all $k > N$,
    \begin{align*}
        \Bigl\vert  \xbf^{(k)}(s) - \EXP{\vt_s} \Bigr\vert <  \epsilon'.
    \end{align*}
    Again, by triangular inequality and homogeneity, 
    we obtain that 
    for all $k > N$
    \begin{align*}
        \Bigl\vert  \xbf^{(k)}(s) - \xbf^{(k+1)}(s) \Bigr\vert  <  2 \cdot \epsilon' = 
        \frac{1}{2} \cdot \EXP{\vt_s} \cdot \epsilon, 
    \end{align*}
    or, equivalently 
    \begin{align}
      \label{eqdistxx}
        2 \cdot \abs*{\frac{\xbf^{(k)}(s) - \xbf^{(k+1)}(s)}{\EXP{\vt_s}}}  < \epsilon. 
    \end{align}
    Next, let 
    $\epsilon_s = \frac{1}{2}\cdot \EXP{\vt_s} > 0$.  
    By convergence of $(\xbf^{(k)}(s))_{k \in \nat}$, 
    there exists $N_s \in \nat$ such that 
    for all $k > N_s$, 
    \begin{align}
      \label{eqdistancexe}
        \Bigl\vert  \xbf^{(k)}(s) - \EXP{\vt_s} \Bigr\vert  < \epsilon_s = \frac{1}{2} \cdot \EXP{\vt_s}.
    \end{align}
    Thus, 
    \begin{align*}
        %\label{xklarger} 
         \xbf^{(k)}(s) & =  
         \Bigl\vert   \EXP{\vt_s} - \bigl(\EXP{\vt_s} - \xbf^{(k)}(s)\bigr) \Bigr\vert   && \text{since  $\xbf^{(k)}(s)\geq 0$}\\
        & \geq  \Bigl\vert  \EXP{\vt_s}  \Bigr\vert  - 
        {\Bigl\vert \EXP{\vt_s} -\xbf^{(k)}(s) \Bigr\vert } \\ 
        & > \EXP{\vt_s}  -  \frac{1}{2} \cdot \EXP{\vt_s} && \text{by \Cref{eqdistancexe}} \\
        & = \frac{1}{2} \cdot \EXP{\vt_s}.
    \end{align*}
    Consequently, for all $k > \max \{ N,(N_s)_{s \in \statesTr}\}$ it holds that $\xbf^{(k+1)}(s) \geq \frac{1}{2}\cdot{\EXP{\vt_s}} >0$ and by using \Cref{eqdistxx} we obtain  
    \begin{align*}
         \abs*{\frac{\xbf^{(k)}(s) - \xbf^{(k+1)}(s)}{\xbf^{(k+1)}(s)}}  
         \leq 2 \cdot \abs*{\frac{\xbf^{(k)}(s) - \xbf^{(k+1)}(s)}{\EXP{\vt_s}}}  
         < \epsilon. 
    \end{align*}
    By considering all states in $\statesTr$, there exists $N' \in \nat$ such that for all $k> N'$
    \begin{align*}
        \max_{s \in \statesTr} \left\{ \abs*{ \frac{\xbf^{(k)}(s) - \xbf^{(k+1)}(s)}{\xbf^{(k+1)}(s)} } \right\}  <  \epsilon. 
    \end{align*} 
    Thus, the relative stopping criterion is satisfied after a finite number of iterations.  %N+1
\end{proof}

\subsubsection{Proofs of \Cref{sec:II} (Interval Iteration)}\label{ap:II}

\IIConvergence*
\begin{proof}
    The proof is analog to the proof of~\cite[Lemma 3.3]{Bai+17}. % B.1 extended version?}
    Claim \Cref{theorem:II:seqmono} follows directly from the definition of the operators $\maxvioperator$ and $\minvioperator$. 
    
    To show claim \Cref{theorem:II:seqbound}, we first prove that $\vioperator^{(k)}(\lbf) \leq \maxvioperator^{(k)}(\lbf)$ holds for all $k \in \nat$, by induction over $k \in \nat$. 
    For $k=0$, the claim holds trivially. 
    For the induction step, assume $\vioperator^{(k)}(\lbf) \leq \maxvioperator^{(k)}(\lbf)$ holds for an arbitrary but fixed $k>0$.
    By the monotonicity of $\vioperator$ follows that 
    \begin{align*}
        \vioperator\paren{\vioperator^{(k)}(\lbf)} &\leq \vioperator\paren{\maxvioperator^{(k)}(\lbf)} && \text{using $\vioperator^{(k)}(\lbf) \leq \maxvioperator^{(k)}(\lbf)$ }  \\
        &\leq 
        \max\left\{\maxvioperator^{(k)}(\lbf), \vioperator\paren{\maxvioperator^{(k)}(\lbf) }\right\} \\
        & = \maxvioperator^{(k+1)}(\lbf).
    \end{align*}
    Secondly, we prove $\maxvioperator^{(k)}(\lbf) \leq \left(\EXP{\vt_s}\right)_{s\in \statesTr}$ by induction over $k\in \nat$. 
    For the induction base $\maxvioperator^{(0)}(\lbf)=\lbf \leq \left( \EXP{\vt_s}\right)_{s \in \statesTr}$ holds by assumption. 
    For the induction step, assume $\maxvioperator^{(k)}(\lbf) \leq \left( \EXP{\vt_s}\right)_{s \in \statesTr}$.  
    Then, by the monotonicity of $\vioperator$,
    \begin{align*}
        \vioperator\paren{\maxvioperator^{(k)}(\lbf)}  & \leq \vioperator\paren{ \left(\EXP{\vt_s}\right)_{s \in \statesTr}}. 
    \end{align*}
    Since $\left(\EXP{\vt_s}\right)_{s \in \statesTr}$ is the unique fixed point of $\vioperator$, we get 
    \begin{align*}
        \maxvioperator^{(k+1)}(\lbf) = \max\left\{\maxvioperator^{(k)}(\lbf), \vioperator\paren{\maxvioperator^{(k)}(\lbf)}\right\} 
          \leq  \left(\EXP{\vt_s}\right)_{s \in \statesTr}.
    \end{align*}
    The reasoning for $\minvioperator^{(k)}(\ubf) \leq \vioperator^{(k)}(\ubf)$ works analogously. 

    In order to prove \Cref{theorem:II:seqconv}, exploit that $\vioperator^{(k)}(\lbf) \leq \maxvioperator^{(k)}(\lbf) \leq  (\EXP{\vt_s})_{s\in \statesTr}$ holds for all $k \in \nat$ by claim \Cref{theorem:II:seqbound}. 
    Additionally, $\left(\EXP{\vt_s}\right)_{s\in \statesTr} =\lim_{k\to \infty} \vioperator^{(k)}(\lbf)$ holds by \Cref{theorem:VI_convergence}. 
    Claim \Cref{theorem:II:seqconv} follows as $\left(\EXP{\vt_s}\right)_{s\in \statesTr} = \lim_{k\to \infty} \vioperator^{(k)}(\lbf) \leq \lim_{k\to \infty} \maxvioperator^{(k)}(\lbf) \leq \left(\EXP{\vt_s}\right)_{s\in \statesTr}$.
    Again, the reasoning for the sequence of upper bounds is analogous. 
\end{proof}

\IICorrectness*
\begin{proof}
    First, we show that the algorithm terminates after a finite number of iterations for any $\epsilon>0$ using the absolute and relative stopping criterion. 
    From \Cref{theorem:II_convergence} \Cref{theorem:II:seqconv}, it follows that the sequences $(\lbf^{(k)})_{k \in \nat}$ and $(\ubf^{(k)})_{k \in \nat}$ converge to $(\EXP{\vt_s})_{s \in \statesTr}$. 
    Thus, we have that for all $\epsilon>0$ there exist $N_l, N_u \in \nat$ such that for all $k_l>N_l$ and $k_u>N_u$, it holds that 
    \[
      \max_{s \in \statesTr} \left\{ \vert \lbf^{(k)}(s) - \EXP{\vt_s} \vert  \right\}< \epsilon 
        \qquad \text{and} \qquad
      \max_{s \in \statesTr} \left\{ \vert \ubf^{(k)}(s) - \EXP{\vt_s} \vert \right\} < \epsilon.
    \] 
    Thus, for all $k> \max\{N_l, N_u\}$ we have 
    \begin{align} 
      \label{eqdistul}
        \max_{s \in \statesTr} \left\{ \vert \lbf^{(k)}(s) - \ubf^{(k)}(s) \vert  \right\} < 2 \cdot  \epsilon. 
    \end{align}
    Using \Cref{eqdistul}, we can conclude that for all $k>  \max\{N_l, N_u\}$   
    Thus the absolute stopping criterion is satisfied after finitely many iterations. % at most $N+1$

    Second, we consider the relative stopping criterion for any $\epsilon>0$. 
    Let $s\in \statesTr$. 
    Since $\statesTr$ only contains reachable states, we have that $\EXP{\vt_s}>0$. 
    By a similar reasoning as in the proof of \Cref{theorem:VI_termination}, we obtain that there exists $N_s \in \nat$ such that for all $k > N_s$
    \begin{align}
        \vert \lbf^{(k)}(s) \vert >\frac{1}{2} \cdot \EXP{\vt_s}>0. 
    \end{align}
    Analogously to the proof of \Cref{theorem:VI_termination}, it follows that for all $s\in \statesTr$ and $\epsilon>0$, there exists $N \in \nat$ such that for all $k>N$, 
    \begin{align*}
        \abs*{\frac{ \lbf^{(k)}(s) - \ubf^{(k)}(s)}{\lbf^{(k)}(s)}}  <  2 \cdot \epsilon.
    \end{align*}
    By considering all transient states, we obtain that there exists $N'\in \nat_{>0}$ such for all $k> N'$, 
    \begin{align*}
        \max_{s \in \statesTr} \left\{ \abs*{\frac{ \lbf^{(k)}(s) - \ubf^{(k)}(s)}{\lbf^{(k)}(s)}}  \right\} <  2 \cdot \epsilon. 
    \end{align*} 
    Thus, the relative stopping criterion is satisfied and the algorithm terminates after a finite number of iterations. % at most N+1
    
    Next, we prove $\epsilon$-soundness of \Cref{alg:II} with respect to the absolute and relative difference. 
    The algorithm returns $\xbf^{res}= \frac{1}{2}(\lbf + \ubf)$, with $\lbf, \ubf \in \evtsdomain{\statesTr}$ such that 
    $\diff{crit}(\ubf, \lbf) \leq 2 \cdot \epsilon$ for $crit \in \{abs,rel\}$. 
    Let $\xbf^{true} = (\EXP{\vt_s})_{s \in \statesTr}$. 
    First, we consider the absolute difference $\diff{abs}$. 
    For all $s \in \statesTr$, we have 
    \begin{align*}
        \vert \xbf^{res}(s) - \xbf^{true}(s) \vert 
        & = \left| \frac{1}{2} \cdot ({\lbf(s) + \ubf(s)})- \xbf^{true}(s) \right| \\
        & =   \frac{1}{2} \cdot \vert   {\lbf(s) + \ubf(s)- 2 \cdot \xbf^{true}(s)}\vert \\
        & \leq  \frac{1}{2} \cdot  \vert {\lbf(s) + \ubf(s) - 2 \cdot \lbf(s)} \vert 
        && \text{$\0 \leq \lbf(s) \leq \xbf^{true}(s) \leq \ubf(s)$} \\ 
        & = \frac{1}{2} \cdot \underbrace{\vert (\ubf(s) - \lbf(s))\vert}_{\leq 2 \cdot \epsilon} \\
        & \leq  \epsilon && \text{termination criterion.}
    \end{align*}
    This implies that $\diff{abs}(\xbf^{res},\xbf^{true}) = \max_{s\in \statesTr} \left\{ \vert \xbf^{res}(s) - \xbf^{true}(s) \vert \right\} \leq \epsilon$. 
    Second, we prove that the returned result is $\epsilon$-sound if the relative difference is used. % $\diff{rel}$. 
    Let $s\in \statesTr$ be a transient state. 
    The restriction of $\statesTr$ to reachable states guarantees that $\xbf^{true} = (\EXP{\vt_s})_{s \in \statesTr} > \0$. 
    This yields $\ubf_s \not =  0$ by \Cref{theorem:II_convergence} \Cref{theorem:II:seqbound}. 
    Thus, due to the termination criterion, it holds that $\lbf(s) \not = 0$. 
    Otherwise, we would have that $\diff{rel}(\ubf, \lbf) = \infty \not \leq \epsilon$.
    As a result, we obtain 
        \begin{align*}
            \abs*{\frac{\xbf^{res}(s) - \xbf^{true}(s)}{ \xbf^{true}(s)} }
            &=\abs*{\frac{   \frac{1}{2} \cdot ({\lbf(s) + \ubf(s)}) - \xbf^{true}(s)}{\xbf^{true}(s)} }\\
            & = \abs*{\frac{ \lbf(s) + \ubf(s) - 2 \cdot \xbf^{true}_s}{2 \cdot \xbf^{true}(s)}} \\
            & \leq \abs*{ \frac{\lbf(s) + \ubf(s) - 2 \cdot \lbf(s)}{2 \cdot \lbf(s)} }
            && \text{$\0 < \lbf(s) \leq \xbf^{true}(s) \leq \ubf(s)$} \\
            & =  \frac{1}{2} \cdot \underbrace{\abs*{\frac{\ubf(s) - \lbf(s) }{\lbf(s)}}}_{\leq 2 \cdot \epsilon} 
             \leq \epsilon && \text{termination criterion.}
        \end{align*}
    Consequently,
    \begin{align*}
        \diff{rel}(\xbf^{res},\xbf^{true}) \overset{\xbf^{true}>\0}{=} \max_{s \in \statesTr} \left\{ \abs*{\frac{\xbf^{res}(s) - \xbf^{true}(s)}{\xbf^{true}(s)}}  \right\} \leq \epsilon.
    \end{align*} 
\end{proof}

\subsubsection{Proofs of \Cref{sec:topological_algorithms} (Topological Algorithm)}\label{ap:topo}

To show \Cref{theorem:EVTs_in_SCC_restriction}, we provide an auxiliary lemma which is a generalization of \Cref{theorem:EVTs_absorPr}, originally presented by Kemeny and Snell in~\cite[Theorem 3.3.7]{KS76}: 
The probability $\PR\Bigl((S \setminus C )\until \{s\}\Bigr)$ of reaching a transient state $s$ in an SCC $C$, and only visiting states in $S \setminus C$ prior to reaching $s$, depends on the EVTs of the predecessors of $s$ that lie outside the SCC $C$:
\begin{lemma}\label{theorem:EVTs_constrPr}
  Let $s \in \statesTr$ be a transient state in $\dtmc$ and let $C$ be the unique SCC of $s$ in $\dtmc$. 
  Then,
  \begin{align*}
  \PR\Bigl((S \setminus C )\until \{s\}\Bigr) = \iotainit(s)+ \sum_{\substack{s' \in C'\\ C'\hookrightarrow C} } \Pbf(s',s) \cdot \EXP{\vt_{s'}}. 
  \end{align*}
\end{lemma}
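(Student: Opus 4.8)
The plan is to compute the left-hand side directly by decomposing paths according to the first step at which they reach $s$, and then to identify the resulting \emph{constrained} visit counts to the predecessors with the \emph{unconstrained} EVTs by exploiting the acyclicity of the SCC order $\hookrightarrow$.

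First I would observe that the events $(S\setminus C)\until^{=k}\{s\}$ are pairwise disjoint for distinct $k$, since at step $k$ the path is in $C$ while at all earlier steps it is outside $C$; hence
\[
  \PR\bigl((S\setminus C)\until \{s\}\bigr) = \sum_{k\geq 0} \PR\bigl((S\setminus C)\until^{=k}\{s\}\bigr).
\]
The term $k=0$ contributes exactly $\iotainit(s)$. For $k\geq 1$ I would peel off the final transition: conditioning on the penultimate state $s' = \pi[k-1]\notin C$ and applying the Markov property yields a factor $\Pbf(s',s)$ times the probability of being at $s'$ at step $k-1$ while having stayed outside $C$ throughout. Summing over $k\geq 1$ and reindexing $j=k-1$ gives
\[
  \sum_{s'\notin C}\Pbf(s',s)\sum_{j\geq 0}\PRstateMC{}{\dtmc}\bigl(\pi[j]=s' \text{ and } \pi[i]\notin C \ \forall i\leq j\bigr).
\]

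The key step is to show that the inner constrained sum equals $\EXP{\vt_{s'}}$. Here I would invoke two facts about the topology. First, any $s'\notin C$ with $\Pbf(s',s)>0$ reaches $s\in C$ in one step, while $s$ cannot reach $s'$ (otherwise $s'$ would lie in the SCC $C$); hence the SCC $C'$ of $s'$ satisfies $C'\hookrightarrow C$, which justifies restricting the outer sum to predecessor SCCs. Second, and crucially, since $\hookrightarrow$ is a strict partial order there is no path from $C$ back to $C'$; therefore whenever $\pi[j]=s'\in C'$ we necessarily have $\pi[i]\notin C$ for all $i\leq j$, because an earlier visit to $C$ would force $C\hookrightarrow C'$, contradicting $C'\hookrightarrow C$. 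Consequently the constraint is vacuous on the event $\{\pi[j]=s'\}$, so the inner sum collapses to $\sum_{j\geq 0}\PR(\pi[j]=s') = \EXP{\vt_{s'}}$. Recombining the $k=0$ and $k\geq 1$ contributions yields the claimed identity.

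I expect the main obstacle to be the rigorous justification of this collapse, i.e.\ arguing cleanly that a visit to a predecessor state automatically precludes any earlier visit to $C$; this rests entirely on the acyclicity of $\hookrightarrow$ together with the observation that the relevant predecessors genuinely lie in strictly earlier SCCs. Everything else --- the disjointness of the first-hitting events and the single application of the Markov property --- is routine. As an alternative route that reuses the already-proven \Cref{theorem:EVTs_absorPr}, one could make every state of $C$ absorbing, so that $\{s\}$ becomes a singleton BSCC and reaching it coincides with the event $(S\setminus C)\until\{s\}$; one would then check that the EVTs of the states outside $C$ are unchanged by this modification (again using that $C$ is never re-entered) and apply \Cref{theorem:EVTs_absorPr} to the BSCC $\{s\}$ directly.
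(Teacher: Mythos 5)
Your proof is correct, but it takes a genuinely different route from the paper's. The paper argues via matrix algebra: it defines the set $S_{?}$ of states lying in SCCs $C' \hookrightarrow C$, observes (the same structural fact you use) that $S_{?}$ is closed under predecessors because $C$ cannot reach back into $S_{?}$, and then invokes the fundamental-matrix identity $(\Ibf - \Abf)^{-1} = (\EXPstateMC{\vt_{s'}}{t'}{})_{s',t' \in S_{?}}$ from Kemeny--Snell together with the linear-system characterization of constrained reachability probabilities from Baier--Katoen to assemble the claim as a product of matrices and the initial distribution (with a separate, trivial case for $S_{?} = \emptyset$). You instead compute $\PR\bigl((S\setminus C)\until\{s\}\bigr)$ directly: decompose by the first hitting time $k$, peel off the last transition with the Markov property, and then collapse the constraint ``stay outside $C$'' using acyclicity of $\hookrightarrow$, finishing with the occupancy-probability representation $\EXP{\vt_{s'}} = \sum_{j\ge 0}\PR(\pi[j]=s')$. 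Your argument is more elementary and self-contained --- it needs no fundamental matrix, no Neumann series, and no external theorems --- and it makes the decisive structural observation (a visit to a predecessor SCC can never be preceded by a visit to $C$) fully explicit, whereas in the paper this fact is somewhat buried in the justification that $\Abf^k$ agrees with $\Pbf^k$ on $S_{?}$. The paper's route, by contrast, is shorter given the cited results and yields the whole vector of constrained reachability probabilities at once rather than entry by entry. Your sketched alternative (making all of $C$ absorbing so that $\{s\}$ becomes a singleton BSCC and applying \Cref{theorem:EVTs_absorPr}) is a third valid route, and arguably the one closest to the paper's framing of this lemma as a lifting of \cite[Thm.~3.3.7]{KS76} to transient states; the invariance of the EVTs of states outside $C$ under that modification holds for exactly the same acyclicity reason you identify.
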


\begin{proof}
The proof is similar to~\cite[Theorem 3.3.7]{KS76}, but lifted to transient states. 

Consider a transient state $s \in \statesTr$ of $\dtmc$ that is contained in SCC $C$. 
Let $S_{?} = \{s' \in C' \mid C' \in \scc{\dtmc} \wedge {C' \hookrightarrow C } \}$. 
Since each state in $C$ can reach $s$, $S_{?}$ contains exactly those states that can reach $s$ and lie outside $C$. 
Thus for all states $s'\in S \setminus (S_{?} \cup \{s\})$, 
we have $\PRstateMC{s'}{}((S \setminus C) \until \{s\}) = 0$. 

If $S_{?}=\emptyset$, we conclude that $\Pbf(s',s)= 0$ for all states $s' \in S \setminus C$. We directly obtain from~\cite[Theorem 10.15]{BK08} that 
\begin{align*}
  \PR\Bigl((S \setminus C) \until \{s\}\Bigr) 
  & = \iotainit(s) \\
  & = \iotainit(s) + \sum_{\substack{s' \in C' \\ C'\hookrightarrow C} }  
  \underbrace{\Pbf(t,s)}_{=0} \cdot \EXP{\vt_{s'}}. 
\end{align*}

We now assume $S_{?} \not = \emptyset$. 
By definition of $S_?$, we have that a state in $S_{?}$ can only be reached by other states in $S_{?}$, i.e., $\PRstateMC{t}{}(\event \{s\}_{?}) = 0$ holds for all $t \not\in S_{?}$. 
Consequently, it is sufficient to consider only states in $S_{?}$ for calculating the probability to reach a state in $S_{?}$. 
In particular, for the matrix $\Abf = (\Pbf(s', t'))_{s',t' \in S_{?}}$ containing the transition probabilities within $S_?$, it holds that $\Abf^{k}(s',t')$ equals $\Pbf^{k}(s', t')$, i.e., the probability to be in state $t'$ at the $k^{th}$ step from $s'$ for all $s', t' \in S_{?}$ and $k \in \nat$ by~\cite[Remark 10.22]{BK08}. 

Since $s$ is a transient state and all states $s'$ in $S_{?}$ can reach $s$, we conclude that $S_{?}$ does not contain recurrent states. 
Thus, $\lim_{k \to \infty} \Abf^k = \0$ by~\cite[Theorem 10.27]{BK08}. 
From~\cite[Theorem 3.2.4]{KS76}, it follows that $(\Ibf - \Abf)$ is invertible, where $(\Ibf -\Abf)^{-1} = (\EXPstateMC{\vt_{s'}}{t'}{})_{s',t'\in S_{?}}$. 

\cite[Theorem 10.15]{BK08} implies that the vector $\left(\PRstateMC{s'}{}\bigl((S \setminus C) \until \{s\}\bigr)\right)_{s' \in S_{?}}$ coincides with the vector $(\Ibf-\Abf)^{-1} \cdot (\Pbf(t', s))_{t' \in S_{?}}$.
Thus, by taking the initial distribution into account, we obtain 
\begin{align*}
  \PR\Bigl((S\setminus C ) \until \{s\} \Bigr) 
 & = \iotainit(s) + \sum_{s' \in S_?} \iotainit(s') \cdot \Pr_{s'}\left( (S\setminus C ) \until \{s\}\right) \\
 & = \iotainit(s) + \left((\iotainit(s'))_{s' \in S_?}\right)^T \cdot (\Ibf-\Abf)^{-1} \cdot (\Pbf(s', s))_{s' \in S_{?}} \\
 & = \iotainit(s) + \left((\Ibf-\Abf^T)^{-1} 
  \cdot (\iotainit(s')){t' \in S_?}  \right)^T \cdot (\Pbf(s', s))_{s' \in S_{?}} \\
 & = \iotainit(s) + (\EXP{\vt_{s'}})_{s' \in S_?} \cdot (\Pbf(s', s))_{s' \in S_{?}} ~.
\end{align*}
\end{proof}

The following definition was omitted in the main text:
\begin{restatable}[SCC restriction]{definition}{paramSCCRestriction}\label{definition:param_SCC_restriction}
    For $C \in \scc{\dtmc}$ and $\xbf\in \evtsdomain{\statesTr}$, the \emph{parametric SCC restriction $\dtmcRestr{C}[\xbf]$ of $\dtmc$ to $C$} is defined as $\dtmcRestr{C}[\xbf] = 
    \dtmcRestrTuple{\dtmcRestr{C}}{[\xbf]}$, where 
    \begin{itemize}
        \item $S^{\dtmcRestr{C}} = C$ if $C$ is a BSCC, and $S^{\dtmcRestr{C}} = C \uplus \{s_{rec}\}$ otherwise,
        %      $S^{\dtmcRestr{C}} = 
        %          \begin{cases}
        %              C,  & \text{if $C$ is a BSCC} \\
        %              C \uplus \{s_{rec}\},  & \text{otherwise.}
        %          \end{cases}$
        \item $\Pbf^{\dtmcRestr{C}} \colon S^{\dtmcRestr{C}} \times S^{\dtmcRestr{C}} \to [0,1]$, where
        \begin{align*}
        \Pbf^{\dtmcRestr{C}}(s,t) = 
        \begin{cases}
        \Pbf^{\dtmc}(s,t) & \text{ if $s, t\in C$}, \\
        \sum_{s'  \in S\setminus C} \Pbf^{\dtmc}(s,s'), &\text{ if $s\in C$ and $t = s_{rec}$, }\\
        %\iverson{s=t= s_{rec}} & \text{otherwise},
        0 & \text{ if $s = s_{rec}$ and $t\in C$,} \\
        1 & \text{ if $s=t= s_{rec}$,} \\
        \end{cases}
        \end{align*}
        \item $\iotainit^{\dtmcRestr{C}[\xbf]} \colon S^{\dtmcRestr{C}} \to \reals_{\geq 0}$, where      \begin{align*}
        \iotainit^{\dtmcRestr{C}[\xbf]}(s) = 
        \begin{cases}
        \iotainit^{\dtmc}(s)+ \sum_{s' \in S\setminus C} \Pbf^{\dtmc}(s',s) \cdot \xbf(s') & \text{if }s \in C, \\[10pt]
        1- \sum_{t \in C} \iotainit^{\dtmcRestr{C}[\xbf]}(t)& \text{if } s = s_{rec}. 
        \end{cases} 
        \end{align*}
    \end{itemize}
    Further, we define the DTMC $\dtmcRestr{C} = \dtmcRestrTuple{\dtmcRestr{C}}{}$ as the parametric SCC restriction $\dtmcRestr{C}[\xbf]$ where $\xbf$ is chosen such that $\iotainit^{\dtmcRestr{C}[\xbf]} = \iotainit^{\dtmcRestr{C}}$ with 
    \begin{align*}
    \iotainit^{\dtmcRestr{C}}(s) = 
    \begin{cases}
    \PRstateMC{}{\dtmc}\Bigl((S \setminus C) \until  \{s\} \Bigr) & \text{ if $s\in C$,}\\
    1- \sum_{t \in C} \PRstateMC{}{\dtmc}\Bigl((S \setminus C) \until \{t\}\Bigr) & \text{ if $s=s_{rec}$.}
    \end{cases}
    \end{align*}
\end{restatable}

\EVTsInSCCRestriction*
\begin{proof}
    By \Cref{theorem:VI_convergence} \Cref{theorem:VI:operator_fixpoint}, the vector $(\EXPstateMC{\vt_s}{}{\dtmc})_{s \in \statesTr}$ is the  
    unique fixed point of the EVTs operator $\vioperator$. 
    In particular, for all $s\in \statesTr$,
    \begin{align}\label{EQ:topo_fixpoint}
    \EXPstateMC{\vt_s}{}{\dtmc} = \iotainit^{\dtmc}(s) + \sum_{t \in \statesTr} \Pbf^{\dtmc}(t, s) \cdot\EXPstateMC{\vt_t}{}{\dtmc} ~. 
    \end{align} 
    Let $C$ be a non-bottom SCC and let $\dtmcRestr{C} = \dtmcRestrTuple{\dtmcRestr{C}}{}$ be the restriction of $\dtmc$ to $C$. 
    By \Cref{EQ:topo_fixpoint} and \Cref{theorem:EVTs_constrPr}, we obtain for all states $s \in C$,
    \begin{align*}
    \EXPstateMC{\vt_s}{}{\dtmc} 
    &=  \underbrace{\iotainit^{\dtmc}(s) + \sum_{t \in S \setminus C} \Pbf^{\dtmc}(t, s) \cdot\EXPstateMC{\vt_t}{}{\dtmc}}_{\PRstateMC{}{\dtmc}\Bigl((S\setminus C) \until \{s\}\Bigr)} 
    + \sum_{t \in C} \Pbf^{\dtmc}(t, s) \cdot\EXPstateMC{\vt_t}{}{\dtmc} \\
    &= \iotainit^{\dtmcRestr{C}}(s) + \sum_{t \in C} \Pbf^{\dtmc}(t, s) \cdot\EXPstateMC{\vt_t}{}{\dtmc} \qquad  \qquad \text{by \Cref{definition:param_SCC_restriction}. }
    \end{align*}
    Thus, 
    $(\EXPstateMC{\vt_s}{}{\dtmc})_{s \in C}$ is a fixed point of the function 
    $\vioperator^{\dtmcRestr{C}} \colon \evtsdomain{\abs{C}} \to \evtsdomain{\abs{C}}$ with 
    \begin{align*}
    \vioperator^{\dtmcRestr{C}}(\xbf) = \left(\iotainit^{\dtmcRestr{C}}(s) + 
    \sum_{t \in {C}} \Pbf^{\dtmcRestr{C}}(t, s) \cdot \xbf(t)\right)_{s \in C}. 
    \end{align*}
    The function $\vioperator^{\dtmcRestr{C}}$ coincides with the EVTs-operator tailored to $\dtmcRestr{C}$ and from \Cref{theorem:VI_convergence}, it follows that 
    $(\EXPstateMC{\vt_s}{}{\dtmcRestr{C}})_{s \in C}$ is the \emph{unique} fixed point of 
    $\vioperator^{\dtmcRestr{C}}$. 
    Consequently, $\EXPstateMC{\vt_s}{}{\dtmc} = \EXPstateMC{\vt_s}{}{\dtmcRestr{C}}$.
\end{proof}

\topoBoundRel*
\begin{proof}
  Let $\epsilon \in (0,1)$ and $\xbf \in \evtsdomain{\statesTr}$ as specified in \Cref{theorem:topo_bound_rel}. 
  First, we show that for all non-bottom SCCs $C$ and $s \in C$ we have 
  \begin{align*}
    \xbf(s) \leq \EXPstateMC{\vt_s}{}{\dtmc} \cdot (1+ \epsilon)^{k+1}, 
  \end{align*}
  where $ k \in \nat$ is the length of the longest SCC chain $C_0 \hookrightarrow \dots \hookrightarrow C_k \in \chainsTr^\dtmc$ with $C_k = C$. 
  We prove the claim by induction over the chain length $k \in \nat$. 
  
  Consider an SCC $C$ with $k=0$. 
  Analogously to the proof of \Cref{theorem:topo_bound_abs}, we obtain that $\dtmcRestr{C}[\xbf]$ is equivalent to $\dtmcRestr{C}$. 
  Consequently, for every $s \in C$, 
  \begin{align*}
    \xbf(s) & \leq \EXPstateMC{\vt_s}{}{\dtmcRestr{C}[\xbf]} \cdot (1+ \epsilon)\\
    &= \EXPstateMC{\vt_s}{}{\dtmc}  \cdot (1+ \epsilon). 
  \end{align*} 

  Next, let $k \in \nat_{>0}$. 
  Let $C$ be a non-bottom SCC such that the longest SCC chain ending in $C$ has length $k$. 
  Consider a state $s \in C$. 
  By the same reasoning as in \Cref{EQ:topo_abs} in the proof of \Cref{theorem:topo_bound_abs}, we obtain 
  \begin{align*}
    \EXPstateMC{\vt_s}{}{\dtmcRestr{C}[\xbf]} & = \sum_{t \in C} \sum_{i=0}^{\infty} \Pbf^i(t,s) \cdot \Bigl( \iotainit^{\dtmc}(t) + \sum_{\substack{t' \in C'  \\ C'\hookrightarrow C}} \Pbf(t', t) \cdot \xbf(t') \Bigr). 
  \end{align*}
   By the induction hypothesis, we have 
   $\xbf(t') \leq \EXPstateMC{\vt_{t'}}{}{\dtmc} \cdot (1+ \epsilon)^{k}$ for all states $t' \in C'$ 
   in an SCC $C' \not = C$ with $C' \hookrightarrow C$. 
   This yields
   \begin{align*}
    \EXPstateMC{\vt_s}{}{\dtmcRestr{C}[\xbf]} 
    & \leq \sum_{t \in C} \sum_{i=0}^{\infty} \Pbf^i(t,s) \cdot 
    \Bigl( \iotainit^{\dtmc}(t) + \sum_{\substack{t' \in C' \\ C'\hookrightarrow C}} \Pbf(t', t) \cdot \EXPstateMC{\vt_{t'}}{}{\dtmc} \cdot (1+ \epsilon)^{k} \Bigr). 
  \end{align*}
  We have $(\EXPstateMC{\vt_s}{t}{\dtmcRestr{C}})_{t,s \in C} = (\Ibf - (\Qbf^{\dtmcRestr{C}})^T)^{-1}$ by \cite[Theorem 3.2.4]{KS76}. 
  The inverse if $\Ibf - (\Qbf^{\dtmcRestr{C}})^T$ is given by the converging Neumann series $(\Ibf - (\Qbf^{\dtmcRestr{C}})^T)^{-1} = \sum_{i=0}^{\infty} (\Qbf^{\dtmcRestr{C}})^i$ (see~\cite{Neu77}). Thus, we obtain 
  $\EXPstateMC{\vt_s}{t}{\dtmcRestr{C}} = \sum_{i=0}^{\infty} \Pbf^i(t,s)$. 
  Continuing the reasoning above and applying~\cite[Lemma 3.5.]{Bai+17}, results in 
  \begin{align*}
    \EXPstateMC{\vt_s}{}{\dtmcRestr{C}[\xbf]} 
    & \leq \sum_{t \in C} \EXPstateMC{\vt_s}{t}{\dtmcRestr{C}} \cdot 
    \Bigl( \iotainit^{\dtmc}(t) + 
    \sum_{\substack{t' \in C' \\ C'\hookrightarrow C}} \Pbf(t', t) \cdot \EXPstateMC{\vt_{t'}}{}{\dtmc} \cdot (1+ \epsilon)^{k} \Bigr) \\
    &=  \sum_{t \in C} \frac{ \PRstateMC{t}{\dtmcRestr{C}}(\event \{s\}) }{1-\PRstateMC{s}{\dtmcRestr{C}}(\nextLTL \event \{s\})}   \\
    &\phantom{={}} \cdot \Bigl( \iotainit^{\dtmc}(t) + 
      \sum_{\substack{t' \in C' \\ C'\hookrightarrow C}} \Pbf(t', t) \cdot \EXPstateMC{\vt_{t'}}{}{\dtmc} \cdot (1+ \epsilon)^{k} \Bigr)  \\
  \end{align*}%
 Since $\epsilon$ is larger than $0$, we infer $(1+\epsilon)^{k} \geq 1$. 
 Together with \Cref{theorem:EVTs_constrPr}, we have 
 \begin{align}\label{EQ:topo_rel} 
  \EXPstateMC{\vt_s}{}{\dtmcRestr{C}[\xbf]}  
  & \leq \sum_{t \in C} \frac{ \PRstateMC{t}{\dtmcRestr{C}}(\event \{s\}) }{1-\PRstateMC{s}{\dtmcRestr{C}}(\nextLTL \event \{s\})}  \\
  &\phantom{={}} 
    \cdot \Bigl( \iotainit^{\dtmc}(t) + \sum_{\substack{t' \in C' \\ C' \hookrightarrow C}} \Pbf(t',t) 
    \cdot  \EXPstateMC{\vt_{t'}}{}{\dtmc} \Bigr) \cdot (1+\epsilon)^{k} \\
  & 
  = \sum_{t \in C}  
  \frac{ \PRstateMC{t}{\dtmcRestr{C}}(\event \{s\}) }{1-\PRstateMC{s}{\dtmcRestr{C}}(\nextLTL \event \{s\})} \cdot \PRstateMC{}{\dtmc}\Bigl((S \setminus C) \until \{t\} \Bigr) \cdot (1+\epsilon)^{k} \nonumber
\end{align}
By applying the definition of the initial distribution $\iotainit^{\dtmcRestr{C}}$ of the SCC restriction $\dtmcRestr{C}$ (see \Cref{definition:param_SCC_restriction}), we obtain that 
\begin{align*}
  \EXPstateMC{\vt_s}{}{\dtmcRestr{C}[\xbf]} & 
  \leq 
  \sum_{t \in C} \frac{ \PRstateMC{t}{\dtmcRestr{C}}(\event \{s\})}{1-\PRstateMC{s}{\dtmcRestr{C}}(\nextLTL \event \{s\})} \cdot  \iotainit^{\dtmcRestr{C}}(t) \cdot (1+\epsilon)^{k} \\
  &= \frac{ \PRstateMC{}{\dtmcRestr{C}}(\event \{s\}) }{1-\PRstateMC{s}{\dtmcRestr{C}}(\nextLTL \event \{s\})} \cdot (1+\epsilon)^{k} && \text{by~\cite[Theorem 10.15]{BK08}} \\
  &= \EXPstateMC{\vt_s}{}{\dtmcRestr{C}} \cdot (1+\epsilon)^{k} && \text{by~\cite[Lemma 3.5.]{Bai+17}.} 
\end{align*} 
  Since the relative difference between $(\xbf(s))_{s\in C}$ and 
  $(\EXPstateMC{\vt_s}{}{\dtmcRestr{C}[\xbf]})_{s \in C}$ is bounded by $\epsilon$, we have 
  \begin{align*}
    \xbf(s) &\leq \EXPstateMC{\vt_s}{}{\dtmcRestr{C}[\xbf]} \cdot (1+\epsilon)  \\
    & \leq \EXPstateMC{\vt_s}{}{\dtmcRestr{C}} \cdot (1+\epsilon)^{k} \cdot (1+\epsilon) \\
    & \leq \EXPstateMC{\vt_s}{}{\dtmc} \cdot (1+\epsilon)^{k+1} && \text{by \Cref{theorem:EVTs_in_SCC_restriction},}
  \end{align*}
  which proves that the claim holds for all $k \in \nat$. 
 
  Next, let $s \in \statesTr$ and let $C$ be the SCC containing $s$. 
  $L$ is the length of the longest non-bottom SCC chain in $\dtmc$. 
  In particular, this implies that the longest SCC chain ending in $C$ has length at most $L$. 
  Thus, 
  \begin{align*} 
    & & \xbf(s) \leq  \EXPstateMC{\vt_s}{}{\dtmc} \cdot(1+\epsilon)^{L+1} &= \EXPstateMC{\vt_s}{}{\dtmc} + \EXPstateMC{\vt_s}{}{\dtmc} \cdot \left((1+\epsilon)^{L+1} -1 \right) \\
    & \Leftrightarrow &   \xbf(s) -   \EXPstateMC{\vt_s}{}{\dtmc}  &\leq \EXPstateMC{\vt_s}{}{\dtmc} \cdot \left((1+\epsilon)^{L+1} -1 \right) \\
    & \Leftrightarrow & \frac{\xbf(s)- \EXPstateMC{\vt_s}{}{\dtmc}}{\EXPstateMC{\vt_s}{}{\dtmc}} & \leq \left(1+\epsilon\right)^{L+1} -1 \tag{since $\EXPstateMC{\vt_s}{}{\dtmc}>0$.} 
  \end{align*}  
  % \begin{align*} 
  %  \frac{\xbf(s) - \EXPstateMC{\vt_s}{}{\dtmc}}{\EXPstateMC{\vt_s}{}{\dtmc}}  \leq  (1+\epsilon)^{L+1} -1. 
  % \end{align*}
  % It holds that $(1+\epsilon)^{L+1} -1 >0$. Thus if $\xbf(s) > \EXPstateMC{\vt_s}{}{\dtmc}$, we have 
  % \begin{align*} 
  % \label{topoeq1}
  % \abs*{\frac{\xbf(s) - \EXPstateMC{\vt_s}{}{\dtmc}}{\EXPstateMC{\vt_s}{}{\dtmc}}}  \leq  (1+\epsilon)^{L+1} -1. 
  % \end{align*} 
  We have that $\left(1+\epsilon\right)^{L+1} -1>0$ because $\epsilon>0$. 
  Thus, if $x(s) > \EXPstateMC{\vt_s}{}{\dtmc}$,
\begin{align}\label{upperboundx}
  \abs*{\frac{\xbf(s)- \EXPstateMC{\vt_s}{}{\dtmc}}{\EXPstateMC{\vt_s}{}{\dtmc}}} & \leq \left(1+\epsilon\right)^{L+1} -1, 
\end{align}
which yields an upper bound on the 
relative difference in case 
$x(s) > \EXPstateMC{\vt_s}{}{\dtmc}$. 

Next, consider $\xbf(s) < \EXPstateMC{\vt_s}{}{\dtmc}$. 
We derive a lower bound of $\xbf$. The reasoning works analogously as above, except that we exploit the fact that $(1-\epsilon)<1$ holds by choice of $\epsilon <1$ to establish the analog to \Cref{EQ:topo_rel}. 
As a result, for all states $s\in \statesTr$, 
\begin{align*} 
  & &  \xbf(s) \geq \EXPstateMC{\vt_s}{}{\dtmc} \cdot  (1-\epsilon)^{L+1} &= \EXPstateMC{\vt_s}{}{\dtmc} +  \EXPstateMC{\vt_s}{}{\dtmc} \cdot \left((1-\epsilon)^{L+1} -1 \right) \\
  & \Leftrightarrow & 
  {\xbf(s)- \EXPstateMC{\vt_s}{}{\dtmc}} & \geq \EXPstateMC{\vt_s}{}{\dtmc} 
  \cdot \left((1-\epsilon)^{L+1} -1\right) \\
  & \Leftrightarrow & \frac{\xbf(s)- \EXPstateMC{\vt_s}{}{\dtmc}}{\EXPstateMC{\vt_s}{}{\dtmc}} & \geq \left(1-\epsilon\right)^{L+1} -1 \tag{since $\EXPstateMC{\vt_s}{}{\dtmc}>0$.} 
\end{align*} 
We have $\left(1-\epsilon\right)^{L+1} -1 <0$ because $\epsilon<1$. 
For $\xbf(s) < \EXP{\vt_s}$, we conclude
\begin{align}\label{lowerboundx}
  \begin{split}
  \abs*{\frac{\xbf(s)- \EXPstateMC{\vt_s}{}{\dtmc}}{\EXPstateMC{\vt_s}{}{\dtmc}}} & \leq - \left( \left(1-\epsilon\right)^{L+1} -1 \right)\\
  & = 1- \left(1-\epsilon\right)^{L+1}. 
  \end{split}
\end{align} 

By combining \Cref{upperboundx,lowerboundx}, we infer that 
\begin{align*} 
  \abs*{\frac{\xbf(s)- \EXPstateMC{\vt_s}{}{\dtmc}}{\EXPstateMC{\vt_s}{}{\dtmc}}} \leq 
   \begin{cases}
    \left(1+\epsilon \right)^{L+1} -1 & \text{if } \xbf(s) > \EXPstateMC{\vt_s}{}{\dtmc}, \\
    1- \left(1-\epsilon\right)^{L+1} & \text{if } \xbf(s) < \EXPstateMC{\vt_s}{}{\dtmc},\\
    0 & \text{otherwise.}
   \end{cases}
 \end{align*} 
 This implies that 
  \begin{align*} 
    \diff{rel} \left( \xbf, (\EXPstateMC{\vt_s}{}{\dtmc})_{s \in \statesTr} \right) 
    &= \max_{s \in \statesTr} 
    \left\{ \abs*{ \frac{\xbf(s) - \EXPstateMC{\vt_s}{}{\dtmc}}{\EXPstateMC{\vt_s}{}{\dtmc}} } \right\}  \\
    & \leq \max\left\{1-(1-\epsilon)^{L+1}, (1+\epsilon)^{L+1} -1\right\}.  
  \end{align*} 

  Lastly, we show that  
  $1-(1-\epsilon)^{L+1} \leq (1+\epsilon)^{L+1} -1$ holds for all 
  $\epsilon \in (0,1)$ and $L \in \nat$ by induction on $L$. 

  For $L=0$ we have 
  \begin{align*}
    1-(1-\epsilon)^{1} = \epsilon = (1+\epsilon)^1 -1. 
  \end{align*}
  By assuming that the claim holds for fixed $L \in \nat$, we get 
  \begin{align*}
    1-(1-\epsilon)^{L+2} %&=  1-(1-\epsilon)^{L+1} \cdot (1-\epsilon) \\
    &=  \left(1-(1-\epsilon)^{L+1} \right) \cdot (1-\epsilon) + \epsilon \\
    &\leq \left((1+\epsilon)^{L+1} -1 \right) \cdot (1-\epsilon) + \epsilon && \text{by the induction hypothesis} \\
    & \leq \left((1+\epsilon)^{L+1} -1 \right) \cdot (1+\epsilon) + \epsilon \\
    & =  (1+\epsilon)^{L+2}  - (1+\epsilon) + \epsilon \\
    & =  (1+\epsilon)^{L+2} - 1, 
  \end{align*}
  which concludes the proof of \Cref{theorem:topo_bound_rel}. 
\end{proof} 

\topoCorrectnessRel*
\begin{proof}
  First, termination of the algorithm is obvious assuming that the procedure used to approximate the EVTs (\Cref{alg:topo_exact:line:compute}) terminates.
  
  Second, we show $\epsilon$-soundness regarding the relative difference.  
  We have $\sigma  =  \sqrt[L+1]{1+\epsilon}-1$, where $L$ is the length of the longest SCC chain in $\chainsTr{\dtmc}$. 
  By assumption, $\epsilon$ is in $(0,1)$ which guarantees that $\sigma \in (0,1)$. 
  Consequently, we can apply \Cref{theorem:topo_bound_rel}, which yields 
  \begin{align}
    \diff{rel}\left(\xbf^{res}, (\EXPstateMC{\vt_s}{}{\dtmc})_{s \in \statesTr} \right) \leq   (1+\sigma)^{L+1} -1 . 
  \end{align}
  By substituting $\sigma  =  \sqrt[L+1]{1+\epsilon}-1$, we obtain that 
    \begin{align*}
      \diff{rel}\left(\xbf^{res}, (\EXPstateMC{\vt_s}{}{\dtmc})_{s \in \statesTr} \right) 
      & \leq (1+\sigma)^{L+1} -1 \\
      & = \left(1+ \sqrt[L+1]{1+\epsilon}-1\right)^{L+1} -1 \\
      &= \epsilon,
    \end{align*}
  which shows that the vector $\xbf^{res}$ is an $\epsilon$-sound approximation of $(\EXPstateMC{\vt_s}{}{\dtmc})_{s \in \statesTr}$. 
\end{proof}

\subsection{Proofs of \Cref{sec:stationary} (Stationary Distributions)}
\label{ap:stationary}

We establish some basics concerning the propagation of relative errors for (non-negative) values.
Claims made about the error propagations in \Cref{sec:stationary} follow straightforwardly.
Let $x \in \rnonneg$ be an approximation of $x^* \in \rnonneg \setminus \{0\}$ with relative precision $\epsilon \in (0,1)$, i.e., $\diff{rel}(x,x^*) = |\delta_x| \le \epsilon$, where $\delta_x \coloneq \frac{x-x^*}{x^*}$.
It holds that $x = x^* \cdot (1+\delta_x)$.
Let $y$, $y^*$, and $\delta_y$ be similar.

We first argue that the sum $x+y$ is an approximation of $x^* + y^*$ with relative precision $\epsilon$.
Using $-\epsilon \le \delta_x,\delta_y \le \epsilon$ and $x+y = x^* \cdot (1+\delta_x) + y^* \cdot (1+\delta_y)$ we get
\begin{align*}
 (x^* + y^*) \cdot (1-\epsilon) ~\le~x+y ~\le~ (x^* + y^*) \cdot (1+\epsilon).
\end{align*}
It follows that there is a $\delta^+ \in [-\epsilon,\epsilon]$ with $x+y = (x^* + y^*) \cdot (1+\delta^+)$, i.e.,
\[
\diff{rel}(x+y,~x^*+y^*) ~=~ |\delta^+| \le \epsilon.
\]
Next, we consider multiplication. We have
\begin{align*}
x\cdot y
~=~ x^* \cdot (1+\delta_x) \cdot y^* \cdot (1+\delta_y)
~=~ x^* \cdot y^* \cdot (1+\delta_x + \delta_y + \delta_x \cdot \delta_y).
\end{align*}
With $\delta^\bullet \coloneq \delta_x + \delta_y + \delta_x \cdot \delta_y$ it follows that
\[
\diff{rel}(x\cdot y,~x^* \cdot y^*) ~=~ |\delta^\bullet | ~\le~  2 \epsilon + \epsilon^2.
\]
For division, we observe that
\[
\frac{x}{y} ~=~ \frac{x^*}{y^*} \cdot \frac{1+\delta_x}{1+\delta_y} ~=~ \frac{x^*}{y^*} \cdot \frac{1+\delta_x+\delta_y - \delta_y}{1+\delta_y} = \frac{x^*}{y^*} \cdot \left(1+ \frac{\delta_x- \delta_y}{1+\delta_y}\right).
\]
With $\delta^{/} \coloneq \frac{\delta_x- \delta_y}{1+\delta_y}$ we get
\[
\diff{rel}\left(\frac{x}{y},~\frac{x^*}{y^*}\right) ~=~ |\delta^{/} | ~\le~  \frac{2 \epsilon}{1-\epsilon}.
\]

\stationaryViaEVTs*
\begin{proof}
  See, e.g.,~\cite[Theorem 2.2.3]{Vol05} for a proof of the first part in the context of renewal theory.
  Herein, we present an alternative proof of the statement. 
  The equation system for computing the EVTs in ${\dtmcBv{}}$ is: 
  \begin{align*}
      \text{For all } s \in S \setminus \{v\}: \xbf(s) &= \sum_{r \in S} \Pbf(r,s) \cdot \xbf(r) \\
      \xbf(v) &= 1. %& \text{$v$ has no incoming transition}
  \end{align*}
  Since $\{\hat{v}\}$ is the only BSCC and thus is reached with probability 1, the unique solution $(\EXPstateMC{\vt_s}{}{{\dtmcBv{}}})_{s \in S}$ to this equation system satisfies 
  \begin{align*}
      \EXPstateMC{\vt_v}{}{{\dtmcBv{}}} &=1 \\
      & = \PRstateMC{}{{\dtmcBv{}}}(\event \{\hat{v}\}) \\
      & = \sum_{r \in S} \hat{\mathbf{P}}(r,\hat{v}) \cdot \EXPstateMC{\vt_r}{}{{\dtmcBv{}}} && \text{by \Cref{theorem:EVTs_absorPr}} \\
      & = \sum_{r \in S} \mathbf{P}(r,v) \cdot \EXPstateMC{\vt_r}{}{{\dtmcBv{}}}. 
  \end{align*}
  We have thus shown that the equality 
  \begin{align*}
      \EXPstateMC{\vt_s}{}{{\dtmcBv{}}}
      =
      \sum_{r \in S} \mathbf{P}(r,s) \cdot \EXPstateMC{\vt_r}{}{{\dtmcBv{}}}
  \end{align*}
  holds for \emph{all} $s \in S$, i.e., the expected visiting times $(\EXPstateMC{\vt_s}{}{{\dtmcBv{}}})_{s \in S}$ constitute a left eigenvector of the probability transition matrix $\Pbf$ of the original irreducible DTMB $\mathcal{B}$.
  Normalising the values so that they sum up to 1 yields: 
  \begin{align*}
      \frac{\EXPstateMC{\vt_s}{}{{\dtmcBv{}}}}{\sum_{t \in S} \EXPstateMC{\vt_t}{}{{\dtmcBv{}}}} = \sum_{r \in S} \mathbf{P}(r,s) \cdot \frac{\xbf(r)}{\sum_{t \in S} \EXPstateMC{\vt_t}{}{{\dtmcBv{}}}}. 
  \end{align*}
  Since $\mathcal{B}$ is irreducible, \cite[Thm.~4.18]{Kul20} implies that the 
  steady state distribution $\stst{\mathcal{B}}$ is uniquely defined by 
  $\stst{\mathcal{B}}(s) = \frac{\EXPstateMC{\vt_s}{}{{\dtmcBv{}}}}{\sum_{t \in S} \EXPstateMC{\vt_t}{}{{\dtmcBv{}}}}$.

The second claim of the proof follows directly from the propagation of relative errors from the beginning of this section.
\end{proof}

\subsection{Proofs of \Cref{sec:conditional_rew} (Conditional Expected Rewards)}
\label{ap:conditional_reward}

\condExpRew*

\begin{proof}
  Throughout the proof we use matrix-vector notation for the sake of brevity and assume that $\dtmc = \dtmcTuple{}$ is a DTMC with
  $\Pbf = \big(\begin{smallmatrix}
  \Qbf & \Rbf \\
  \0 & \Ibf\\
  \end{smallmatrix}\big)$,
  i.e., $\Qbf$ is the transition matrix from transient to transient states, $\Rbf$ is the transition matrix from transient to recurrent states.
  Furthermore, we assume that all recurrent states are absorbing (this is why there is an identity matrix in the lower right corner of $\Pbf$).

  First assume that $\rew(\statesTr) = 1$, i.e., all transient states have unit reward.
  Let $r \in \statesRe$.
  We can rewrite the integral in the definition of $\CEXPstateMCcond{\tr_\rew}{}{\dtmc}{\event \{r\}}$ as follows:
  \begin{align*}
      & \int_{\event \{r\}} \tr_{\rew} \,d \PRstateMC{}{\dtmc}  \\
      =~& \sum_{k \geq 0} \int_{S \setminus \{r\} \until^{=k} \{r\}} \tr_{\rew} \,d \PRstateMC{}{\dtmc} && \text{(because $\event \{r\} = \biguplus_{k \geq 0} S \setminus \{r\} \until^{=k} \{r\}$)}\\
      =~& \sum_{k \geq 0} k \cdot \PRstateMC{}{\dtmc}(S \setminus \{r\} \until^{=k} \{r\}) && \text{(by the assumptions about $\rew$)}
      ~.
  \end{align*}
  Recall that the transition matrix of $\dtmc$ is $\Pbf = \big(\begin{smallmatrix}
  \Qbf & \Rbf \\
  \0 & \Ibf\\
  \end{smallmatrix}\big)$.
  Define
  $\tilde{\Pbf} = \big(\begin{smallmatrix}
  \Qbf & \Rbf \\
  \0 & \0 \\
  \end{smallmatrix}\big)$
  and note that for all inegers $k \geq 1$ we have
  $\tilde{\Pbf}^k = \big(\begin{smallmatrix}
  \Qbf^k & \Qbf^{k-1}\Rbf \\
  \0 & \0 \\
  \end{smallmatrix}\big)$.
  Further, it can be shown that
  \begin{align*}
      \PRstateMC{}{\dtmc}(S \setminus \{r\} \until^{=k} \{r\})
      =
      (\iotainit \cdot \tilde{\Pbf}^k)_r
      ~.
  \end{align*}
  Assume that $\iotainit$ is the row vector $\iotainit = \begin{pmatrix}\tauinit & \rhoinit \end{pmatrix}$ (i.e., $\tauinit$ and $\rhoinit$ are the initial probabilities of the transient and the recurrent states, respectively).
  Then
  \begin{align*}
      & \int_{\event \{r\}} \tr_{\rew} \,d \PRstateMC{}{\dtmc} \\
      =~& \sum_{k \geq 0} k \cdot (\iotainit \cdot \tilde{\Pbf}^k)_r && \text{(see above)}\\
      =~& \sum_{k \geq 1} k \cdot (\tauinit \cdot \Qbf^{k-1} \cdot \Rbf)_r && \text{(equation for $\tilde{\Pbf}^k$ above)} \\
      =~& \sum_{k \geq 1} k \cdot \sum_{t \in \statesTr} (\tauinit \cdot \Qbf^{k-1})_t \cdot \Rbf_{t,r} && \text{(rewriting)} \\
      =~& \sum_{t \in \statesTr} \Big( \sum_{k \geq 1} k \cdot (\tauinit \cdot \Qbf^{k-1})_t  \Big) \cdot \Rbf_{t,r} && \text{(rewriting)} \tag{$\star$}\label{eq:condexprewproof1}
  \end{align*}
  Now let $t \in \statesTr$.
  The next goal is to derive an equation system for the terms $\sum_{k \geq 1} k \cdot (\tauinit \cdot \Qbf^{k-1})_t$:
  \begin{align*}
      &\quad \sum_{k \geq 1} k \cdot (\tauinit \cdot \Qbf^{k-1})_t \\
      &= \sum_{k \geq 1} (\tauinit \cdot \Qbf^{k-1})_t + \sum_{k \geq 1} (k-1) \cdot (\tauinit \cdot \Qbf^{k-1})_t && \text{(rewriting)} \\
      &= \sum_{k \geq 0} (\tauinit \cdot \Qbf^{k})_t + \sum_{k \geq 0} k \cdot (\tauinit \cdot \Qbf^{k})_t && \text{(index shifts)} \\
      &= \Big( \tauinit \cdot  \sum_{k \geq 0} \Qbf^{k} \Big)_t + \sum_{k \geq 0} k \cdot \sum_{t' \in \statesTr} (\tauinit \cdot \Qbf^{k-1})_{t'} \cdot \Qbf_{t',t} && \text{(rewriting)}\\
      &= \EXPstateMC{\vt_t}{}{\dtmc} + \sum_{t' \in \statesTr} \Big( \sum_{k \geq 1} k \cdot  (\tauinit \cdot \Qbf^{k-1})_{t'} \Big) \cdot \Qbf_{t',t} && \text{(by \Cref{theorem:expected_visiting_times_LEQ}; rewriting)}
  \end{align*}
  In matrix-vector notation, the above equations can be written as
  \begin{align*}
      &\ybf = \EXPstateMC{\mathbf {vt}}{}{\dtmc} + \ybf \cdot \Qbf \\
      \iff\quad & \ybf \cdot (\Ibf - \Qbf) = \EXPstateMC{\mathbf{vt}}{}{\dtmc}
  \end{align*}
  where $\ybf = \Big(\sum_{k \geq 1} k \cdot (\tauinit \cdot \Qbf^{k-1})_t \Big)_{t \in \statesTr}$ and $\EXPstateMC{\mathbf{vt}}{}{\dtmc} = (\EXPstateMC{\vt_t}{}{\dtmc})_{t \in \statesTr}$ are interpreted as row vectors.
  Since $\Ibf - \Qbf$ is invertible~\cite{KS76}, $\ybf$ is the unique solution of this equation system.
  \Cref{theorem:cond_exp_rew} with $\rew(\statesTr) = \{1\}$ now follows with~\eqref{eq:condexprewproof1} (numerator) and \Cref{theorem:EVTs_absorPr} (denominator).
  
  Now we discuss the case where the reward function $\rew$ assigns arbitrary non-negative rational rewards to the transient states $\statesTr$.
  %\tw{not very happy with the following but I didn't manage a better proof so far}
  
  First assume that $\rew$ takes positive integer values for the transient states, i.e., $\rew(\statesTr) \subseteq \nat_{\geq 1}$.
  We proceed by induction on $\max_{s \in \statesTr}\rew(s)$.
  The base case where all states have reward $1$ was already treated above.
  For the inductive case let $\max_{s \in \statesTr}\rew(s) = n > 1$ and consider all $s \in \statesTr$ with $\rew(s) = n$.
  We construct new a DTMC $\dtmc'$ with additional states $s'$ for each such $s$ and reward function $\rew'$ such that $\rew'(s') = 1$, $\rew'(s) = n-1$ by redirecting all incoming transitions of $s$ to $s'$ and adding a single transition with probability $1$ from $s'$ to $s$.
  Intuitively, we simulate the reward $n$ of $s$ by inserting a unique predecessor with reward $1$ right before $s$, and decreasing the reward of $s$ to $n-1$.
  It is clear that conditional expected rewards in $\dtmc$ and $\dtmc'$ are the same.
  Now we can apply the induction hypothesis to $\dtmc'$ (the induction hypothesis is just the statement of \Cref{theorem:cond_exp_rew}), i.e., 
  \begin{align*}
      \ybf(s) &= (n-1) \cdot \EXPstateMC{\vt_s}{}{\dtmc'} + \ybf(s') \\
      \ybf(s') &= \EXPstateMC{\vt_{s'}}{}{\dtmc'} + \sum_{t \in \statesTr} \Pbf(t,s') \cdot \ybf(t)
  \end{align*}
  But since $s$ is a direct successor of $s'$, $\EXPstateMC{\vt_{s'}}{}{\dtmc'} = \EXPstateMC{\vt_{s}}{}{\dtmc'}$.
  Combining the above two equations yields $\ybf(s) = n \cdot \EXPstateMC{\vt_s}{}{\dtmc} + \sum_{t \in \statesTr} \Pbf(t,s) \cdot \ybf(t)$ as desired.
 
  Now assume that $\rew(\statesTr) \subseteq \nat$ (including 0).
  We can make a similar construction as above for transient states $s \in \statesTr$ with $\rew(s) = 0$.
  We ``eliminate'' (see, e.g.,~\cite{DBLP:journals/iandc/BaierHHJKK20}) each such $s$ in $\dtmc$ to obtain a modified DTMC $\dtmc'$ with fewer states where each state has reward in $\nat_{\geq 1}$.
  This is achieved by first scaling the outgoing transition of $s$ by the factor $\frac{1}{1-\Pbf(s,s)}$, and then ``gluing'' each pair if incoming and outgoing transition of $s$ together by multiplying their probabilities.
  The initial probability mass of $\iotainit(s)$ is distributed to the successors of $s$.
  Then $s$ is not reachable anymore and can be removed from $\dtmc'$.
  However, it can be shown that the effect of this elimination procedure on the transitions of $\dtmc'$ is really equivalent to simply adding the equation
  \begin{align*}
      \ybf(s) = \sum_{t \in \statesTr} \Pbf(t,s) \cdot \ybf(t) = 0 \cdot \EXPstateMC{\vt_s}{}{\dtmc} + \sum_{t \in \statesTr} \Pbf(t,s) \cdot \ybf(t)
  \end{align*}
  to the system; indeed, eliminating the variable $\ybf(s)$ with Gaussian elimination is the same as altering the transitions as described above~\cite{DBLP:journals/iandc/BaierHHJKK20}.
  
  Finally, if $\rew(\statesTr) \subseteq \Rationals_{\geq 0}$, we can consider $\rew'\colon S \to \nat$ where $\rew'(s) = M \cdot \rew(s)$ for all $s \in S$, $M$ the greatest common integer multiple of the denominators of the rational numbers occurring in $\rew$.
  Let the vector $\ybf'$ be the unique solution of
  \begin{align*}
      \ybf' = (\rew'(s) \cdot \EXPstateMC{\vt_s}{}{\dtmc})_{s \in \statesTr} + \ybf' \cdot \Qbf
      ~,
  \end{align*}
  i.e., $\ybf' = (\rew'(s) \cdot \EXPstateMC{\vt_s}{}{\dtmc})_{s \in \statesTr} \cdot (\Ibf - \Qbf)^{-1} $.
  Then
  \begin{align*}
      \frac{1}{M} \ybf' = (\rew(s) \cdot \EXPstateMC{\vt_s}{}{\dtmc})_{s \in \statesTr} \cdot (\Ibf - \Qbf)^{-1}= \ybf
      \tag{$\dagger$}\label{eq:condexprewproof2}
      ~.
  \end{align*}
  Now
  \begin{align*}
      &\CEXPstateMCcond{\tr_\rew}{}{\dtmc}{\event \{r\}} \\
      =~& \frac{1}{M} \cdot \CEXPstateMCcond{\tr_{\rew'}}{}{\dtmc}{\event \{r\}} && \text{(linearity of expected value)} \\
      =~& \frac{1}{M} \cdot \frac{\sum_{t \in \statesTr} \Pbf(t,r) \cdot \ybf'(t)}{\iotainit(r) + \sum_{t \in \statesTr} \Pbf(t,r) \cdot \EXPstateMC{\vt_t}{}{\dtmc}} && \text{(\Cref{theorem:cond_exp_rew} for integer rewards)} \\
      =~& \frac{\sum_{t \in \statesTr} \Pbf(t,r) \cdot \ybf(t)}{\iotainit(r) + \sum_{t \in \statesTr} \Pbf(t,r) \cdot \EXPstateMC{\vt_t}{}{\dtmc}} && \text{by~\eqref{eq:condexprewproof2}}
  \end{align*}
  which concludes the proof.
\end{proof}

\begin{remark}
    As an (immediate) consequence of \Cref{theorem:cond_exp_rew}, if $\lbf \leq \EXPstateMC{\vt}{}{\dtmc} \leq \ubf$, and $\ybf_{\lbf}$ and $\ybf_{\ubf}$ are the unique solutions of $(\Ibf - \Qbf^T) \cdot \ybf_{\lbf} = \lbf$ and $(\Ibf - \Qbf^T) \cdot \ybf_{\ubf} = \ubf$, respectively, then
    \begin{align*}
    \frac{(\Rbf^T \cdot \ybf_{\lbf})_r}{\iotainit(r) + (\Rbf^T \cdot \ubf)_r}
    \leq
    \CEXPstateMCcond{\tr_\rew}{}{\dtmc}{\event \{r\}}
    \leq
    \frac{(\Rbf^T \cdot \ybf_{\ubf})_r}{\iotainit(r) + (\Rbf^T \cdot \lbf)_r}
    ~.
    \end{align*}
\end{remark}
\newpage
\section{Value Iteration is not $\epsilon$-Sound for EVTs}\label{ap:VI_unsound}

We show that the absolute or relative difference between the returned result and the true EVTs cannot be bounded by a predefined $\epsilon>0$, in general. 

\begin{figure}[t]
    \centering
    \begin{tikzpicture}[>=stealth', auto, semithick, node distance=3cm]
        \node[] (init)  {};
        \node[state] (q0) [right=1cm of init]  {$s_1$};
        \node[state] (q1) [right=3cm of q0]  {$s_2$};
        \node[state] (q2) [right=3cm of q1] {$s_{3}$};
    
        \path[->] (init)  edge node {1} (q0); 
        \path[->]  (q0)  edge   [bend left=20] node {$0.1$} (q1);
        \path[->]  (q1)  edge   [bend left=20] node {$1-p$} (q2);
        \draw [->] (q1) edge[loop above] node  {$p$} (q1);
        \draw [->] (q2) edge[loop above] node {1} (q2);
        \path[->]  (q0)  edge   [bend right=20] node  [swap] {$0.9$} (q2);
  \end{tikzpicture}
  \caption{A DTMC for which VI does not provide $\epsilon$-sound results.}\label{figure:VI_unsound}
\end{figure}
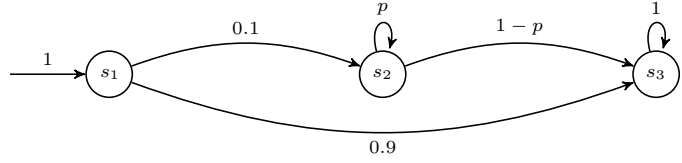

\begin{example}\label{example:VI_unsound}
    Assume that we use the standard VI algorithm with $\xbf^{(0)}= \0$ and $\epsilon = 0.1$ to compute the EVTs for the states $s_1$ and $s_2$ of the DTMC depicted in \Cref{figure:VI_unsound}, where $p \in [0,1)$. 
    It can be verified that the EVT of state $s_1$ equals $1$, while the EVT of state $s_2$ equals $0.1 \cdot \frac{1}{1-p}$. 
    The algorithm computes 
    $\xbf^{(1)} = 
        \begin{pmatrix}
            1,
            0 
        \end{pmatrix}^T$ 
    and 
    $ \xbf^{(2)} = \begin{pmatrix}
        1,
        0.1 
    \end{pmatrix}^T$ 
    in the first and second iteration. 
    Thus, the absolute difference between the vectors $\xbf^{(1)}$ and $\xbf^{(2)}$ equals 0.1 and the algorithm terminates. 
    Hence, the absolute difference between the returned vector and the true result equals 
    \[
    \diff{abs}\left(
        \begin{pmatrix}
        1 \\
        0.1 
    \end{pmatrix},  \begin{pmatrix}
        1 \\
        0.1 \cdot \frac{1}{1-p}
    \end{pmatrix}\right) = \abs*{ 0.1 - 0.1 \cdot \frac{1}{1-p}}.
    \] 
    In particular, the difference cannot be bounded by any $\epsilon>0$ since $\vert \frac{1}{1-p}\vert$ can increase arbitrarily as $p$ is getting closer to 1. 
    
    Using the relative termination criterion, the algorithm stops 
    after at most 11 iterations for any $p<1$. 
    The reason for this is that $\xbf^{(10)}(s_2) = 0.1 \cdot \sum^{8}_{k=0} p^k$ and $\xbf^{(11)}(s_2) = 0.1 \cdot \sum^{9}_{k=0} p^k$.
    Thus,
    \begin{align*}
         \diff{rel}(\xbf^{10}, \xbf^{11}) 
         & = \frac{ 0.1 \cdot \sum^{8}_{k=0} p^k  -  0.1 \cdot \sum^{9}_{k=0} p^k}{ 0.1 \cdot \sum^{9}_{k=0} p^k} \\
         &= \frac{p^9}{\sum^{9}_{i=0} p^k} \\
         & \overset{p<1}{\leq} \frac{p^9}{10 \cdot p^9}  = 0.1. 
    \end{align*}
    By a similar argument as above, we obtain that this criterion cannot guarantee $\epsilon$-soundness: 
    The true EVTs for $s_2$ can become arbitrarily high by setting $p$ closer to 1. 
    However, the returned value for $s_2$ equals $\xbf^{(11)}(s) = 0.1 \cdot \sum^{9}_{k=0} p^k $, which is smaller than $1$ for any $p<1$. 
    In this case, it can be verified that the relative difference is $p^{10}$, which implies that we can find $p <1$ for every $\epsilon\in (0,1)$ such that the relative difference is greater than $\epsilon$. 
\end{example}

\newpage
\section{Gauss-Seidel Extensions}\label{ap:GS}

\subsection{Gauss-Seidel Value Iteration}\label{ap:GSVI}

We lift an established optimization of the classic VI approach, called Gauss-Seidel VI~\cite{Put94}, to EVTs. 
The VI algorithm, as described in \Cref{algorithm:VI}, uses the vector $\xbf^{(k-1)}$ of the previous iteration to perform the update of the current approximation $\xbf^{(k)}=\vioperator(\xbf^{(k-1)})$. 
The idea of the Gauss-Seidel optimization is to compute the value of a state $s_i \in \statesTr$ by considering the already updated values for states $s_j \in \statesTr$ with $i<j$. % for which the update was already performed. 

We consider an arbitrary but fixed linear order relation $\preceq$   
on the set of transient states and assume that the states are indexed such that $s_i \preceq s_j $ if 
and only if $i \leq j$ holds for all $s_i, s_i \in \statesTr$. 
\begin{definition}\label{definition:GSEVTs_operator}
    The \emph{Gauss-Seidel EVTs-operator} 
    $\Gamma \colon \evtsdomain{\statesTr} \to \evtsdomain{\statesTr}$ is defined as $\Gamma(\xbf) = (\gamma_{s_i}(\xbf))_{s_i \in \statesTr}$ for all $\xbf \in \evtsdomain{\statesTr}$, where 
    \begin{align*}
        \gamma_{s_i}(\xbf)=
        \iotainit(s_i) +  \sum_{j = 1}^{i-1} \Pbf(s_j, s_i) 
        \cdot \gamma_{s_j}(\xbf) + \sum_{j = i}^{\vert \statesTr \vert } \Pbf(s_j, s_i) \cdot  
        \xbf(s_j).
    \end{align*}
\end{definition} 
The Gauss-Seidel variant of \Cref{algorithm:VI} is obtained by replacing $\vioperator$ with the Gauss-Seidel EVTs-operator $\Gamma$ in \Cref{algorithm:operator_line}. 
Then, given the input vector $\xbf^{(0)} \in \evtsdomain{\statesTr}$, the algorithm computes the vector $\Gamma^{(k)}(\xbf^{(0)})$ for increasing $k \in \nat$.  
The convergence results from standard VI also apply to the Gauss-Seidel variant: 
\begin{theorem}\label{theorem:GSVI_convergence}
    It holds that 
    \begin{enumthm}
        \item $\left( \EXP{\vt_s}\right)_{s \in \statesTr}$ is the unique 
        fixed point of $\Gamma$ in $\evtsdomain{\statesTr}$ and\label{gstheorem:VI:operator_fixpoint}
        \item for all $\xbf \in \evtsdomain{\statesTr}$, 
        we have $\lim_{k\to \infty} \Gamma^{(k)}(\xbf)= (\EXP{\vt_s})_{s \in \statesTr}$.\label{gsvi:operator_limit}
    \end{enumthm}
\end{theorem}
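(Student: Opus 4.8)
The plan is to mirror the structure of the proof of \Cref{theorem:VI_convergence}, the only new complication being that, unlike $\vioperator$, the operator $\Gamma$ of \Cref{definition:GSEVTs_operator} is defined implicitly. First I would put $\Gamma$ into closed matrix form. Writing $\Qbf^T = \Lbf + \Ubf$, where $\Lbf$ is the \emph{strictly} lower-triangular part of $\Qbf^T$ with respect to the fixed order $\preceq$ and $\Ubf$ is the remaining upper-triangular part (including the diagonal), the defining recursion of $\Gamma$ reads $\Gamma(\xbf) = \tauinit + \Lbf\,\Gamma(\xbf) + \Ubf\,\xbf$. Since $\Lbf$ is strictly lower triangular it is nilpotent, so $\Ibf - \Lbf$ is invertible with $(\Ibf - \Lbf)^{-1} = \sum_{k=0}^{\abs{\statesTr}-1}\Lbf^k \geq \0$. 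This simultaneously shows that $\Gamma$ is well defined (the forward substitution in \Cref{definition:GSEVTs_operator} always succeeds) and gives the affine form $\Gamma(\xbf) = \Abf\,\xbf + \bbf$ with iteration matrix $\Abf = (\Ibf - \Lbf)^{-1}\Ubf$ and $\bbf = (\Ibf - \Lbf)^{-1}\tauinit$.

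For the fixed-point claim \Cref{gstheorem:VI:operator_fixpoint}, I would simply observe that $\xbf = \Gamma(\xbf)$ is equivalent to $(\Ibf - \Lbf)\xbf = \tauinit + \Ubf\xbf$, hence to $\xbf = \tauinit + (\Lbf+\Ubf)\xbf = \tauinit + \Qbf^T\xbf = \vioperator(\xbf)$. Thus $\Gamma$ and $\vioperator$ have \emph{exactly} the same fixed points, and by \Cref{theorem:VI_convergence} the unique one is $(\EXP{\vt_s})_{s\in\statesTr}$.

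The heart of the argument is the convergence claim \Cref{gsvi:operator_limit}. Since $\Gamma$ is affine, it suffices to show the spectral-radius condition $\rho(\Abf) < 1$: this gives $\Gamma^{(k)}(\xbf) = \Abf^k\xbf + \paren{\sum_{j=0}^{k-1}\Abf^j}\bbf \to (\Ibf - \Abf)^{-1}\bbf$ for every $\xbf$, and the limit is the unique fixed point identified above. To bound $\rho(\Abf)$ I would regard $\Ibf - \Qbf^T = (\Ibf - \Lbf) - \Ubf$ as a \emph{regular splitting} $B - C$ with $B = \Ibf - \Lbf$ and $C = \Ubf$: we have $B^{-1}\geq\0$ (shown above) and $C = \Ubf \geq \0$, since every entry of $\Qbf^T$ is a transition probability. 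Moreover, the proof of \Cref{theorem:VI_convergence} already establishes $\rho(\Qbf) = \rho(\Qbf^T) < 1$, so the Neumann series yields $(\Ibf - \Qbf^T)^{-1} = \sum_{k\geq 0}(\Qbf^T)^k \geq \0$. By the classical theorem on regular splittings, $\rho(B^{-1}C) < 1$ holds precisely when $\Ibf - \Qbf^T$ is nonsingular with nonnegative inverse, which is exactly the situation at hand; hence $\rho(\Abf) < 1$.

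The main obstacle is this final step, namely linking the spectral radius of the Gauss--Seidel iteration matrix $\Abf = B^{-1}C$ to the nonnegativity of $(\Ibf - \Qbf^T)^{-1}$. One can either invoke the regular-splitting theorem as a black box or reprove it via the identity $\rho(B^{-1}C) = \rho\bigl((\Ibf-\Qbf^T)^{-1}C\bigr)\,/\,\bigl(1+\rho\bigl((\Ibf-\Qbf^T)^{-1}C\bigr)\bigr)$, whose underlying Perron--Frobenius monotonicity is the only genuinely nontrivial ingredient. Everything else reduces to the already-proven facts about $\vioperator$ together with elementary triangular-matrix manipulations, so the overall write-up should stay short.
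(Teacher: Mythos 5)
Your proposal is correct and follows essentially the same route as the paper's own proof: both formulate the Gauss--Seidel iteration via the regular splitting $\Ibf - \Qbf^T = (\Ibf - \Qbf^{T}_{L}) - \Qbf^{T}_{U}$ into the strictly lower-triangular and remaining upper-triangular parts, verify that $(\Ibf-\Qbf^{T}_{L})^{-1} \geq \0$ and $\Qbf^{T}_{U} \geq \0$, and conclude $\rho\bigl((\Ibf-\Qbf^{T}_{L})^{-1}\Qbf^{T}_{U}\bigr) < 1$ from the classical regular-splitting theorem together with $\rho(\Qbf)<1$ established in the proof of \Cref{theorem:VI_convergence}. Your only departures are organizational rather than mathematical: you extract the closed affine form of $\Gamma$ up front via nilpotence of the strictly lower-triangular part (where the paper instead defines the matrix iteration abstractly and identifies it with $\Gamma^{(k)}$ by induction), and you make explicit the nonnegative-inverse hypothesis $(\Ibf-\Qbf^T)^{-1}=\sum_{k\geq 0}(\Qbf^T)^k\geq\0$ that the paper's appeal to the splitting theory leaves implicit.
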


\begin{proof}%[Proof of \Cref{theorem:GSVI_convergence}]
  The proof of \Cref{theorem:GSVI_convergence} is based on the proof provided by Puterman to establish the classical Gauss-Seidel VI approach~\cite[Theorem 6.3.4]{Put94}. 
  To be more precise, the statements \Cref{gstheorem:VI:operator_fixpoint} and \Cref{gsvi:operator_limit} can be shown by formulating Gauss-Seidel VI in terms of a regular splitting (see e.g.,~\cite[Chapter 6.3]{Var99})
  The linear system $(\Ibf-\Qbf^T) \cdot \xbf = \tauinit$ from \Cref{theorem:expected_visiting_times_LEQ} can be solved by splitting $(\Ibf-\Qbf^T)$ into $(\Ibf-\Qbf^T) = \Lbf - \Ubf$. 
  In particular, the splitting is represented by $\Lbf = (\Ibf-\Qbf^{T}_L)$ and $\Ubf= (\Qbf^{T}_U)$, where $\Qbf^{T}_L$ and $\Qbf^{T}_U$ are the strictly lower triangular and upper triangular part of $\Qbf^T$, respectively.
  Thus, for $n = \vert \statesTr \vert$, we have 
  \[
      \Qbf^T_{\Lbf} = 
       {\begin{pmatrix}
          0&0&\cdots &0\\
          \Qbf^T_{21}& 0 &\cdots & 0\\
          \vdots &\vdots &\ddots &\vdots \\
          \Qbf^T_{n1}& \Qbf^T_{n2}&\cdots & 0
      \end{pmatrix}} 
      \quad \text{and} \quad 
      \Qbf^T_{\Ubf}=
      {
      \begin{pmatrix}
          \Qbf^T_{11}& \Qbf^T_{12}&\cdots & \Qbf^T_{1n}\\
          0&\Qbf^T_{22}&\cdots & \Qbf^T_{2n}\\
          \vdots &\vdots &\ddots &\vdots \\
          0 & 0 & \cdots &\Qbf^T_{nn}
      \end{pmatrix}}. 
  \]
  From the proof of \Cref{theorem:VI_convergence}, we recall that $\lim_{k \to \infty} ({\Qbf^{k}})^k= \0$. 
  It follows that $\lim_{k \to \infty} ({\Qbf^{T}_L})^k= \0$ and thus, the spectral radius $\rho(\Qbf^{T}_L)$ is strictly smaller than 1. 
  Consequently, the inverse of $\Lbf$ is well-defined by the Neumann series $\Lbf^{-1} = \left(\Ibf - \Qbf^{T}_L\right)^{-1} = \sum_{k =0}^{\infty} \left({\Qbf^{T}_L}\right)^k$ (see~\cite{Neu77}). 
  In particular, we have that $\Qbf^{T}_L \geq \0$, which implies that the infinite sum converges to a non-negative matrix. 
  Since $\Ubf = \Qbf^T_U$ is also non-negative, we can conclude that the splitting above is \emph{regular}. 
  Since $(\Ibf-\Qbf^T)$ is invertible, we obtain that the iterative scheme 
  \begin{align*}
      \xbf^{(k)} = \begin{cases}
          \xbf  & \text{if $k=0$}, \\
          \Lbf^{-1}  \Ubf \cdot \xbf^{(k-1)} + \Lbf^{-1}   \cdot \tauinit& \text{otherwise}
      \end{cases}
  \end{align*} 
  defines a sequence $(\xbf^{(k)})_{k \in \nat}$ that converges to the unique solution of the linear system 
  $(\Ibf-\Qbf^T) \cdot \xbf = \tauinit$ for any starting vector $\xbf\in \evtsdomain{\statesTr}$. 
  In fact, the unique solution is equal to the vector $(\EXP{\vt_s})_{s \in \statesTr}$ by \Cref{theorem:expected_visiting_times_LEQ}. 

  It remains to show that the sequence $(\xbf^{k})_{k \in \nat}$ defined by the iterative scheme indeed coincides with the sequence $(\Gamma^{(k)}(\xbf))_{k \in \nat}$ generated by iteratively applying the operator $\Gamma$ to the vector $\xbf$. 
  The proof is conducted by induction over $k \in \nat$:  
  For $k=0$, the claim trivially holds. 
  Assuming $\Gamma^{(k)}(\xbf) = \xbf^{(k)}$ holds for a fixed $k>0$, we obtain $\Gamma^{(k+1)}(\xbf) = \Gamma(\xbf^{(k)})$. 
  In particular, for all $s_i \in \{1, \dots, \statesTr\}$, we obtain 
  \begin{align*}
      \gamma_{s_i}\bigl(\xbf^{(k)}\bigr) & = \iotainit({s_i}) +  
      \sum_{j = 1}^{i-1} \Pbf(s_j, s_i) \cdot \gamma_{s_j}\left(\xbf^{(k)}(s_j)\right) 
      + \sum_{j = i}^{\vert \statesTr \vert } \Pbf(s_j, s_i) \cdot \xbf^{(k)}(s_j) \\
      & = \tauinit({s_i}) 
      + \sum_{j = 1}^{i-1} \underbrace{\Qbf^T(s_i, s_j)}_{=-\Lbf_{ij}} \cdot \gamma_{s_j}\left(\xbf^{(k)}(s_j)\right) 
      + \sum_{j = i}^{\vert \statesTr \vert } \underbrace{\Qbf^T(s_i, s_j)}_{=\Ubf_{ij}} \cdot \xbf^{(k)}(s_j) \\
      & = \tauinit({s_i})  
      - \sum_{j = 1}^{i-1} \Lbf_{ij} \cdot \gamma_{s_j}\left(\xbf^{(k)}(s_j)\right) 
      + \sum_{j = i}^{\vert \statesTr \vert } \Ubf_{ij} \cdot \xbf^{(k)}(s_j). 
  \end{align*}
  This can equivalently be written as 
  \begin{align*}
      \Bigr(\gamma_{s_i}\bigl(\xbf^{(k)}\bigr)\Bigl)_{s_i \in \statesTr} = 
      \Lbf^{-1} \Ubf \cdot \Gamma^{(k)}(\xbf) + \Lbf^{-1} \cdot \tauinit,  
  \end{align*}
  which concludes the proof of \Cref{theorem:GSVI_convergence}. 
\end{proof}
One advantage of this modification is that the updates are performed in-place. As a result, the implementation can be realized using only one storage vector. In contrast, the standard VI algorithm requires two vectors: one for the values of the previous iteration and another one for the current iteration. 
Moreover, compared to standard VI, the Gauss-Seidel version can lead to faster convergence because more recent values are used for the update of the approximations. 
However, the Gauss-Seidel version of the VI algorithm also uses the absolute or relative stopping criterion. 
Thus, applying Gauss-Seidel VI for obtaining the EVTs of the DTMC considered in \Cref{example:VI_unsound} can also yield inaccurate results without any guarantees on the accuracy. 

\subsection{Gauss-Seidel Interval Iteration}\label{ap:GSII}
Analogous to the Gauss-Seidel version of VI, we can apply the Gauss-Seidel optimization to the II algorithm, where the update of the upper and lower bounds of EVTs of a state $s_i \in \statesTr$ 
are performed based on the already updated values for the states $s_1, \ldots, s_{i-1}$. 
Again, we assume that the states are indexed such that $s_i \preceq s_j $ holds for all $s_i, s_i \in \statesTr$, if and only if $i \leq j$ for an arbitrary but fixed linear order relation $\preceq$ on the set of transient states. 
\begin{definition}%{Gauss-Seidel Max and Min EVTs-Operator}\label{definition:GSII_operator}
    The \emph{Gauss-Seidel Max EVTs-operator} $\gsmaxvioperator \colon \evtsdomain{\statesTr} \to \evtsdomain{\statesTr}$ is defined as $\gsmaxvioperator(\xbf) = \max\left\{\xbf, \bigl(\smallgsmaxvioperator_{s_i}(\xbf)\bigr)_{s_i \in \statesTr}\right\} $ for all $\xbf \in \evtsdomain{\statesTr}$, where 
    \begin{align*}
        \smallgsmaxvioperator_{s_i}(\xbf) = \iotainit(s_i) +  \sum_{j = 1}^{i-1} \Pbf(s_j, s_i) \cdot \bigl(\gsmaxvioperator(\xbf)\bigr)(s_j) + \sum_{j = i}^{n} \Pbf(s_j, s_i) \cdot  \xbf(s_j), 
    \end{align*}%
    and the \emph{Gauss-Seidel Min EVTs-operator} $\gsminvioperator\colon \evtsdomain{\statesTr} \to \evtsdomain{\statesTr}$ is given by $\gsminvioperator(\xbf) = \min \left\{\xbf, \bigl(\smallgsminvioperator_{s_i}(\xbf)\bigr)_{s_i \in \statesTr}\right\},$
    where
    \begin{align*}
        \smallgsminvioperator_{s_i}(\xbf) = \iotainit(s_i) +  \sum_{j = 1}^{i-1} \Pbf(s_j, s_i) \cdot \bigl(\gsminvioperator(\xbf)\bigr)(s_j) + \sum_{j = i}^{n} \Pbf(s_j, s_i) \cdot \xbf(s_j).
    \end{align*}%
\end{definition} 
The Gauss-Seidel II algorithm corresponds to \Cref{alg:II}, where the operators $\gsmaxvioperator$ and $\gsminvioperator$ are used instead of $\maxvioperator$ and $\minvioperator$. 
%The correctness results stated in \Cref{mytheo:correctnessII} can also be lifted to the Gauss-Seidel II variant of \Cref{alg:II}: %(Details: \Cref{theorem:GSII_convergence})
The correctness results stated in \Cref{theorem:II_correctness} can be directly lifted to the Gauss-Seidel II variant of \Cref{alg:II} by using the following analog to \Cref{theorem:II_convergence}. 
\begin{theorem}\label{theorem:GSII_convergence}
    Let $\ubf,\lbf \in \evtsdomain{\statesTr}$ such that 
    $\lbf \leq (\EXP{\vt_s})_{s \in \statesTr} \leq \ubf$. 
    For the operators $\gsmaxvioperator$ and $\gsminvioperator$ it holds that 
   \begin{enumthm}
       \item  $\gsmaxvioperator^{(k)}(\lbf) \leq \gsmaxvioperator^{(k+1)}(\lbf) 
       \quad \text{and} \quad  \gsminvioperator^{(k)}(\ubf) \geq \gsminvioperator^{(k+1)}(\ubf)$ for all $k \in \nat,$ 
       \item $\vioperator^{(k)}(\lbf) \leq  \gsmaxvioperator^{(k)}(\lbf) \leq 
       \left(\EXP{\vt_s}\right)_{s\in \statesTr} 
       \leq \gsminvioperator^{(k)}(\ubf) \leq \vioperator^{(k)}(\lbf)$ for all $k \in \nat$, and 
       \item $ \lim_{k\to \infty} \gsmaxvioperator^{(k)}(\lbf)= 
       \left(\EXP{\vt_s}\right)_{s\in \statesTr} \quad \text{and} \quad  \lim_{k \to \infty} \ubf^{(k)}= \left(\EXP{\vt_s}\right)_{s\in \statesTr}$.
   \end{enumthm}
\end{theorem}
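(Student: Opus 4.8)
The plan is to follow the structure of the proof of \Cref{theorem:II_convergence}, replacing the Jacobi operators $\maxvioperator,\minvioperator$ by their Gauss--Seidel counterparts $\gsmaxvioperator,\gsminvioperator$ and carefully accounting for the in-place updates. Throughout, write $\xbf^\ast = (\EXP{\vt_s})_{s\in\statesTr}$ for the unique fixed point of $\vioperator$ from \Cref{theorem:VI_convergence} \Cref{theorem:VI:operator_fixpoint}. Claim (i) is immediate: since $\gsmaxvioperator(\xbf)=\max\{\xbf,(\smallgsmaxvioperator_{s_i}(\xbf))_i\}\geq\xbf$ and dually $\gsminvioperator(\xbf)\leq\xbf$ hold pointwise by definition, iterating gives $\gsmaxvioperator^{(k+1)}(\lbf)\geq\gsmaxvioperator^{(k)}(\lbf)$ and $\gsminvioperator^{(k+1)}(\ubf)\leq\gsminvioperator^{(k)}(\ubf)$. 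The substance of the theorem is the sandwich in claim (ii), from which claim (iii) will follow by a squeeze argument exactly as in \Cref{theorem:II_convergence} \Cref{theorem:II:seqconv}.

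For claim (ii) I would establish two sub-results, stated for the lower sequence (the upper sequence is symmetric). \emph{Boundedness:} if $\xbf\leq\xbf^\ast$ then $\gsmaxvioperator(\xbf)\leq\xbf^\ast$. This is the one place where the Gauss--Seidel structure forces extra care, and I would prove it by an \emph{inner} induction on the state index $i$: writing $\zbf=\gsmaxvioperator(\xbf)$, the induction hypothesis $\zbf(s_j)\leq\xbf^\ast(s_j)$ for $j<i$ together with $\xbf\leq\xbf^\ast$ lets me bound the Gauss--Seidel update
\begin{align*}
\smallgsmaxvioperator_{s_i}(\xbf) \leq \iotainit(s_i) + \sum_{j} \Pbf(s_j,s_i)\cdot\xbf^\ast(s_j) = (\vioperator(\xbf^\ast))(s_i) = \xbf^\ast(s_i),
\end{align*}
using that $\xbf^\ast$ is a $\vioperator$-fixed point; since also $\xbf(s_i)\leq\xbf^\ast(s_i)$, taking the maximum yields $\zbf(s_i)\leq\xbf^\ast(s_i)$. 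Iterating from $\lbf\leq\xbf^\ast$ gives $\gsmaxvioperator^{(k)}(\lbf)\leq\xbf^\ast$ for all $k$. \emph{Domination of Jacobi by Gauss--Seidel:} $\vioperator^{(k)}(\lbf)\leq\gsmaxvioperator^{(k)}(\lbf)$ for all $k$, proved by an outer induction on $k$. In the step I set $\ybf=\vioperator^{(k)}(\lbf)$ and $\zbf=\gsmaxvioperator^{(k)}(\lbf)$ with $\ybf\leq\zbf$, split the defining sum of $(\vioperator(\ybf))(s_i)$ at the index $i$, and bound the $j<i$ block using $(\gsmaxvioperator(\zbf))(s_j)\geq\zbf(s_j)\geq\ybf(s_j)$ and the $j\geq i$ block using $\zbf(s_j)\geq\ybf(s_j)$; this shows $(\vioperator(\ybf))(s_i)$ is at most the Gauss--Seidel update term, hence at most $(\gsmaxvioperator(\zbf))(s_i)$. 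Combining both sub-results gives $\vioperator^{(k)}(\lbf)\leq\gsmaxvioperator^{(k)}(\lbf)\leq\xbf^\ast$, and symmetrically $\xbf^\ast\leq\gsminvioperator^{(k)}(\ubf)\leq\vioperator^{(k)}(\ubf)$, which is claim (ii) (the rightmost term of the displayed chain in the statement is meant to be $\vioperator^{(k)}(\ubf)$, matching the Jacobi case).

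Finally, claim (iii) follows by squeezing: \Cref{theorem:VI_convergence} \Cref{theorem:VI:operator_limit} gives $\lim_{k\to\infty}\vioperator^{(k)}(\lbf)=\xbf^\ast=\lim_{k\to\infty}\vioperator^{(k)}(\ubf)$, so the chains $\vioperator^{(k)}(\lbf)\leq\gsmaxvioperator^{(k)}(\lbf)\leq\xbf^\ast$ and $\xbf^\ast\leq\gsminvioperator^{(k)}(\ubf)\leq\vioperator^{(k)}(\ubf)$ force both Gauss--Seidel sequences to converge to $\xbf^\ast$. I expect the main obstacle to be the boundedness sub-result: unlike the Jacobi case, $\gsmaxvioperator$ mixes already-updated entries $\zbf(s_j)$ for $j<i$ with old entries $\xbf(s_j)$ for $j\geq i$, so the comparison with $\xbf^\ast$ cannot be carried out componentwise in one shot but must respect the update order via the inner induction on $i$; the same index-splitting underlies the domination argument. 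An alternative route for (iii) would be to invoke the regular-splitting convergence of \Cref{theorem:GSVI_convergence} directly, but the squeeze argument is shorter and reuses exactly the bounds already needed for (ii).
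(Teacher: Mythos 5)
Your proposal is correct and takes essentially the same route as the paper, whose entire proof is the remark that the argument works ``analogously to the proof of \Cref{theorem:II_convergence}, using the monotonicity of $\Gamma$'': your inner induction on the state index $i$ (bounding the already-updated entries $(\gsmaxvioperator(\xbf))(s_j)$, $j<i$, before the stale entries $\xbf(s_j)$, $j\geq i$) is exactly the detail that substantiates that monotonicity claim for the in-place Gauss--Seidel updates, and your two sub-results mirror the paper's two inductions for claim (ii) followed by the same squeeze for claim (iii). You are also right that the statement contains typos --- the rightmost term in claim (ii) should read $\vioperator^{(k)}(\ubf)$ and the second limit in claim (iii) should read $\lim_{k\to\infty}\gsminvioperator^{(k)}(\ubf)$ --- and your proof establishes the corrected chain.
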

\begin{proof}
    Using the monotonicity of $\Gamma$ the proof works analogously to the proof of \Cref{theorem:II_convergence}, where the occurrences of $\maxvioperator$ and $\minvioperator$ are replaced by $\gsmaxvioperator$ and $\gsminvioperator$, respectively. 
\end{proof}

\newpage
\section{Absolute $\epsilon$-Sound Topological Computation}\label{ap:topo_abs}

The total error can exceed $\epsilon$ if an imprecise method is used in \Cref{alg:topo_exact:line:compute} of \Cref{alg:topo_exact} even if it guarantees $\epsilon$-sound approximations. 
In the following, we calculate an upper bound on the accumulated error with respect to the absolute difference. 

To study the extent of the error, we consider three factors that have an impact on the error accumulation; 
the incoming transitions from different SCCs, the recurrence probability, and the length of the SCC chains. 
Below, we discuss the intuition behind each cause. For simplicity, we assume that the parametric SCC restriction $\dtmcRestr{C}[\xbf]$ is a well-defined DTMC for each SCC $C$. 
\begin{itemize}
  \item \emph{Incoming transitions}: 
  As elaborated above, the SCC restriction $\dtmcRestr{C}[\xbf]$ for the non-bottom SCC $C$ is based on approximations $\xbf(t)$ of the EVTs of states $t$ in $\dtmcRestr{C'}[\xbf]$ for SCCs $C' \hookrightarrow C$. 
  We assume the vector $\xbf$ satisfies $\xbf(t) = \EXPstateMC{\vt_t}{}{\dtmc}+ \epsilon_{t}$ for each $t \in C'$, $C' \hookrightarrow C$, and $\abs{\epsilon_{t}} \leq \epsilon$. 
  For a state $s \in C$, the initial distribution of $\dtmcRestr{C}[\xbf]$ (see \Cref{definition:param_SCC_restriction}) is specified by 
  \begin{align*}
    \iotainit^{\dtmcRestr{C}[\xbf]}(s) = 
    \underbrace{\iotainit^{\dtmc}(s)+\sum_{t \in \statesTr \setminus C} \Pbf^{\dtmc}(t,s) \cdot \EXPstateMC{\vt_t}{}{\dtmc}}_{= \iotainit^{\dtmcRestr{C}}(s)} +
     \sum_{t \in \statesTr \setminus C} \Pbf^{\dtmc}(t,s) \cdot \epsilon_{t}.
  \end{align*}
  Hence, the error in the initial distribution is given by $\sum_{t \in \statesTr \setminus C} \Pbf^{\dtmc}(t,s) \epsilon_{t}$, which can generally exceed $\epsilon$. 
  \item \emph{Recurrence probability}: To study the effect of recurrence probability, by~\cite[Lemma 3.5.]{Bai+17} which states that EVTs in $\dtmcRestr{C}[\xbf]$ equal 
  \begin{align*}
    \EXPstateMC{\vt_s}{}{\dtmcRestr{C}[\xbf]} & = 
    \PRstateMC{}{\dtmcRestr{C}[\xbf]}(\event \{s\}) 
    \cdot \frac{1}{1-\PRstateMC{s}{\dtmcRestr{C}[\xbf]}(\nextLTL \event \{s\})}.  
  \end{align*}
  Assuming the initial distribution of $\dtmcRestr{C}[\xbf]$ is erroneous yields that the reachability probability $\PRstateMC{}{\dtmcRestr{C}[\xbf]}(\event \{s\})$ in the parametric SCC restriction $\dtmcRestr{C}[\xbf]$ differs from the corresponding probability $\PRstateMC{}{\dtmcRestr{C}}(\event \{s\})$ in the exact SCC restriction $\dtmcRestr{C}$. 
  It follows that the 
  error in the reachability probability 
  $\PRstateMC{}{\dtmcRestr{C}[\xbf]}(\event \{s\})$ 
  is weighted by $\frac{1}{1- \PRstateMC{s}{\dtmcRestr{C}}(\nextLTL \event \{s\})}$. 
  Thus, the difference between the EVTs in $\dtmcRestr{C}[\xbf]$ and 
  the EVTs of the DTMC $\dtmc$ 
  increases as the recurrence probability tends to 1. 
  \item \emph{Chain length}: The analysis of an 
  SCC $C$ in \Cref{alg:topo_exact} is based on a potentially erroneous 
  model $\dtmcRestr{C}[\xbf]$. 
  Consequently, the EVTs in $\dtmcRestr{C}[\xbf]$ might differ 
  from the EVTs in the original model $\dtmc$ as described above. 
  Moreover, the vector 
  $(\widehat{\xbf}(s))_{s \in C}$ that is computed 
  in \Cref{alg:topo_exact:line:compute} of \Cref{alg:topo_exact} 
  is $\epsilon$-sound with respect to the EVTs in $\dtmcRestr{C}[\xbf]$ and 
  the approximated EVT $\xbf(s)$ of the state $s\in C$ 
  can differ by an additional error $\epsilon$ 
  from the \textemdash{} already erroneous  \textemdash{} EVT of $s$ in $\dtmcRestr{C}[\xbf]$. 
  Thus, the error is propagated further since 
  the values $\xbf(s)$ are considered in the 
  analysis of the successor SCCs of $C$. 
  In particular, the error accumulates 
  up to $L$ times if the longest chain in $\dtmc$ 
  that ends in a non-bottom SCC has length $L$. 
\end{itemize}

\begin{lemma}\label{theorem:topo_bound_abs}
    Let $\epsilon>0$ and 
    let $\xbf \in \evtsdomain{\statesTr}$ be a vector such that 
    for every non-bottom SCC $C$, the vector 
    $(\xbf(s))_{s\in C}$ is an $\epsilon$-sound approximation of 
    the EVTs  
    $(\EXPstateMC{\vt_s}{}{\dtmcRestr{C}[\xbf]})_{s \in C}$ with respect to the absolute difference, 
    i.e., 
    \begin{align*}
      \diff{abs}\left((\xbf(s))_{s\in C}, (\EXPstateMC{\vt_s}{}{\dtmcRestr{C}[\xbf]})_{s \in C} \right) \leq \epsilon. 
    \end{align*}
    Then, the EVTs in $\dtmc$ satisfy
    \begin{align*}
      \diff{abs}\left(\xbf, (\EXPstateMC{\vt_s}{}{\dtmc})_{s \in \statesTr} \right) \leq \epsilon \cdot \Bigl(1+ \frac{1}{q}  \cdot \sum_{i=1}^{L} T^i \Bigr), 
    \end{align*}
    where \begin{itemize}
      \item $T$ is the maximal number of incoming transitions into an SCC, i.e., 
      \begin{align*}
      T = \max_{C \in \sccTr{\dtmc}}
      \left\{ \abs*{ \{(s,t)  \mid s \in  S \setminus C , t \in C \text{ and }\Pbf^{\dtmc}(s,t) >0 \}} \right\}, 
    \end{align*}
      \item For each $s \in \statesTr$, $q$ under-approximates the probability of starting in $s$ and never revisit $s$ from the next step on: $0< q \leq \min_{s \in \statesTr} \left\{ 1-\PRstateMC{s}{\dtmc}(\nextLTL \event \{s\}) \right\}$,  
      \item $L$ is the length of the longest chain ending in a non-bottom SCC, i.e., 
      \begin{align*}
          L = \max_{\kappa \in \chainsTr{\dtmc}} \left\{\abs{\kappa} \right\}. 
      \end{align*}
    \end{itemize}
\end{lemma}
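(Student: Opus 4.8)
The plan is to mirror the inductive strategy behind the relative bound \Cref{theorem:topo_bound_rel}, but to propagate \emph{additive} errors. Write $C(s)$ for the SCC containing $s$ and $\eta_{C}(s) = \xbf(s) - \EXPstateMC{\vt_s}{}{\dtmcRestr{C}[\xbf]}$ for the local error committed by the black box in \Cref{alg:topo_exact:line:compute}, so that $\abs{\eta_C(s)} \le \epsilon$ by hypothesis. I would first record, by unfolding the EVT fixed point inside $C$, the series representation
\begin{align}\label{EQ:topo_abs}
    \EXPstateMC{\vt_s}{}{\dtmcRestr{C}[\xbf]}
    = \sum_{t\in C}\Big(\sum_{i=0}^{\infty}\Pbf^i(t,s)\Big)\Big(\iotainit(t) + \sum_{\substack{t'\in C' \\ C'\hookrightarrow C}} \Pbf(t',t)\cdot\xbf(t')\Big),
\end{align}
where $\sum_{i\ge 0}\Pbf^i(t,s) = \EXPstateMC{\vt_s}{t}{\dtmcRestr{C}}$ is a fundamental-matrix entry of the restriction. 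For a source SCC $C$ (no non-bottom predecessor) no inter-SCC transition enters $C$, hence $\dtmcRestr{C}[\xbf] = \dtmcRestr{C}$ and \Cref{theorem:EVTs_in_SCC_restriction} gives $\abs{\xbf(s) - \EXPstateMC{\vt_s}{}{\dtmc}} \le \epsilon$; this is the base case and also the ``equivalence'' invoked in the proof of \Cref{theorem:topo_bound_rel}.

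A naive substitution of the inductive hypothesis into \Cref{EQ:topo_abs} multiplies the inherited error by a fresh factor $\tfrac1q$ at every chain level and only yields the far weaker $\epsilon\sum_{i=0}^{L}(T/q)^i$. To obtain the stated constant I would instead pass to a \emph{global} residual identity: subtracting the exact equation of \Cref{theorem:expected_visiting_times_LEQ} from the (perturbed) equation satisfied by $\xbf$, the error vector $\boldsymbol{\delta} = \xbf - (\EXPstateMC{\vt_s}{}{\dtmc})_{s\in\statesTr}$ solves $(\Ibf - \Qbf^T)\,\boldsymbol{\delta} = \boldsymbol{\xi}$ with a right-hand side $\boldsymbol{\xi}$ supported on the local errors $\eta$. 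Inverting with the global fundamental matrix $(\Ibf - \Qbf^T)^{-1} = (\EXPstateMC{\vt_s}{t}{\dtmc})_{s,t\in\statesTr}$ and cancelling the within-SCC contributions (which telescope, leaving exactly $\eta_{C(s)}(s)$) yields
\begin{align*}
    \xbf(s) - \EXPstateMC{\vt_s}{}{\dtmc}
    = \eta_{C(s)}(s) + \sum_{(r,t)\,:\,C(r)\hookrightarrow C(t)} \EXPstateMC{\vt_s}{t}{\dtmc}\cdot\Pbf(r,t)\cdot\eta_{C(r)}(r),
\end{align*}
i.e.\ one contribution per inter-SCC ``error-injection'' transition $(r,t)$. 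The decisive point is that each weight is bounded by $\EXPstateMC{\vt_s}{t}{\dtmc} = \PRstateMC{t}{\dtmc}(\event\{s\}) \cdot \tfrac{1}{1 - \PRstateMC{s}{\dtmc}(\nextLTL\event\{s\})} \le \tfrac1q$ \emph{independently of how far $t$ lies from $s$} (cf.\ \cite[Lem.~3.5]{Bai+17}); this is precisely why $\tfrac1q$ enters only once and never compounds, and why $q$ is defined as a \emph{global} under-approximation of $1 - \PRstateMC{s}{\dtmc}(\nextLTL\event\{s\})$.

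It then remains to count the contributing transitions. Only pairs $(r,t)$ with $\EXPstateMC{\vt_s}{t}{\dtmc} > 0$, i.e.\ with $t$ able to reach $s$, matter. Organising them by backward level, there are at most $T$ inter-SCC transitions into $C(s)$ (level $1$); these emanate from at most $T$ predecessor SCCs, each receiving at most $T$ incoming transitions (level $2$), and so on, so level $i$ holds at most $T^i$ transitions, while no level exceeds $L$ since an SCC at chain-distance $L$ from $C(s)$ has no predecessor. Hence at most $\sum_{i=1}^{L} T^i$ transitions contribute, and bounding every local error by $\epsilon$ and every weight by $\tfrac1q$ gives $\abs{\xbf(s) - \EXPstateMC{\vt_s}{}{\dtmc}} \le \epsilon + \tfrac{\epsilon}{q}\sum_{i=1}^{L} T^i$; taking the maximum over $s \in \statesTr$ finishes the proof. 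I expect the main obstacle to be exactly this twofold bookkeeping: verifying the cancellation/telescoping in the global residual identity, and the level-counting that produces $\sum_{i=1}^{L} T^i$ while keeping every weight at $\tfrac1q$ — this is the step that distinguishes the tight constant from the naive compounding one.
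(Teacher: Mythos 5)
Your proof is correct, and it takes a genuinely different route from the paper's. The paper argues by induction on the length of the longest non-bottom SCC chain ending in the SCC of $s$: it expands $\EXPstateMC{\vt_s}{}{\dtmcRestr{C}[\xbf]}$ via the Neumann series, inserts the inductively obtained error bound $err(t')$ for states in predecessor SCCs, and blocks exactly the compounding you identify as the danger through the pointwise inequality $\Pbf(t',t) \leq 1-\PRstateMC{t'}{}(\nextLTL \event \{t'\})$, which yields $\Pbf(t',t)\cdot err(t') \leq \iverson{\Pbf(t',t)>0}\cdot\epsilon\bigl(1+\sum_{i=1}^{k-1}T^i\bigr)$: the recurrence factor inherited from a predecessor SCC is absorbed by the probability of the very transition that leaves it, so inherited errors are merely \emph{counted}, and only the current SCC's recurrence factor survives. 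You instead invert the global residual system $(\Ibf-\Qbf^T)\boldsymbol{\delta}=\boldsymbol{\xi}$ once, with $\boldsymbol{\xi}(t) = \eta_{C(t)}(t)-\sum_{r\in C(t)}\Pbf(r,t)\,\eta_{C(t)}(r)$, and both steps you flag as obstacles do go through: regrouping $\boldsymbol{\delta}(s)=\sum_{t}\EXPstateMC{\vt_s}{t}{\dtmc}\,\boldsymbol{\xi}(t)$ by error source, the coefficient of $\eta_{C(r)}(r)$ is $\EXPstateMC{\vt_s}{r}{\dtmc}-\sum_{t\in C(r)}\Pbf(r,t)\EXPstateMC{\vt_s}{t}{\dtmc}$, which by the one-step equation $\EXPstateMC{\vt_s}{r}{\dtmc}=\iverson{r=s}+\sum_{t\in\statesTr}\Pbf(r,t)\EXPstateMC{\vt_s}{t}{\dtmc}$ equals $\iverson{r=s}+\sum_{t\in\statesTr\setminus C(r)}\Pbf(r,t)\EXPstateMC{\vt_s}{t}{\dtmc}$ (so ``telescoping'' is loose wording, but your claimed identity is exactly right); and your level count is sound, including the cap at level $L$, since a contributing transition at level $L+1$ would prepend $C(r)\hookrightarrow C(t)$ to a non-bottom chain of length $L$, contradicting maximality of $L$. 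As for what each approach buys: yours makes structurally transparent why $1/q$ enters exactly once (a single global fundamental-matrix weight per injected error), gives $\sum_{i=1}^{L}T^i$ a concrete combinatorial meaning as a bound on the number of error-injecting transitions, and handles upper and lower deviations simultaneously via absolute values, whereas the paper needs a symmetric second pass for the lower bound; the paper's induction, on the other hand, stays entirely inside the SCC restrictions that \Cref{alg:topo_exact} actually manipulates, directly mirrors the relative-error proof of \Cref{theorem:topo_bound_rel}, reuses \Cref{theorem:EVTs_constrPr} and \cite[Lem.~3.5]{Bai+17}, and retains the per-state factor $\frac{1}{1-\PRstateMC{s}{\dtmcRestr{C}}(\nextLTL\event\{s\})}$ until the final step, which is marginally sharper than substituting the global $q$ from the outset.
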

\begin{proof}
  Consider $\epsilon >0$ and $\xbf \in \evtsdomain{\statesTr}$ as described above. 
  Let $C$ be a non-bottom SCC and let $\dtmcRestr{C} = \dtmcRestrTuple{\dtmcRestr{C}}{}$ be the corresponding SCC restriction. 
  We show $\xbf(s) \leq \EXPstateMC{\vt_s}{}{\dtmc} +  \epsilon \cdot  \left( 1+ \frac{1}{q} \cdot \sum_{i=1}^{L} T^i \right)$ holds for all $s \in \statesTr$. 
  Proving that $\xbf(s) \geq \EXPstateMC{\vt_s}{}{\dtmc} -  \epsilon \cdot  \left( 1+ \frac{1}{q} \cdot \sum_{i=1}^{L} T^i \right)$ works analogously. 

  We start by showing that for all $s \in C$, it holds that $\xbf(s) \leq \EXPstateMC{\vt_s}{}{\dtmc} + err(s)$, where 
  \begin{align}\label{errdef}
    err(s) = 
    \epsilon \cdot \left(1+ 
    \frac{1}{1-\PRstateMC{s}{\dtmcRestr{C}}(\nextLTL \event \{s\})} \cdot \sum_{i=1}^{k} T^i \right)
  \end{align} and 
  $k$ is 
  the length of the longest SCC chain 
  $C_0 \hookrightarrow \dots \hookrightarrow C_k \in \chainsTr{\dtmc}$ such that $C_k =C$. 
  The proof is by induction over the chain length $k \in \nat$. 

  If $k=0$, we know that $C$ is not reachable via other SCCs. 
  Thus, for the SCC restriction $\dtmcRestr{C} = \dtmcRestrTuple{C}{}$ and the parametric SCC restriction $\dtmcRestr{C}[\xbf] =  \dtmcRestrTuple{\dtmcRestr{C}}{[\xbf]}$ we have for $s \in C$
   \begin{align*}
    \iotainit^{\dtmcRestr{C}}(s) &= 
   \iotainit^{\dtmc}(s)+ \sum_{\substack{t \in C' \not = C \\ C'\hookrightarrow C} } 
   \Pbf(t,s) \cdot \EXPstateMC{\vt_s}{}{\dtmc} = \iotainit^{\dtmc}(s) =  \iotainit^{\dtmcRestr{C}[\xbf]}(s). 
   \end{align*}
  We conclude that $\dtmcRestr{C}[\xbf]$ is equal to the DTMC $\dtmcRestr{C}$. 
  Hence, for every $s \in C$, we have 
  \begin{align*}
    \xbf(s) &\leq \EXPstateMC{\vt_s}{}{\dtmcRestr{C}[\xbf]}+ \epsilon && \text{by assumption on $\xbf$}\\
    & =  \EXPstateMC{\vt_s}{}{\dtmcRestr{C}}+ \epsilon \\
      & = \EXPstateMC{\vt_s}{}{\dtmc} +  err(s) && \text{by \Cref{theorem:EVTs_in_SCC_restriction}}. 
  \end{align*} 
  Let $C$ be a non-bottom SCC, 
  where $k > 0$ is the length of the longest SCC chain ending sin $C$. 
  We assume that the claim holds for all SCCs that are reachable by a chain that has at most length $k-1$. 
  
  Furthermore, let $\dtmcRestr{C}[\xbf]$ be the parametric restriction to $C$. 
  By \Cref{remark:sccrestr}, we have  
  \begin{align*}
    \bigl(\EXPstateMC{\vt_s}{}{\dtmcRestr{C}[\xbf]}\bigr)_{s \in C} 
    &= \Bigl(\Ibf - (\Qbf^{\dtmcRestr{C}})^T\Bigr)^{-1} \cdot 
    \tauinit^{\dtmcRestr{C}[\xbf]}. 
  \end{align*} 
  We know that the inverse of $\Ibf - (\Qbf^{\dtmcRestr{C}})^T$ is given by the following converging Neumann series (see~\cite{Neu77}): 
  $\bigl(\Ibf - (\Qbf^{\dtmcRestr{C}})^T\bigr)^{-1} = \sum_{i=0}^{\infty} \bigl(\Qbf^{\dtmcRestr{C}}\bigr)^{i}$. 
  Thus, for a single state $s \in C$, we have 
  \begin{align}
    \label{EQ:topo_abs}
    \EXPstateMC{\vt_s}{}{\dtmcRestr{C}[\xbf]} 
    &= 
    \sum_{t \in C} \sum_{i=0}^{\infty} \Pbf^i(t,s) \cdot \iotainit^{\dtmcRestr{C}[\xbf]}(t) \nonumber \\
    & = \sum_{t \in C} \sum_{i=0}^{\infty} \Pbf^i(t,s) \cdot 
    \Bigl( \iotainit^{\dtmc}(t) + \sum_{\substack{t' \in C' \\ C'\hookrightarrow C}} \Pbf(t', t) \cdot \xbf(t') \Bigr). 
  \end{align}
  If $C' \hookrightarrow C$ holds, we conclude that the longest 
  chain of non-bottom SCCs $\kappa \in \chainsTr{\dtmc}$ that ends in $C'$ satisfies 
  $\abs{\kappa} < k$. 
  Hence, by the induction hypothesis, it holds that  
  $\xbf(t') \leq \EXPstateMC{\vt_{t'}}{}{\dtmc} + err(t')$ for every $t' \in C'$. 
  Consequently, we obtain that 
  \begin{align*}
    \EXPstateMC{\vt_s}{}{\dtmcRestr{C}[\xbf]} 
    &\leq \sum_{t \in C} \sum_{i=0}^{\infty} \Pbf^i(t,s) 
    \cdot \Biggl(
      \iotainit^{\dtmc}(t) + 
      \sum_{\substack{t' \in C' \\ C'\hookrightarrow C}} \Pbf(t', t) 
      \cdot \left( \EXPstateMC{\vt_{t'}}{}{\dtmc} + err(t')  \right)
    \Biggr) \nonumber \\
    & = \sum_{t \in C} \sum_{i=0}^{\infty} \Pbf^i(t,s) 
    \cdot 
    \Biggl(
      \iotainit^{\dtmc}(t) 
      + \sum_{\substack{t' \in C' \\ C'\hookrightarrow C}} \Pbf(t', t) \cdot \EXPstateMC{\vt_{t'}}{}{\dtmc} \nonumber\\
    & \phantom{={}} + \sum_{\substack{t' \in C' \\ C'\hookrightarrow C}} \Pbf(t', t) \cdot err(t') \Biggr) \nonumber\\
    & = \sum_{t \in C} \sum_{i=0}^{\infty} \Pbf^i(t,s) 
    \cdot 
    \Biggl(
      \PRstateMC{}{\dtmc}\bigl((S \setminus C )\until \{t\} \bigr) 
      + \hspace{-0.3cm} \sum_{\substack{t' \in C' \\ C'\hookrightarrow C}} \Pbf(t', t) \cdot err(t') 
      \Biggr) \nonumber \tag*{ by \Cref{theorem:EVTs_constrPr}}\\
    & = 
    \sum_{t \in C} \sum_{i=0}^{\infty} \Pbf^i(t,s) \cdot 
    \Biggl(
     \iotainit^{\dtmcRestr{C}}(t) 
    + \sum_{\substack{t' \in C'\\ C'\hookrightarrow C}} \Pbf(t', t) \cdot  err(t') 
    \Biggr) \tag*{by \Cref{definition:param_SCC_restriction}.}  \nonumber
  \end{align*}
  From \Cref{theorem:expected_visiting_times_LEQ} it 
  follows that 
  \[ (\EXPstateMC{\vt_s}{}{\dtmcRestr{C}})_{s \in C} = 
  (\Ibf - (\Qbf^{\dtmcRestr{C}})^T)^{-1} \cdot \iotainit^{\dtmcRestr{C}}. \] 
  Moreover, the inverse of $(\Ibf - (\Qbf^{\dtmcRestr{C}})^T)$ is given by the following converging Neumann series (see~\cite{Neu77}): 
  \[(\Ibf - (\Qbf^{\dtmcRestr{C}})^T)^{-1} = 
  \sum_{i=0}^{\infty} (\Qbf^{\dtmcRestr{C}})^i\] 
  It follows that 
  $\EXPstateMC{\vt_s}{}{\dtmcRestr{C}} =  \sum_{t \in C} \sum_{i=0}^{\infty} \Pbf^i(t,s) \cdot \iotainit^{\dtmcRestr{C}}(t)$. 
  By continuing the reasoning above, we obtain that  
  \begin{align*} 
    \EXPstateMC{\vt_s}{}{\dtmcRestr{C}[\xbf]}  & \leq
    \left( \sum_{t \in C} \sum_{i=0}^{\infty} \Pbf^i(t,s) 
    \cdot \iotainit^{\dtmcRestr{C}}(t)\right) \\
    & \phantom{={}} + \left(  \sum_{t \in C} \sum_{i=0}^{\infty}  \Pbf^i(t,s) \sum_{t' \in C'} \Pbf(t', t) \cdot  err(t') \right) \\
    &\leq \EXPstateMC{\vt_s}{}{\dtmcRestr{C}} + \sum_{t \in C} 
      \sum_{i=0}^{\infty} \Pbf^i(t,s) 
      \cdot \sum_{\substack{t' \in C' \\ C'\hookrightarrow C}}  \Pbf(t', t) \cdot  err(t'). 
  \end{align*}
  \cite[Theorem 3.2.4]{KS76} implies that the EVT of $s$ in the DTMC $\dtmcRestr{C}$ starting from a state $t$ is equal to $\EXPstateMC{\vt_s}{t}{\dtmcRestr{C}}= \sum_{i=0}^{\infty} \Pbf^i(t,s)$. 
  Thus, 
  \begin{align}\label{equationtopoabs2} 
    \EXPstateMC{\vt_s}{}{\dtmcRestr{C}[\xbf]} & \leq 
    \EXPstateMC{\vt_s}{}{\dtmcRestr{C}} +  \sum_{t \in C}
    \EXPstateMC{\vt_s}{t}{\dtmcRestr{C}}
     \cdot \sum_{\substack{t' \in C' \\ C'\hookrightarrow C}}  \Pbf(t', t) \cdot  err(t')  \nonumber\\
    & = 
    \EXPstateMC{\vt_s}{}{\dtmcRestr{C}}  + 
    \sum_{t \in C} 
    \frac{ \PRstateMC{t}{\dtmcRestr{C}}(\event \{s\}) }{1-\PRstateMC{s}{\dtmcRestr{C}}(\nextLTL \event \{s\})} 
    \cdot \sum_{\substack{t' \in C' \\ C'\hookrightarrow C}}  \Pbf(t', t) 
    \cdot  err(t')  \nonumber 
    && \text{by \cite[Lemma 3.5.]{Bai+17}} \\
    & \leq 
     \EXPstateMC{\vt_s}{}{\dtmcRestr{C}} + 
     \frac{1}{1-\PRstateMC{s}{\dtmcRestr{C}}(\nextLTL \event \{s\})} 
     \cdot \sum_{t \in C} \sum_{\substack{t' \in C' \\ C'\hookrightarrow C}} \Pbf(t',t) \cdot 
     err(t'). 
\end{align}
For each SCC chain 
$\kappa \in \chainsTr{\dtmc}$ 
that ends in an SCC $C'$ with $C' \hookrightarrow C$ we have 
that $\abs{\kappa} \leq k-1$. 
Consequently, for each $t' \in C'$, we obtain that 
  \begin{align}\label{ineq1}
    err(t') \leq 
  \epsilon \cdot \left(1+ 
  \frac{1}{1-\PRstateMC{t'}{\dtmcRestr{C}}(\nextLTL \event \{t'\})} \cdot \sum_{i=1}^{k-1} T^i \right). 
  \end{align}
  Moreover, 
  for all $t\in C$, it holds that 
  $\Pbf(t',t)$ is at most the probability of leaving the SCC $C'$. 
  In particular, 
  $\Pbf(t',t) \leq 1-\PRstateMC{t'}{\dtmcRestr{C}}(\nextLTL \event \{t'\})$, which 
  implies that 
  \begin{align}\label{ineq2}
    \frac{\Pbf(t',t)}{1-\PRstateMC{t'}{\dtmcRestr{C}}(\nextLTL \event \{t'\})} \leq \iverson{\Pbf(t',t) >0}.
  \end{align}
  By combining \Cref{ineq1,ineq2}, we obtain that 
  \begin{align*} 
    \Pbf(t',t) \cdot err(t')  
    & \leq \Pbf(t',t) 
    \cdot \epsilon \cdot \left(1+ 
    \frac{1}{1-\PRstateMC{t'}{\dtmcRestr{C}}(\nextLTL \event \{t'\})} \cdot \sum_{i=1}^{k-1} T^i \right) %&&\text{by \Cref{ineq1}}
    \\ 
    & = \epsilon \cdot \left( \Pbf(t',t)  + 
    \frac{\Pbf(t',t) }{1-\PRstateMC{t'}{\dtmcRestr{C}}(\nextLTL \event \{t'\})} \cdot \sum_{i=1}^{k-1} T^i \right) \\
    & \leq \iverson{\Pbf(t',t) >0} \cdot \epsilon \cdot \Bigl(1 + \sum_{i=1}^{k-1} T^i \Bigr) %&&\text{by \Cref{ineq2}.} 
 \end{align*}
  %i.e., 
  %$\frac{\Pbf(t',t)}{1-\PRstateMC{t'}{\dtmcRestr{C}}(\nextLTL \event \{t'\})} \leq 1$
  Using this inequality in \Cref{equationtopoabs2} yields 
  \begin{align*}
    \EXPstateMC{\vt_s}{}{\dtmcRestr{C}[\xbf]} 
    & \leq 
    \EXPstateMC{\vt_s}{}{\dtmcRestr{C}} + 
      \frac{1}{1-\PRstateMC{s}{\dtmcRestr{C}}(\nextLTL \event \{s\})}  \\
    & \phantom{={}}
      \cdot 
      \sum_{t \in C} \sum_{\substack{t' \in C' \\ C'\hookrightarrow C}} 
      \iverson{\Pbf(t',t) >0}
      \cdot \epsilon \cdot \Bigl(1 + \sum_{i=1}^{k-1} T^i \Bigr) 
  \end{align*}
  The sum  $\sum_{t \in C} \sum_{\substack{t' \in C' \\ C'\hookrightarrow C}} 
  \iverson{\Pbf(t',t) >0}$ is equal to the number of incoming transitions to $C$. 
  Since $T$ is an upper bound for this number, we conclude that 
  \begin{align*}
    \EXPstateMC{\vt_s}{}{\dtmcRestr{C}[\xbf]} 
    & \leq 
      \EXPstateMC{\vt_s}{}{\dtmcRestr{C}} + 
        \frac{1}{1-\PRstateMC{s}{\dtmcRestr{C}}(\nextLTL \event \{s\})} 
        \cdot T \cdot 
        \epsilon \cdot \Bigl(1 + \sum_{i=1}^{k-1} T^i \Bigr) \\
    &= \EXPstateMC{\vt_s}{}{\dtmcRestr{C}} + 
    \frac{1}{1-\PRstateMC{s}{\dtmcRestr{C}}(\nextLTL \event \{s\})} 
    \cdot \epsilon \cdot \sum_{i=1}^{k} T^i. 
\end{align*}
  By assumption, the vector $(\xbf(s))_{s \in C}$ is $\epsilon$-sound with respect to the absolute difference from the vector $(\EXPstateMC{\vt_s}{}{\dtmcRestr{C}[\xbf]})_{s \in C}$.  
  Consequently, we have that 
  \begin{align*}
    \xbf(s) 
    &\leq \EXPstateMC{\vt_s}{}{\dtmcRestr{C}[\xbf]} + \epsilon  \\
    & \leq \EXPstateMC{\vt_s}{}{\dtmcRestr{C}}+ \frac{1}{1-\PRstateMC{s}{\dtmcRestr{C}}(\nextLTL \event \{s\})} \cdot \epsilon \cdot \sum_{i=1}^{k} T^i +\epsilon\\
    & \leq \EXPstateMC{\vt_s}{}{\dtmcRestr{C}}+ \epsilon   \cdot  \left( 1+ \frac{1}{1-\PRstateMC{s}{\dtmcRestr{C}}(\nextLTL \event \{s\})}  \cdot \sum_{i=1}^{k} T^i \right)  \\
    & \leq \EXPstateMC{\vt_s}{}{\dtmc} + \underbrace{ \epsilon   \cdot \left( 1+ \frac{1}{1-\PRstateMC{s}{\dtmcRestr{C}}(\nextLTL \event \{s\})}  \cdot \sum_{i=1}^{k} T^i \right)}_{=err(s)} && \text{by \Cref{theorem:EVTs_in_SCC_restriction}.}
  \end{align*}
  Thus, we have shown that the claim holds for all $k \in \nat$. 

Next, let $s \in \statesTr$ and $C$ be the SCC $C$ of $s$.
Additionally, assume that $k$ is the length of the longest SCC chain 
$\kappa \in \chainsTr{\dtmc}$ ending in $C$. 
The length of the longest SCC chain containing only transient SCCs in $\dtmc$ is equal to $L$ 
and by the choice of $q$, we have that 
$\frac{ 1 }{1-\PRstateMC{s}{\dtmcRestr{C}}(\nextLTL \event \{s\})}\leq \frac{1}{q}$. 
Thus, 
\begin{align*} 
\xbf(s) &\leq \EXPstateMC{\vt_s}{}{\dtmc} + \epsilon \cdot \left( 1+ \frac{1}{1-\PRstateMC{s}{\dtmcRestr{C}}(\nextLTL \event \{s\})} \cdot \sum_{i=1}^{k} T^i \right)\\
&\leq \EXPstateMC{\vt_s}{}{\dtmc} + \epsilon \cdot  \left( 1+  \frac{1}{q} \cdot \sum_{i=1}^{L} T^i \right), 
\end{align*} 
which concludes the proof of \Cref{theorem:topo_bound_abs}. 
\end{proof}

Using the upper bounds on the absolute differences provided in \Cref{theorem:topo_bound_abs}, we derive an algorithm for the topological computation of $\epsilon$-sound results. 
The value $q$ satisfying $0 < q \leq \min_{s \in \statesTr} \left\{ (1-\PRstateMC{s}{\dtmc}(\nextLTL \event \{s\})) \right\}$, can be computed automatically, for example, by exploiting the graph-based methods proposed by Baier et al.~\cite[Section 3.3]{Bai+17} that bound the recurrence probabilities $\PRstateMC{s}{\dtmc}(\nextLTL \event \{s\})$ from above. 

\begin{algorithm}[t]
  \Indm{} 
      \Input{DTMC $\dtmc$, $\epsilon >0$ 
      for $crit = abs$ and $\epsilon \in (0,1)$ for $crit=rel$}
      \Output{$\xbf  \in \evtsdomain{\statesTr}$ with $\diff{crit}(\xbf, (\EXP{\vt_s})_{s \in \statesTr}) \leq \epsilon$}    
  \Indp{}
    $\{C_1 \dots C_n\} \leftarrow \sccTr{\dtmc}$ \tcp*{obtain the non-bottom SCCs, $C_i \hookrightarrow C_j$ implies $i<j$}
    
    $\xbf \leftarrow \0$
    
    compute $q$ with $0< q \leq \min_{s \in \statesTr} \left\{ (1-\PRstateMC{s}{\dtmc}(\nextLTL \event \{s\})) \right\}$ \tcp{e.g.,~\cite[Section 3.3]{Bai+17}}
    
    $T \leftarrow \max_{C \in \sccTr{\dtmc}} \left\{ \abs{ \{(s,t)  \mid s \in  S \setminus C , t \in C \text{ and }\Pbf(s,t) >0 \}} \right\}$ 
    
   % $q \leftarrow \min_{s\in \statesTr} \left\{q(s) \right\}$ \tcp*{$q \leq \min_{s \in \statesTr} \left\{ (1-\PRstateMC{s}{\dtmc}(\nextLTL \event \{s\})) \right\}$}
    
    $L \leftarrow \max_{\kappa \in \chainsTr{\dtmc}} \left\{\abs{\kappa} \right\}$

    $\sigma \leftarrow \frac{\epsilon}{ 1+ \frac{1}{q} \cdot \sum_{i=1}^{L} T^i }$ \tcp*{compute the absolute threshold}\label{alg:topo_abs_line:sigma}

      \For{$k\leftarrow 1$ \KwTo{} $n$}
      { 

        compute $\widehat{\xbf} \in \evtsdomain{C}$ with 
        $\diff{crit}(\widehat{\xbf}, (\EXPstateMC{\vt_s}{}{\dtmcRestr{C_k}})_{s \in C_k}) \leq \sigma$
       \tcp{using e.g., \hyperref[alg:II]{II}}

         \lFor*{$s \in C_k$}{
           ${\xbf(s)} \leftarrow \widehat{\xbf}(s)$
         }
        
      }
      \Return{$\xbf$} 
      %}
    \caption{Sound topological computation of EVTs  $\diff{abs}$.}\label{algorithm:topo_abs}
\end{algorithm}

The absolute $\epsilon$-sound topological computation of EVTs is described in \Cref{algorithm:topo_abs}. 
The procedure works analogously to the relative $\epsilon$-sound topological computation of EVTs described in \Cref{sec:topological_algorithms}. 
The difference is that the choice of $\sigma$ is based on the bounds from \Cref{theorem:topo_bound_abs}. 
As a result, the termination threshold $\sigma$ ensures that the \emph{absolute} difference between the result $\xbf$ and the EVTs $(\EXPstateMC{\vt_s}{}{\dtmc})_{s \in \statesTr}$ does not exceed $\epsilon$. 
\begin{theorem}%\label{topo_correctness_abs}
      For the input DTMC $\dtmc$, and any threshold $\epsilon>0$ \Cref{algorithm:topo_abs} terminates and returns a vector $\xbf^{res} \in \evtsdomain{\statesTr}$ with $\diff{abs}(\xbf^{res}, (\EXPstateMC{\vt_s}{}{\dtmc})_{s \in \statesTr}) \leq \epsilon$. 
\end{theorem}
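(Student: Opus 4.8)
The plan is to mirror the proof of \Cref{theorem:topo_correctness_rel} almost verbatim, substituting the absolute error bound of \Cref{theorem:topo_bound_abs} for the relative bound of \Cref{theorem:topo_bound_rel}. First I would dispatch termination: the outer loop of \Cref{algorithm:topo_abs} ranges over the finitely many non-bottom SCCs $C_1,\dots,C_n$, and each inner computation of $\widehat{\xbf}$ terminates because the chosen subprocedure (e.g.\ \Cref{alg:II}) terminates for every positive threshold by \Cref{theorem:II_correctness}. Since $\epsilon>0$, $q>0$, and the finite geometric sum $\sum_{i=1}^{L} T^i$ is finite, the threshold $\sigma = \epsilon / (1 + \tfrac{1}{q}\sum_{i=1}^{L} T^i)$ assigned in \Cref{alg:topo_abs_line:sigma} is strictly positive, so the subprocedure is called with an admissible precision and the whole algorithm halts.

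For soundness, the key is to verify that the returned vector $\xbf^{res}$ satisfies, for every non-bottom SCC $C$, the hypothesis of \Cref{theorem:topo_bound_abs} with $\sigma$ playing the role of $\epsilon$, namely $\diff{abs}\bigl((\xbf^{res}(s))_{s\in C}, (\EXPstateMC{\vt_s}{}{\dtmcRestr{C}[\xbf^{res}]})_{s\in C}\bigr) \le \sigma$. This holds because the algorithm overwrites the $C$-entries of $\xbf$ precisely with the $\sigma$-accurate approximation $\widehat{\xbf}$ of the EVTs in the parametric restriction $\dtmcRestr{C}[\xbf]$, and because the SCCs are processed in topological order ($C_i \hookrightarrow C_j$ implies $i<j$), the parameter entries belonging to predecessor SCCs $C' \hookrightarrow C$ have already reached their final values by the time $C$ is handled and are never modified afterwards. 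Granting this hypothesis, I would invoke \Cref{theorem:topo_bound_abs} directly to obtain
\begin{align*}
  \diff{abs}\left(\xbf^{res}, (\EXPstateMC{\vt_s}{}{\dtmc})_{s \in \statesTr}\right)
  \;\le\;
  \sigma \cdot \Bigl(1 + \tfrac{1}{q}\sum_{i=1}^{L} T^i\Bigr),
\end{align*}
and then substituting $\sigma = \epsilon / (1 + \tfrac{1}{q}\sum_{i=1}^{L} T^i)$ collapses the right-hand side to exactly $\epsilon$, which is the claimed guarantee.

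The main obstacle I anticipate is the bookkeeping that justifies the per-SCC hypothesis rigorously: one must argue that the value of $\xbf$ used as the parameter in $\dtmcRestr{C}[\xbf]$ when $C$ is processed coincides with the final $\xbf^{res}$ on the entries that influence $\dtmcRestr{C}[\xbf]$, i.e.\ on the strict predecessor SCCs. This is exactly where the write-once, topological-order invariant of the loop is needed, and it is the only nontrivial step beyond a mechanical application of \Cref{theorem:topo_bound_abs}. Secondary care is required to confirm that the quantities $q$, $T$, and $L$ computed by the algorithm genuinely satisfy the side conditions of \Cref{theorem:topo_bound_abs} --- in particular that $0 < q \le \min_{s \in \statesTr}\bigl(1 - \PRstateMC{s}{\dtmc}(\nextLTL \event \{s\})\bigr)$, that $T$ upper-bounds the number of incoming transitions of each SCC, and that $L$ is the length of the longest non-bottom SCC chain --- all of which hold by construction in \Cref{algorithm:topo_abs}.
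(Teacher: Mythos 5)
Your proposal matches the paper's own proof essentially verbatim: the paper likewise dispatches termination in one line, sets $\sigma = \epsilon / \bigl(1 + \tfrac{1}{q}\sum_{i=1}^{L} T^i\bigr)$, invokes \Cref{theorem:topo_bound_abs} to bound $\diff{abs}\bigl(\xbf^{res}, (\EXPstateMC{\vt_s}{}{\dtmc})_{s \in \statesTr}\bigr)$ by $\sigma \cdot \bigl(1 + \tfrac{1}{q}\sum_{i=1}^{L} T^i\bigr)$, and substitutes to obtain $\epsilon$. Your explicit verification of the lemma's per-SCC hypothesis via the write-once, topological-order invariant is a point the paper leaves implicit, so your write-up is, if anything, slightly more complete.
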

\begin{proof}
  After each non-bottom SCC has been considered, the algorithm terminates. 
  We proof $\epsilon$-soundness \wrt{} the absolute difference. 
  Consider $\sigma$ after the execution of \Cref{alg:topo_abs_line:sigma} in \Cref{algorithm:topo_abs}, i.e., 
  $\sigma  = \frac{\epsilon}{ 1+ \frac{1}{q} \cdot \sum_{i=1}^{L} T^i  }$ 
  where \begin{itemize}
    \item $q$ satisfies $0< q \leq \min_{s \in \statesTr} \left\{ (1-\PRstateMC{s}{\dtmc}(\nextLTL \event \{s\})) \right\}$   
    \item $L = \max_{\kappa \in \chainsTr{\dtmc}} \left\{ \abs{\kappa} \right\}$, and 
    \item $T = \max_{C \in \sccTr{\dtmc}} \left\{ \abs*{ \{(t',t)  \mid t' \in  S\setminus C , t \in C \text{ and }\Pbf(t',t) >0 \}} \right\}$. 
  \end{itemize}
  By the definition of $q$ and \Cref{theorem:topo_bound_abs}, we obtain that the returned vector $\xbf^{res} \in \statesTr$ satisfies 
  \begin{align}
    \diff{abs}\left(\xbf^{res}, (\EXPstateMC{\vt_s}{}{\dtmc})_{s \in \statesTr} \right) \leq \sigma \cdot \Bigl(1+  \cdot \sum_{i=1}^{L} T^i \cdot \frac{1}{q} \Bigr). 
  \end{align}
  Hence, we conclude 
  \begin{align*}
    \diff{abs}\left(\xbf^{res}, (\EXPstateMC{\vt_s}{}{\dtmc})_{s \in \statesTr} \right) & 
    \leq \sigma \cdot \Bigl(1+  \cdot \sum_{i=1}^{L} T^i \cdot \frac{1}{q} \Bigr) \\ 
    &= \frac{\epsilon}{ 1+ \sum_{i=1}^{L} T^i \cdot \frac{1}{q} } \cdot \Bigl(1+  \cdot \sum_{i=1}^{L} T^i \cdot \frac{1}{q} \Bigr) \\
    &= \epsilon. 
  \end{align*}
\end{proof}

\newpage
\section{Additional Experimental Results}\label{app:evaluation}

\begin{figure}[h!tb]
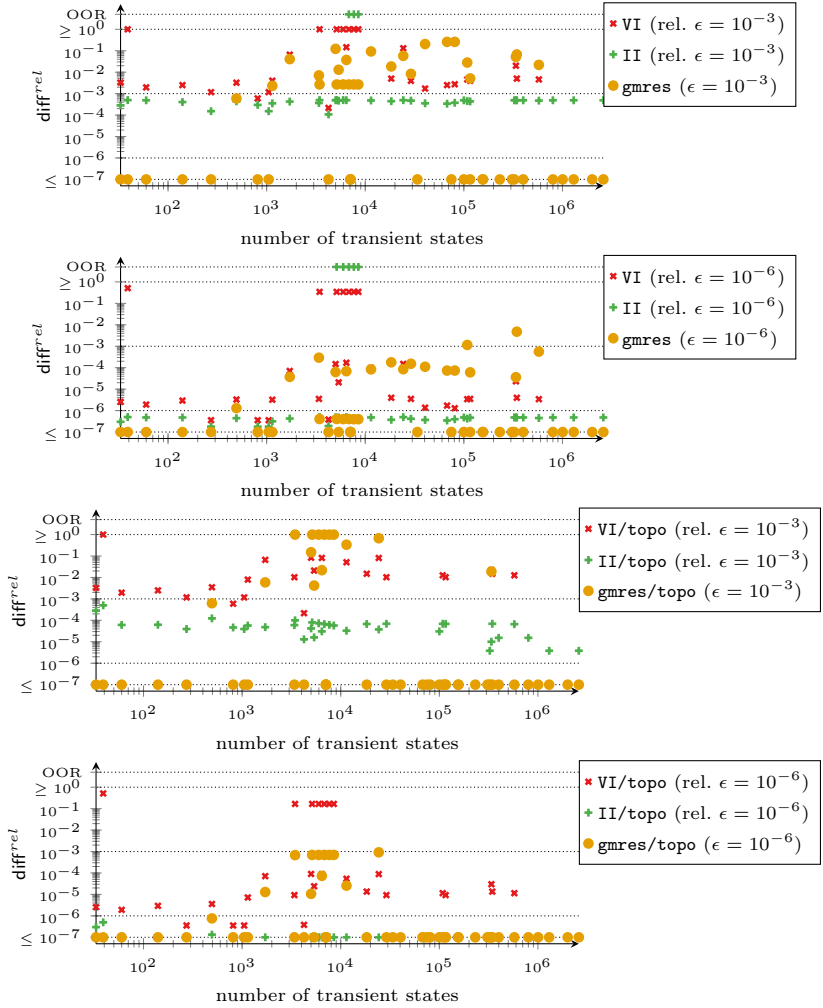

  \centering
	%\subcaptionbox{Termination threshold: $\epsilon = 10^{-3}$ \label{fig:precision1}}{
    \renewcommand{\plotwidth}{8cm}
    \renewcommand{\plotheight}{4cm}
    \renewcommand{\legendcols}{1}
    \renewcommand{\legendstyle}{at={(1,1.01)},anchor=north west}
		{\precisionplot{evts_csv/relative-error-max-norm-value-1-texpgf-scatter.csv"}%
		{storm.sparse-rel.n-power.0.001, storm.sound-rel.n-ii.0.001, storm.sparse-rel.g-gmres.0.001}%
		{{\method{VI} (rel. $\epsilon=10^{-3}$)},
    {\method{II} (rel. $\epsilon=10^{-3}$)},
    {\method{gmres} ($\epsilon=10^{-3}$)}}%
		{number of transient states}%
		{$\diff{rel}$}{transient-states}}%
	%\subcaptionbox{Termination threshold: $\epsilon = 10^{-6}$ \label{fig:precision1b}}{
    \renewcommand{\plotwidth}{8cm}  
    \renewcommand{\plotheight}{4cm} 
    \renewcommand{\legendstyle}{at={(1,1.01)},anchor=north west}
    \renewcommand{\legendcols}{1}
		{\precisionplot{evts_csv/relative-error-max-norm-value-1-texpgf-scatter.csv"}%
		{storm.sparse-rel.n-power.1e-06, storm.sound-rel.n-ii.1e-06, storm.sparse-rel.g-gmres.1e-06}%
		{{\method{VI} (rel. $\epsilon=10^{-6}$)},
    {\method{II} (rel. $\epsilon=10^{-6}$)},
    {\method{gmres} ($\epsilon=10^{-6}$)}}%
		{number of transient states}%
		{$\diff{rel}$}{transient-states}}%
	%}
	%\subcaptionbox{Termination threshold: $\epsilon = 10^{-6}$ \label{fig:precision1b}}{
    \renewcommand{\plotwidth}{8cm}  
    \renewcommand{\plotheight}{4cm} 
    \renewcommand{\legendstyle}{at={(1,1.01)},anchor=north west}
    \renewcommand{\legendcols}{1}
		{\precisionplot{evts_csv/relative-error-max-norm-value-1-texpgf-scatter.csv"}%
		{storm.sparse-rel.topological-n-power.0.001, 
    storm.sound-rel.topological-n-ii.0.001, 
    storm.sparse-rel.topological-g-gmres.0.001}%
		{{\method{VI/topo} (rel. $\epsilon=10^{-3}$)},
    {\method{II/topo} (rel. $\epsilon=10^{-3}$)},
    {\method{gmres/topo} ($\epsilon=10^{-3}$)}}%
		{number of transient states}%
		{$\diff{rel}$}{transient-states}}%
	%}
	%\subcaptionbox{Termination threshold: $\epsilon = 10^{-6}$ \label{fig:precision1b}}{
    \renewcommand{\plotwidth}{8cm}  
    \renewcommand{\plotheight}{4cm} 
    \renewcommand{\legendstyle}{at={(1,1.01)},anchor=north west}
    \renewcommand{\legendcols}{1}
		{\precisionplot{evts_csv/relative-error-max-norm-value-1-texpgf-scatter.csv"}%
		{storm.sparse-rel.topological-n-power.1e-06, 
    storm.sound-rel.topological-n-ii.1e-06, 
    storm.sparse-rel.topological-g-gmres.1e-06}%
		{{\method{VI/topo} (rel. $\epsilon=10^{-6}$)},
    {\method{II/topo} (rel. $\epsilon=10^{-6}$)},
    {\method{gmres/topo} ($\epsilon=10^{-6}$)}}%
		{number of transient states}%
		{$\diff{rel}$}{transient-states}}%
	%}
   \caption%
 {Four scatter plots for the comparison of the relative difference ($y$-axis) between the true EVTs and results obtained by the different methods for each instance ($x$-axis, represented by the number of transient states).   
 The methods include GMRES, VI and II using the relative termination criterion with a threshold of $\epsilon = 10^{-3}$ or $\epsilon = 10^{-6}$. 
 The 4 dotted lines indicate, from bottom to top, if the result either yields a relative difference that is less or equal to the threshold $10^{-7}$, exactly equal to a precision of $10^{-6}$ or $10^{-3}$, or greater or equal to $1$. 
 The topmost line indicates that no result was obtained, for example, due to timeouts.}
\end{figure}
\begin{figure}[h!tb]
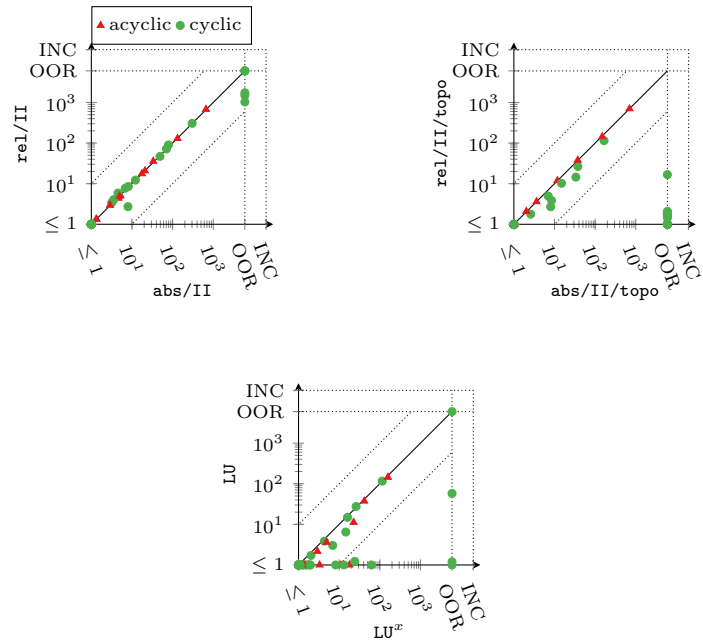

	\centering  
    \renewcommand{\scatterplotsize}{4cm}
    \renewcommand{\legendstyle}{at={(0.0,1.2)},anchor=north west}
    \renewcommand{\legendcols}{2}
    % dont use -INC beacsue of sound-abs vs sound-rel comparison
    % II and LU are always correct (see precision plot)
    % wallclock time file has to be generated if needed 
    \scatterplotstormcyclic{evts_csv/mc-time-1800-texpgf-scatter.csv}
      {storm.sound-abs.n-ii.0.001}
      {{\method{abs/II}}}
      {storm.sound-rel.n-ii.0.001}
      {{\method{rel/II}}}{true}
    \scatterplotstormcyclic{evts_csv/mc-time-1800-texpgf-scatter.csv}
    {storm.sound-abs.topological-n-ii.0.001}
    {{\method{abs/II/topo}}}
    {storm.sound-rel.topological-n-ii.0.001}
    {{\method{rel/II/topo}}}
    {false}
    \scatterplotstormcyclic{evts_csv/mc-time-1800-texpgf-scatter.csv}
      {storm.exact.topological-e-sparselu.ignored}
      {{\method{LU$^x$}}}
      {storm.sparse.topological-e-sparselu.ignored}
      {{\method{LU}}}
      {false}
	\caption{Scatter plots depicting the runtime for EVTs computation (in seconds).
  The first two plots compare the (topological) \method{II} approach using the absolute stopping criterion and 
  \method{II} using the realtive stopping criterion. % for $\epsilon=10^{-3}$. 
  The rightmost plot compares \method{LU} factorization using floating point arithmetic and exact arithmetic. 
  }
\end{figure}

\begin{figure}[h!tb]
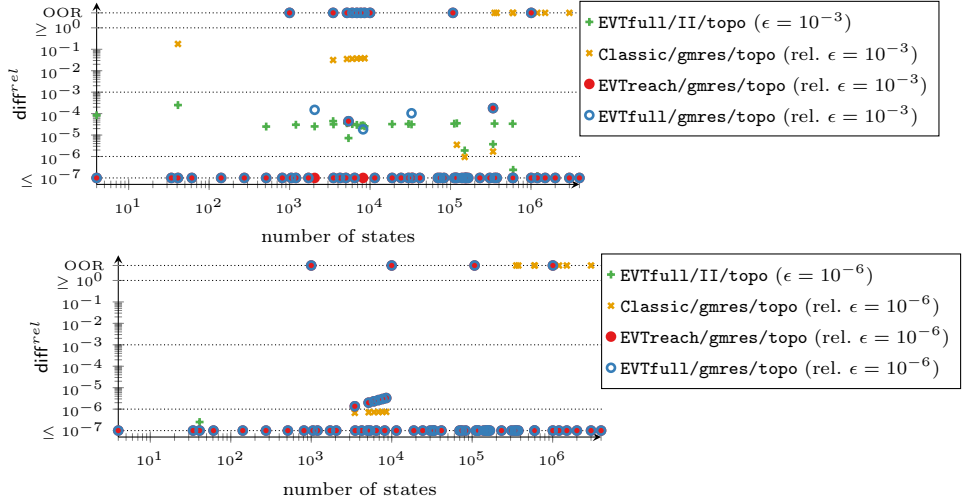

  \centering
  \vspace{0.4cm}
	  \renewcommand{\plotwidth}{8cm}
	  \renewcommand{\plotheight}{4cm}
    \renewcommand{\legendstyle}{at={(1,1.01)},anchor=north west}
    \renewcommand{\legendcols}{1}
		{\precisionplot{stationary_csv/relative-error-max-norm-value-1-texpgf-scatter.csv"}%
		{storm.sound-rel.topological-evt-n-ii.0.001,
    storm.sparse-rel.topological-classic-g-gmres.0.001, 
    storm.sparse-rel.topological-eqsys-g-gmres.0.001, 
    storm.sparse-rel.topological-evt-g-gmres.0.001}%
		{
    {\method{EVTfull/II/topo} ($\epsilon=10^{-3}$)},
    {\method{Classic/gmres/topo} (rel. $\epsilon=10^{-3}$)},
    {\method{EVTreach/gmres/topo} (rel. $\epsilon=10^{-3}$)}, 
    {\method{EVTfull/gmres/topo} (rel. $\epsilon=10^{-3}$)}}%
    {number of states}%
		{$\diff{rel}$}{states}
    }%
  \vspace{0.4cm}
    \renewcommand{\plotwidth}{8cm}
    \renewcommand{\plotheight}{4cm}
    \renewcommand{\legendstyle}{at={(1,1.01)},anchor=north west}
    \renewcommand{\legendcols}{1}
	  {\precisionplot{stationary_csv/relative-error-max-norm-value-1-texpgf-scatter.csv"}%
		{storm.sound-rel.topological-evt-n-ii.1e-06, 
    storm.sparse-rel.topological-classic-g-gmres.1e-06, 
    storm.sparse-rel.topological-eqsys-g-gmres.1e-06, 
    storm.sparse-rel.topological-evt-g-gmres.1e-06
    }%
    {
    {\method{EVTfull/II/topo} ($\epsilon=10^{-6}$)},
    {\method{Classic/gmres/topo} (rel. $\epsilon=10^{-6}$)},
    {\method{EVTreach/gmres/topo} (rel. $\epsilon=10^{-6}$)}, 
    {\method{EVTfull/gmres/topo} (rel. $\epsilon=10^{-6}$)}}%
		{number of states}%
		{$\diff{rel}$}{states}}%
	%}
	\vspace{-0.2cm}
   \caption%
 {Scatter plots for the comparison of the relative difference ($y$-axis) between the true stationary distribution and results obtained by the different methods for each instance ($x$-axis, represented by the number of transient states). 
 LU factorization, GMRES and II use the relative termination criterion with a threshold of $\epsilon = 10^{-3}$ or $\epsilon = 10^{-6}$.} 
 \label{plot:stst_accuracy}
\end{figure}

\begin{figure}[h!tb]
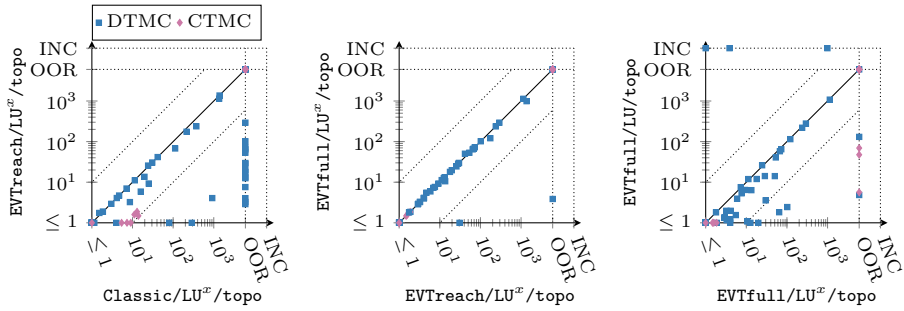

	\centering 
    \renewcommand{\scatterplotsize}{4cm}
    \renewcommand{\legendstyle}{at={(0.0,1.2)},anchor=north west}
    \renewcommand{\legendcols}{2}
    % dont use -INC beacsue of sound-abs vs sound-rel comparison
    % II and LU are always correct (see precision plot)
    % wallclock time file has to be generated if needed 
    \scatterplotstorm{stationary_csv//wallclock-time-1800-INC-relative-error-max-norm-value-texpgf-scatter.csv"}
      {storm.exact.topological-classic-sparselu.ignored}
      {\method{Classic/LU$^x$/topo}}
      {storm.exact.topological-eqsys-e-sparselu.ignored}
      {\method{EVTreach/LU$^x$/topo}}
      {true}%
    \scatterplotstorm{stationary_csv/wallclock-time-1800-INC-relative-error-max-norm-value-texpgf-scatter.csv"}
    {storm.exact.topological-eqsys-e-sparselu.ignored}
    {\method{EVTreach/LU$^x$/topo}}
    {storm.exact.topological-evt-e-sparselu.ignored}
    {\method{EVTfull/LU$^x$/topo}}
    {false}%
    \scatterplotstorm{stationary_csv/wallclock-time-1800-INC-relative-error-max-norm-value-texpgf-scatter.csv"}
      {storm.exact.topological-evt-e-sparselu.ignored}
      {\method{EVTfull/LU$^x$/topo}}
      {storm.sparse.topological-evt-e-sparselu.ignored}
      {\method{EVTfull/LU/topo}}
      {false}%
	\caption{Scatter plots depicting the runtime for stationary distribution computation (in seconds).
  The first two plots compare the \method{Classic}, \method{EVTreach} and the \method{EVTfull} approach using \method{LU$^x$}. 
  \method{II} using the realtive stopping criterion. 
  The rightmost plot compares \method{LU} factorization using floating point arithmetic and exact arithmetic. 
  }
\end{figure}

\end{document}